\newcommand*\xbar[1]{%
  \hbox{%
    \vbox{%
      \hrule height 0.5pt 
      \kern0.3ex
      \hbox{%
        \kern-0.05em
        \ensuremath{#1}%
        \kern-0.05em
      }%
    }%
  }%
}
\newcommand{\argmin}{\mathop{\rm arg~min}\limits}
\newtheorem{rmk}{Remark}
\newtheorem{lem}{Lemma}
\newtheorem{defin}{Definition}
\newtheorem{Prop}{Proposition}
\def\thline{\noalign{\hrule height 1pt}}
\def\thline{\noalign{\hrule height 1pt}}
\def\l({\left(}
\def\r){\right)}
\begin{document}
%
\title{Directional Analytic Discrete Cosine Frames}

\author{Seisuke Kyochi~\IEEEmembership{Member,~IEEE},
        Taizo Suzuki, and
        Yuichi Tanaka,~\IEEEmembership{Senior Member,~IEEE}
\thanks{S. Kyochi is with the Department of Computer Science, Kogakuin University, Tokyo 163-8677, Japan (e-mail: kyochi@cc.kogakuin.ac.jp).}
\thanks{T. Suzuki is with the Faculty of Engineering, Information and Systems, University of Tsukuba, Tsukuba 305-8573 Japan (e-mail:taizo@cs.tsukuba.ac.jp).}
\thanks{Y.~Tanaka is with the Department of Computer and Information Sciences, Tokyo University of Agriculture and Technology, Koganei, Tokyo, 184-8588  Japan (e-mail: ytnk@cc.tuat.ac.jp).}
}

\markboth{}%
{Kyochi: Directional Discrete Cosine Frames}
\maketitle

\begin{abstract}
Block frames called \textit{directional analytic discrete cosine frames} (DADCFs) are proposed for sparse image representation. 
In contrast to conventional overlapped frames, the proposed DADCFs require a reduced amount of 1) computational complexity, 2) memory usage, and 3) global memory access. These characteristics are strongly required for current high-resolution image processing. Specifically, we propose two DADCFs based on discrete cosine transform (DCT) and discrete sine transform (DST). The first DADCF is constructed from parallel separable transforms of DCT and DST, where the DST is permuted by row. The second DADCF is also designed based on DCT and DST, while the DST is customized to have no DC leakage property which is a desirable property for image processing. Both DADCFs have rich directional selectivity with slightly different characteristics each other and they can be implemented as non-overlapping block-based transforms, especially they form Parseval block frames with low transform redundancy. We perform experiments to evaluate our DADCFs and compare them with conventional directional block transforms in image recovery.
\end{abstract}

\begin{IEEEkeywords}
Block frame, discrete cosine transform, directional selectivity, sparse image representation
\end{IEEEkeywords}

\IEEEpeerreviewmaketitle

\section{Introduction}
\label{sec:intro}
\IEEEPARstart{S}{parse} representation by \textit{frames} has been an essential technique for image analysis and processing \cite{Elad2010, Starck2010}. Various kinds of signal recovery tasks, e.g., {denoising}, {deblurring}, and {restoration from compressive samples}, can also be realized by incorporating sparse image representation in convex and non-convex optimization algorithms\cite{Combettes2008,Afonso2011}. 

Significant efforts have been made to construct efficient frames for sparse image representation. Of particular focus has been directional frames, such as {curvelet} \cite{Starck2002}, {contourlet} \cite{Do2005}, directional filter banks (FBs) \cite{Easley2008,Tanaka2009,Lim2013,Zhiyu2014,Shi2014}, and dual-tree complex wavelet transforms (DTCWTs) \cite{Selesnick2005, Lili2013, KYOCHI2010, Kyochi2009, Ishibashi2019} for 2D signals. Directional atoms\footnote{In this paper, atom is referred to as an element $\mathbf{d}_n$ $(n=0,\ldots ,N-1)$ of a frame $\mathbf{D} = \begin{bmatrix}
\mathbf{d}_0 & \cdots & \mathbf{d}_{N-1}
\end{bmatrix} \in \mathbb{R}^{M\times N}$ \cite{Mallat2008}.} are crucial for sparse image representation since images usually contain edges and textures lying along oblique directions. Extended versions of these frames for higher-dimensional signals, such as videos, have also been proposed \cite{Lu2007, Yang2009, Nguyen2010, Held2010}. In addition to directional frames, more general systems such as {dictionary} \cite{Aharon2006, Rubinstein2013} and graph WTs/FBs \cite{Tanaka2014} that capture highly complex structures and non-local similarity have been proposed. Those transforms can provide sparse image representation for various fine components.

Although these existing frames and dictionaries have been successfully applied to image processing, problems and limitations have been recognized in practical situations. First, computational complexity for calculating sparse coefficients is typically high due to the complicated algorithms involved, such as 2D filtering \cite{Do2005}, sparse coding with various iterative schemes \cite{Aharon2006, Rubinstein2013}, and eigenvalue decomposition of a large-scale graph Laplacian \cite{Tanaka2014}. Second, they typically require high transform redundancy which leads to a large amount of memory usage to store the coefficients. Third, since the supports of their atoms in those frames are overlapped with each other, they require global memory access, which disrupts parallel computation. Although recent digital devices have been increasing their computational power, the resolution of captured images has also been increasing and sometimes multiple images will also be taken at once for producing visually pleasant images like those having low-noise and/or high-dynamic range. Hence, the computational cost for image processing has to be kept as small as possible for avoiding installing extra hard/software modules in such devices.

Block-based bases and frames, whose supports are identical or disjoint, are thus highly desired due to their computational efficiency. They are still a key for many image processing applications like video coding. In addition, patch-based techniques based on block-based transforms, such as BM3D and redundant (type-II) discrete cosine transform (DCT) \cite{Danielyan2012, Fujita2015, Wang2017}, show their effectiveness in image recovery.
Nevertheless, directional block frames have received less attention compared with overlapped frames and, unfortunately, they have been believed that they cannot provide rich directional selectivity\footnote{In this paper, ``directional selectivity'' is measured by the number of distinguishable directional subbands for an $M$-channel 2D transform.}. However, we can realize such block-based transforms by carefully choosing their building blocks. 

In this paper, we focus on directional block frames and propose directional analytic discrete cosine frames (DADCFs) based on DCT \cite{Rao1990} and (type-II) discrete sine transform (DST) \cite{Martucci1994}. They have the following advantages against alternative directional block frames:
\begin{itemize}
    \item Directional selectivity of the DADCFs is much richer than that of existing directional block frames.
    \item DADCFs can be designed by appending the DST (or a DST-like transform) and simple extra operations to the DCT, and thus are compatible with the DCT.
\end{itemize}
We introduce two types of DADCF, both forming Parseval frames. The first DADCF contains the DCT and a row-wise permuted version of the DST, and the second DADCF contains the DCT and a DST without DC leakage. The second one is called regularity-constrained DADCF (RDADCF). In order to realize RDADCF, we propose the DST without DC leakage, regularity-constrained DST (RDST), for the first time. The DADCF and the RDADCF have different advantages. The DADCF provides richer directional selectivity while the DC energy will be distributed over several subbands.
As it will be described later, the DC leakage can be avoided by integrating the DADCF with Laplacian pyramid at the expense of redundancy. In contrast, the RDADCF can structurally avoid the DC leakage problem by incorporating the proposed RDST as its building block instead of the row-wise permuted DST. We numerically compare two DADCFs with some existing approaches in compressive sensing reconstruction.

The rest of this paper is organized as follows. Section \ref{sec:relatedworks} summarizes related works. Section \ref{sec:prelim} reviews the conventional directional block bases and the analyticity for images. Section \ref{sec:DirDCT} explains the definition and a customization for preventing DC leakage of the DADCF. Section \ref{sec:RDDCF} introduces the RDADCF. Section \ref{sec:experiments} evaluates the proposed DADCFs in compressive sensing reconstruction. Section \ref{sec:conclusion} concludes with a brief summary.    
\subsection{Notations} Bold-faced lower-case letters and upper-case letters are vectors and matrices, respectively. The subscripts $h$ and $v$ are used to indicate variables corresponding to horizontal and vertical directions, respectively. The other mathematical notations are summarized in Table \ref{tab:Notations}.
\section{Related Works}\label{sec:relatedworks}
Directional block bases and frames can be classified into two categories. One is the fixed class, i.e., transforms equipped with directionally oriented bases. This class of transforms includes discrete Fourier transform (DFT) \cite{Cooley1965}, discrete Hartley transform (DHT) \cite{Bracewell1984}, and real-valued conjugate-symmetric Hadamard transform \cite{Kyochi2014}. The other is the adaptive class, i.e., the application of a non-directional block transform (e.g., the DCT) along suitable oblique directions provided by preprocessing (e.g., edge analysis) for each block \cite{Xu2007, Zeng2008}. Applications of the latter class are relatively limited because transform directions have to be determined from an input signal in advance. For example, in signal recovery, degraded observations make it difficult to find suitable directions. Our directional block frames correspond to the fixed class.

The main problem with DFT and its variants is that they contain duplicated atoms along the same direction in their basis and hence cannot provide rich directional selectivity (i.e., the number of directional orientations in a basis or a frame). This degrades the efficiency of signal analysis and processing. In order to achieve richer directional selectivity, in this paper, we extend the DCT to the DADCF. Definitely, the DCT is one of the most effective block transforms for image processing tasks and is already integrated into many digital devices. For example, video coding standards, e.g. HEVC \cite{Sullivan2012} and VVC \cite{Bross2021}, employ the various sizes of the (integer) DCT. However, since it does not contain obliquely oriented atoms in its basis, it cannot achieve rich directional selectivity.

In this paper, we reveal that by appending some extra modules, i.e., DST and scaling/addition (and subtraction)/permutation (SAP) operations, to the DCT, the resulting transform provides directionally oriented atoms and thus leads to rich directional selectivity. Furthermore, since the DST can be designed by the (row-wise) flipped and sign-altered version of the DCT, the implementation cost for the proposed transform can be kept low, i.e., the total procedure can be fully implemented by using the DCT and a few SAP operations.

A preliminary version of this work was presented in \cite{Kyochi2016}, which provides a basic structure of the DADCF. In this paper, we newly introduce theory and design algorithm of the RDADCF, and comprehensive experiments. 
\begin{table}[t]
\caption{Notations}
\vspace{-0.2cm}
\label{tab:Notations}
\begin{center}
\scalebox{0.8}{
\begin{tabular}{c|c}
\thline
Notation & Terminology\\ \thline
$\mathbb{R}$ & Real numbers\\\hline
$j$ & $\sqrt{-1}$\\\hline
$\Omega_{N}$, $\Omega_{N_1,N_2}$ & $\{0,\ldots,N\}$, $\{N_1,\ldots,N_2\}$\\\hline
$\mathbb{R}^N$& $N$-dimensional real-valued vector space\\\hline
$\mathbb{R}^{N_v\times N_h}$ & $N_v \times N_h$ real-valued matrices\\\hline
$\mathbf{I}$, $\mathbf{O}$& Identity matrix, zero matrix\\\hline
$\mathbf{A}_{N}$& $N \times N$ square matrix\\\hline
$\mathbf{A}^{\top}$& Transpose of $\mathbf{A}$\\\hline
$x_n$ and $[\mathbf{x}]_n$& $n$-th element of a vector $\mathbf{x}$\\\hline
$X_{m,n}$ and $[\mathbf{X}]_{m,n}$& $(m,n)$-th element of a matrix $\mathbf{X}$\\\hline
$\mathbf{X}^{(m,n)} \in \mathbb{R}^{M \times M}$ &$(m,n)$-th $M \times M$ subblock of $\mathbf{X} \in \mathbb{R}^{ML_v \times ML_h}$\\\hline
$\mathrm{vec}(\mathbf{X}) \in \mathbb{R}^{N_v N_h}$ & \begin{tabular}{c} Vectorization of $\mathbf{X}\in \mathbb{R}^{N_v \times N_h}$\\ $x_{N_v n + m} = X_{m,n}$\end{tabular} \\\hline
$\mathrm{bvec}(\mathbf{X})$ & \begin{tabular}{c}
Block-wise vectorization of $\mathbf{X}\in \mathbb{R}^{ML_v \times ML_h}$\\$[ \mathrm{vec}(\mathbf{X}^{(0,0)})^{\top}\  \ldots\ \mathrm{vec}(\mathbf{X}^{(L_v-1,L_h-1)})^{\top} ]^{\top}$
\end{tabular}
\\ \hline
$\otimes$& Kronecker product \\\hline
$\mathcal{N}(\mathbf{A})$ & Null space of a matrix: $\mathbf{A}$\\\hline
 \begin{tabular}{c}$\mathrm{diag}(a_0, \ldots , a_{N-1})$ \\ 
 $\mathrm{diag}( \mathbf{A}^{(\mathrm{0})}, \ldots , \mathbf{A}^{({N-1})})$
 \end{tabular}& Diagonal/block-diagonal matrices.\\\hline
 $H(z)$ & $z$-transform: $\sum_{n}h(n)z^{-n}$\\\hline
 $H(\omega)$, $ H(e^{j\omega})$, $\mathcal{F}[h] $ & Discrete-time Fourier transform: $\sum_{n}h(n)e^{-j\omega n}$\\\hline
 $H_k(z)$ ($k = 0, \ldots, M-1$) &  $M$-channel subband filters\\ \hline
 $H({\bm \omega})$& $H(e^{j\omega_v},e^{j\omega_h})$\\\hline
 $H_{k_v,k_h}({\bm \omega})$ & 
 \begin{tabular}{c}
$H_{k_v}(\omega_v)H_{k_h}(\omega_h)$\\ 
\end{tabular}\\\hline
$\|\mathbf{x}\|_{p}$ ($\mathbf{x}\in \mathbb{R}^N$, $p \in [1 , \infty )$)  & \begin{tabular}{c} $\ell_p$-norm, $\|\mathbf{x}\|_{p}=\left(\sum^{N-1}_{n=0} |x_n|^p\right)^{\tfrac{1}{p}}$
 \end{tabular}\\\hline
  \begin{tabular}{c} $\|\mathbf{x}\|_{1,2}$ ($\mathbf{x}\in \mathbb{R}^{KN}$) \end{tabular} &  \begin{tabular}{c} $\ell_{1,2}$-mixed norm, \\$\|\mathbf{x}\|_{1,2} = \sum^{K-1}_{k=0} \left(\sum^{N-1}_{n=0}|{x}_{Nk+n}|^2\right)^{\tfrac{1}{2}}$ \end{tabular}
\\
\hline
\end{tabular}
}
\end{center}
\vspace{-0.4cm}
\end{table}

\section{Preliminaries}
\label{sec:prelim}
\subsection{Conventional Block Bases}\label{subsec:conv}
The DCT \cite{Rao1990} is one of the most popular time-frequency transforms. Its transform matrix $\mathbf{F}^{(\mathrm{C})} \in \mathbb{R}^{M\times M}$ ($M=2^m$, $m \geq 1$)\footnote{For simplicity, we restrict the sizes of all the block transforms to $M=2^m$ throughout this paper, but it is easily extended to the general $M$.} is defined as
\begin{align} \label{def:dct}
[ \mathbf{F}^{(\mathrm{C})} ]_{k,n}=&\ 
\alpha_k \sqrt{\frac{2}{M}}\cos ( \theta_{k,n} )
\end{align}
where $k$ and $n$ are subband and time indices ($k,\ n\in \Omega_{M-1}$), $\theta_{k,n} = \frac{\pi}{M}k\left(n+\frac{1}{2}\right)
$, $\alpha_k = \tfrac{1}{\sqrt{2}}$ for $k=0$ and $\alpha_k = 1$ for otherwise. For $\mathbf{x} = \mathrm{vec}(\mathbf{X})$ ($\mathbf{X} \in \mathbb{R}^{M \times M}$), the 2D DCT is given by $\mathbf{F}^{(\mathrm{C})} \otimes \mathbf{F}^{(\mathrm{C})}\ \in \mathbb{R}^{M^2\times M^2}$. Since the DCT is the approximation of the Karhunen-Lo\`{e}ve transform for a first-order Markov process with a correlation coeffcient $\rho$ when $\rho \rightarrow 1$, the 2D DCT coefficients of natural images tend to be sparse (i.e., its $\ell_1$ norm $\| (\mathbf{F}^{(\mathrm{C})} \otimes \mathbf{F}^{(\mathrm{C})}) \mathbf{x}\|_1$ is small). Thus, the DCT is widely applied to many applications, especially for source coding. However, it is a separable transform and hence it lacks directional selectivity. Formally, its 2D atom $\mathbf{B}^{(k_v,k_h)} \in \mathbb{R}^{M\times M}$ in the DCT basis forms
\begin{align}
\label{eq:2DDCF}
{B}^{(k_v,k_h)}_{n_v,n_h} = \alpha_{k_v} \alpha_{k_h} \frac{2}{M} 
\cos \left(\theta_{k_v,n_v}\right)\cos \left(\theta_{k_h,n_h}\right),
\end{align}
where $k_d$ and $n_d$ ($d \in \{h, v\}$) denote subband and spatial indices, respectively ($k_d,n_d\in \Omega_{M-1}$). Fig. \ref{Dirbasis1}(a) shows an example of the 2D DCT atoms\footnote{In Figs. \ref{Dirbasis1}, \ref{eq:2DCSMFB}, and \ref{fig:2DRDDCF1}, each atom is enlarged for visualization.}. Clearly, they ``mix'' two diagonal components along $45^\circ$ and $-45^\circ$ which reduce directional selectivity.

The 2D DFT can be regarded as a block transform with directional selectivity because it is a complex-valued transform. Its 2D atoms $\mathbf{B}^{(k_v,k_h)} \in \mathbb{R}^{M\times M}$ are represented as
\begin{align}
\label{eq:2DDFT}
{B}^{(k_v,k_h)}_{n_v,n_h} 
= \frac{1}{M} e^{j\left(\varphi_{k_v,n_v}+\varphi_{k_h,n_h}\right)},
\end{align}
where $\varphi_{k,n} = \frac{2\pi}{M}kn $. As shown in Fig. \ref{Dirbasis1}(b), the DFT bases can decompose diagonal components into different subbands. 

There are some real-valued variants of the DFT \cite{Bracewell1984,Kyochi2014} that provide directionally oriented atoms. For example, the DHT \cite{Bracewell1984} can form a directionally oriented basis by modifying some of the original DFT atoms ${B}^{(k_v,k_h)}_{n_v,n_h} =\frac{1}{M}\mathrm{cas}(\varphi_{k_v,n_v})\mathrm{cas}(\varphi_{k_h,n_h})$, $(\mathrm{cas}(\varphi_{k,n}) = \cos(\varphi_{k,n}) + \sin(\varphi_{k,n}))$ to ${B}^{(k_v,k_h,\pm 1)}_{n_v,n_h}$ as
\begin{align}
\label{eq:2DDHT}
{B}^{(k_v,k_h,\pm 1)}_{n_v,n_h} =& \frac{1}{2}\left({B}^{(k_v,k_h)}_{n_v,n_h} \pm {B}^{(M-k_v,M-k_h)}_{n_v,n_h}\right) \nonumber\\
=&\  \frac{1}{M}
\begin{cases}
\cos(\varphi_{k_v,n_v} - \varphi_{k_h,n_h}) \\
\sin(\varphi_{k_v,n_v} + \varphi_{k_h,n_h}) 
\end{cases},
\end{align}
where we assume $M\geq 4$ and $k_v,\ k_h \neq 0, M/2$. 

One problem shared by these conventional directional block transforms is that they contain multiple atoms along the same direction in their basis. For the $M\times M$ DFT and DHT, the number of distinguishable directions is $2\left(\tfrac{M-2}{2}\right)^2$ compared to the number of atoms $M^2$. They cannot provide rich directional selectivity, as shown in Fig. \ref{Dirbasis1}(b). 
\begin{figure}[t]
\centering
\begin{subfigure}[b]{0.5\linewidth}
\centering
\scalebox{0.4}{\includegraphics[keepaspectratio=true]{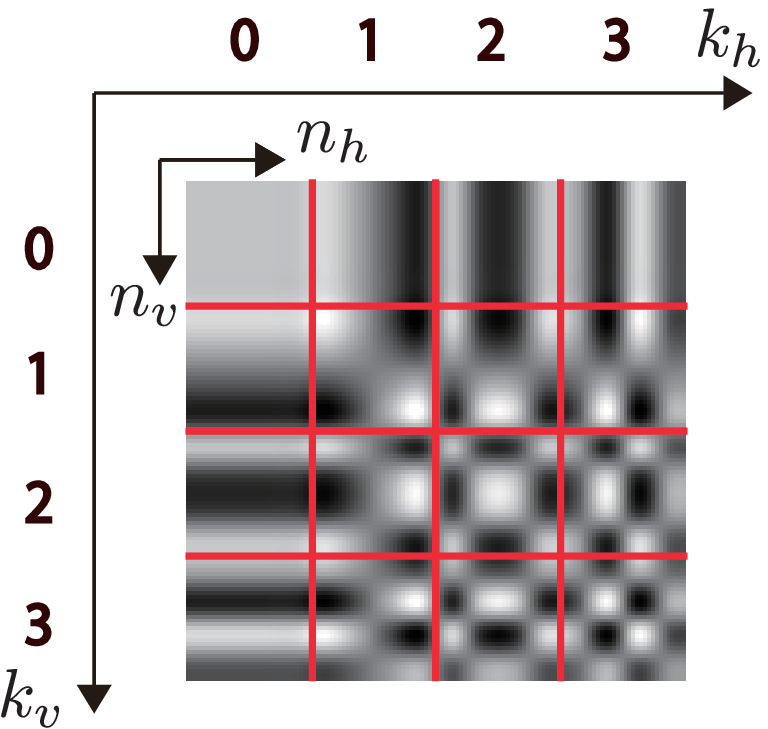}}
\caption{DCT}
\end{subfigure}%
\begin{subfigure}[b]{0.5\linewidth}
\centering
\scalebox{0.4}{\includegraphics[keepaspectratio=true]{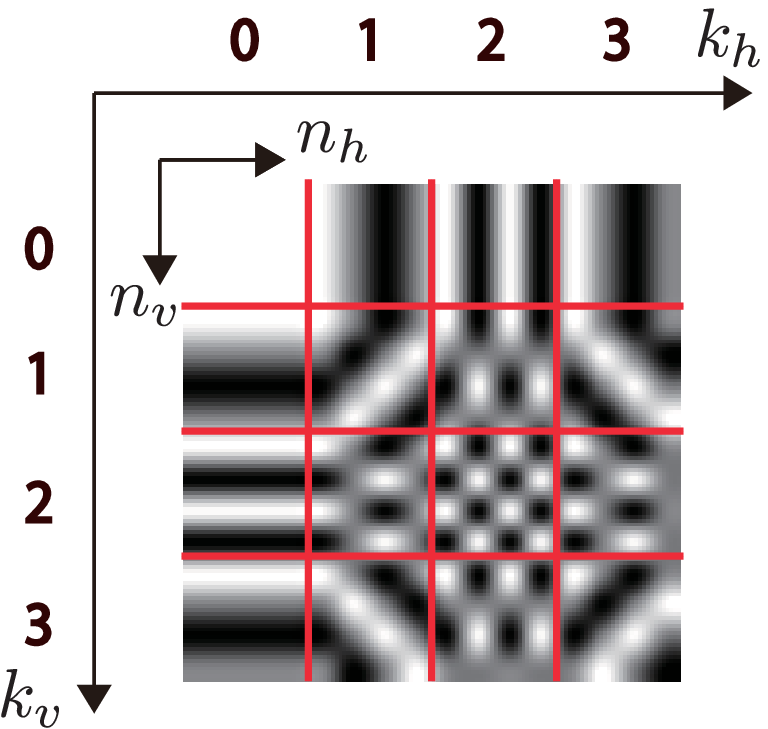}}
\caption{DFT (real part)}
\end{subfigure}%
\caption{Atoms ${B}_{n_v,n_h}^{(k_v,k_h)}$ in basis ($M=4$).}
\vspace{-0.3cm}
\label{Dirbasis1}
\end{figure}

\subsection{Analyticity for Directional Selectivity}\label{sec:analMD}
As explained in Section \ref{subsec:conv}, the 2D DCT cannot provide a directional image representation. We explain this phenomenon in the 2D frequency domain. Let $H_k(\omega)$ be a frequency spectrum of the $k$-th row of the DCT, i.e., $H_k(\omega) = \mathcal{F}[[\mathbf{F}_C]_{k,\cdot}]$. Since $H_k(\omega)$ is the frequency response of a real-valued filter, its spectrum symmetrically distributes in both positive and negative $\omega$ (Fig. \ref{fig:anal}(a)). Thus, the frequency spectrum of the 2D separable DCT $H_{k_v,k_h}({\bm \omega})$ always has nonzero frequency responses in four quadrants, as in Fig. \ref{fig:anal}(c), and it mixes $\pm 45^\circ$ frequency spectra for example.

In contrast, any spectrum of the DFT $U_k(\omega) = \frac{1}{\sqrt{M}}\mathcal{F}[e^{-j\varphi_{k,\cdot}}]$ (complex-valued filter) has a frequency response in only positive (or negative) $\omega$, as in Fig. \ref{fig:anal}(b). This property is referred to as \textit{analyticity} \cite{Selesnick2005}, i.e., $|U_k(\omega)| \approx 0$ for $\omega < 0$ (or $\omega > 0$). Thus, frequency spectra of the 2D separable DFT $U_{k_v,k_h}({\bm \omega})$ are localized in one quadrant (Fig. \ref{fig:anal}(d)), which indicates the directional subband.

Conventional separable directional WTs/FBs utilize analyticity. For example, DTCWTs consist of two $M$-channel filter banks $\{H_k(\omega)\}_{k=0}^{M-1}$ and $\{G_k(\omega)\}_{k=0}^{M-1}$, where those complex combination satisfies analyticity as follows:
\begin{align}
H_{k}({ \omega})=&\  \frac{1}{2}\left(U_{k}({\omega}) + \overline{U_{k}({\omega})}\right) ,\nonumber\\
G_{k}({ \omega})=&\  \frac{1}{2j}\left( U_{k}({\omega}) - \overline{U_{k}({\omega})}\right), \nonumber\\
U_{k}({\omega})=&\   H_{k}({ \omega})+jG_{k}({ \omega}), \ |U_{k}({\omega})| \approx 0 \ (\omega < 0 )
\end{align}
Here, we assume that the frequency spectrum $U_{k}({\omega})$ distributes in the positive frequency domain (Fig. \ref{fig:anal}(b)). Then, by using the 2D frequency spectra of the complex-valued filters $U_{k_v,k_h}({\bm \omega})=U_{k_v}(\omega_v)U_{k_h}(\omega_h)$ and $U_{k_v,\overline{k_h}}({\bm \omega}):= U_{k_v}({\omega_v})\overline{U_{k_h}({\omega}_h)}$,
the 2D directional frequency spectrum of the real-valued filter can be designed as follows:
\begin{align}\label{eq:posi}
&\ \frac{1}{2}\left(U_{k_v,k_h}({\bm \omega}) + \overline{U_{k_v,k_h}({\bm \omega})}\right) \nonumber\\&=\ H_{k_v}({\omega}_v)H_{k_h}({\omega}_h) - G_{k_v}({\omega}_v)G_{k_h}({\omega}_h),\nonumber\\
&\ \frac{1}{2}\left(U_{k_v,\overline{k_h}}({\bm \omega})+ \overline{U_{k_v,\overline{k_h}}({\bm \omega})}\right)\nonumber\\
&=\ H_{k_v}({\omega}_v)H_{k_h}({\omega}_h) + G_{k_v}({\omega}_v)G_{k_h}({\omega}_h).
\end{align}
Considering \eqref{eq:posi}, a directional frequency decomposition can be realized by two 2D separable FBs followed by addition/subtraction, as in Fig. \ref{2DPlaneDTCWT}. $M$-channel DTCWTs can distinguish $2M^2$ directional subbands.

\begin{figure}[t] 
\centering
\begin{subfigure}[b]{0.5\linewidth}
\centering
\scalebox{0.5}{\includegraphics[keepaspectratio=true]{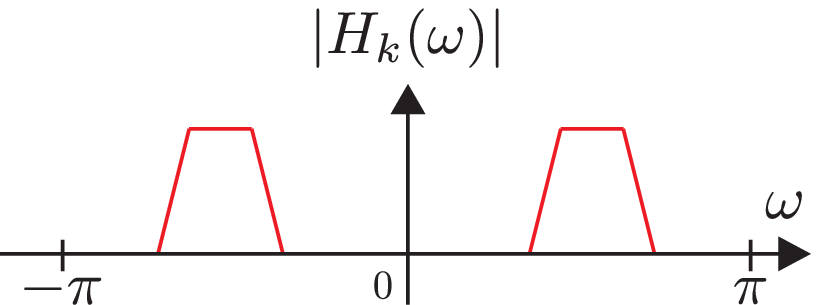}}
\caption{Real-valued filter}
\end{subfigure}%
\begin{subfigure}[b]{0.5\linewidth}
\centering
\scalebox{0.5}{\includegraphics[keepaspectratio=true]{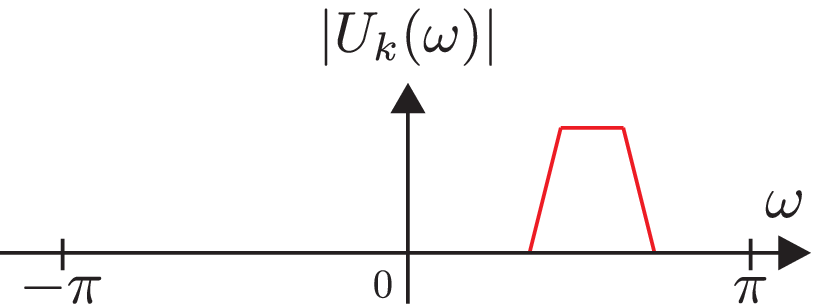}}
\caption{Complex-valued filter}
\end{subfigure}%
\\
\begin{subfigure}[b]{0.3\linewidth}
\centering
\scalebox{0.25}{\includegraphics[keepaspectratio=true]{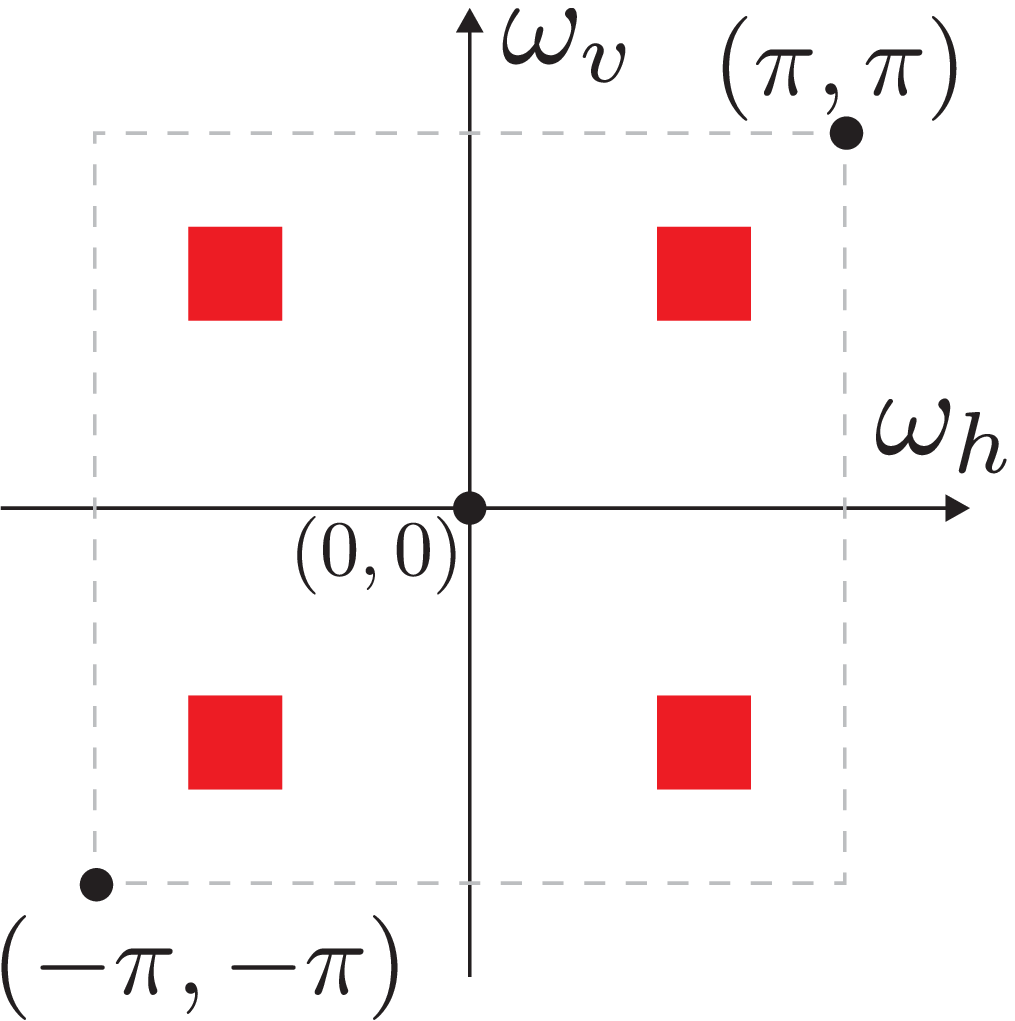}}
\caption{$H_{k_v,k_h}({\bm \omega})$}
\end{subfigure}%
\begin{subfigure}[b]{0.3\linewidth}
\centering
\scalebox{0.25}{\includegraphics[keepaspectratio=true]{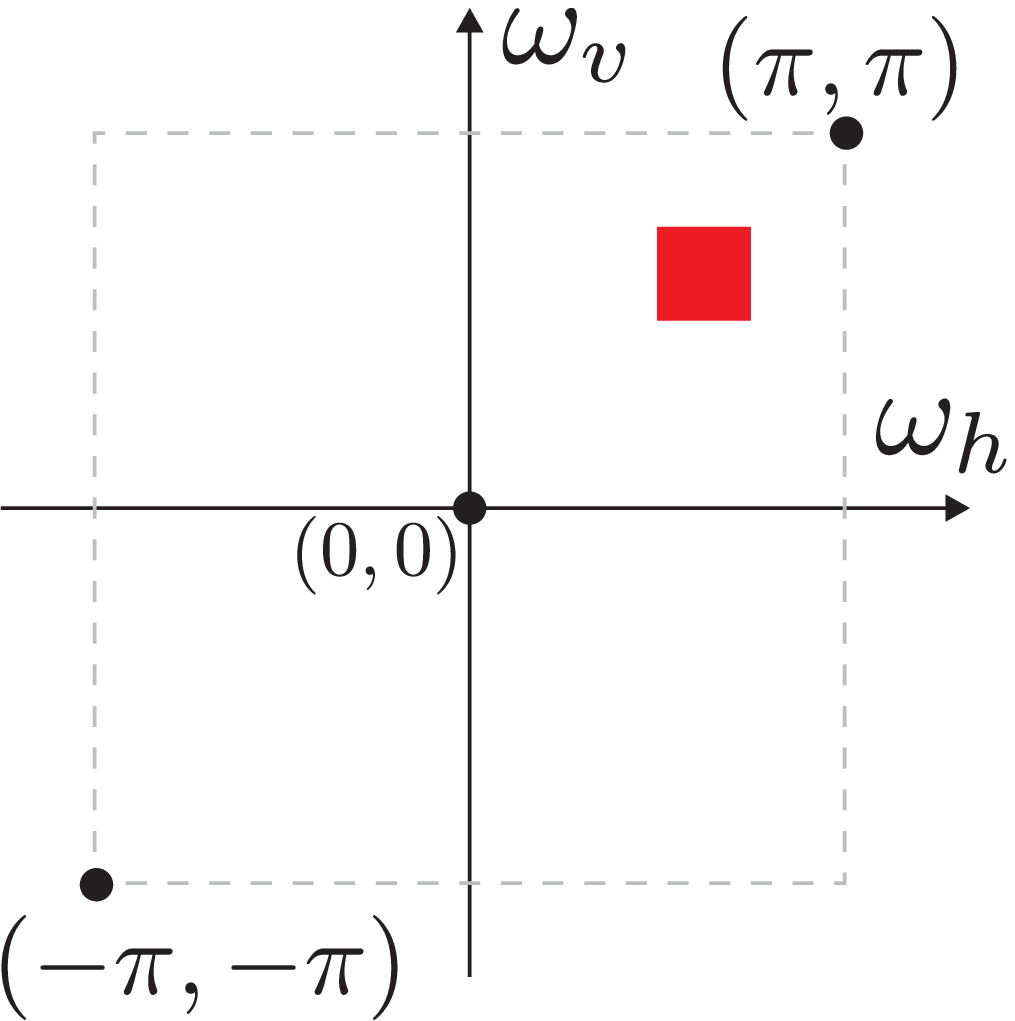}}
\caption{$U_{k_v,k_h}({\bm \omega})$}
\end{subfigure}%
\begin{subfigure}[b]{0.4\linewidth}
\centering
\scalebox{0.25}{\includegraphics[keepaspectratio=true]{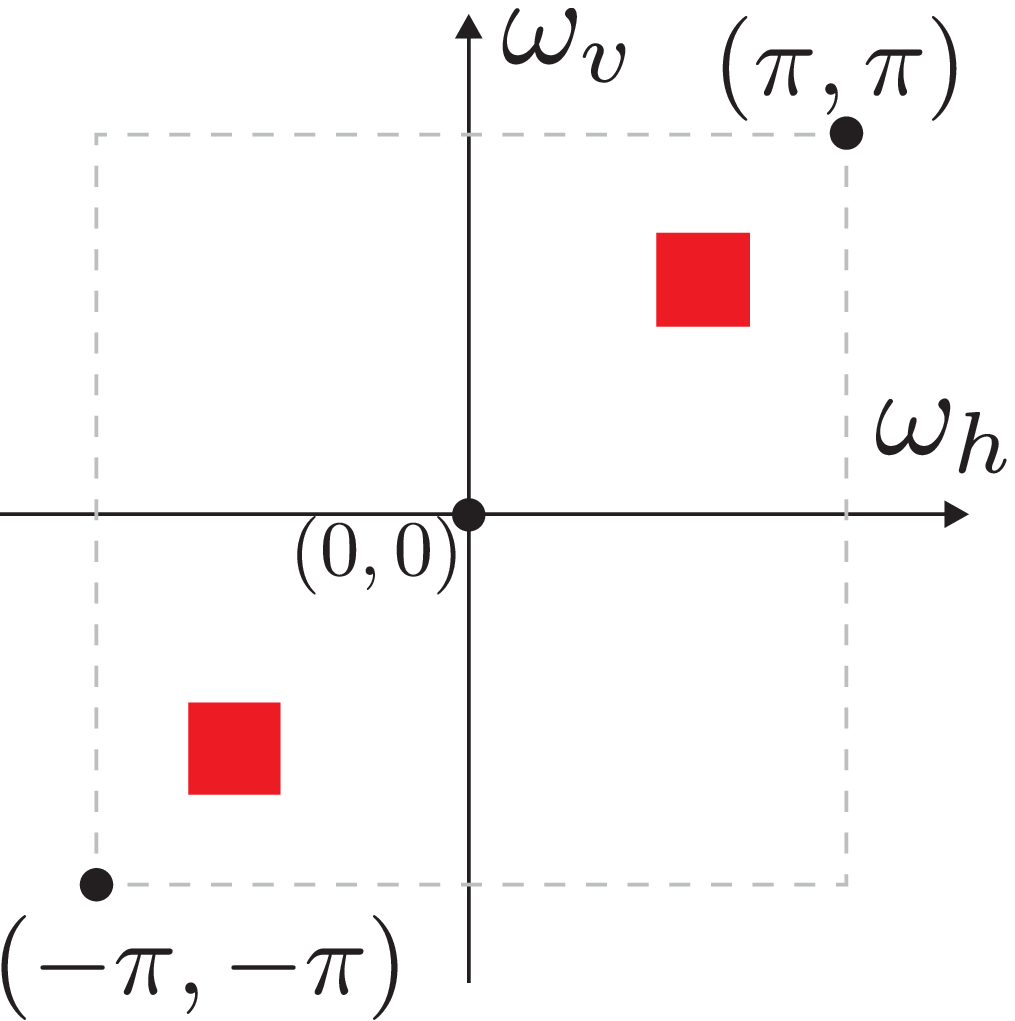}}
\caption{$U_{k_v,k_h}+\overline{U_{k_v,k_h}}$}
\end{subfigure}%
\caption{Example of frequency spectra (analytic and non-analytic filters).}
\vspace{-0.3cm}
\label{fig:anal}
\end{figure}
\begin{figure}[t]
\begin{center}
\begin{minipage}{\linewidth}
\begin{center}
\scalebox{0.85}{
\includegraphics[width=\linewidth,keepaspectratio=true]{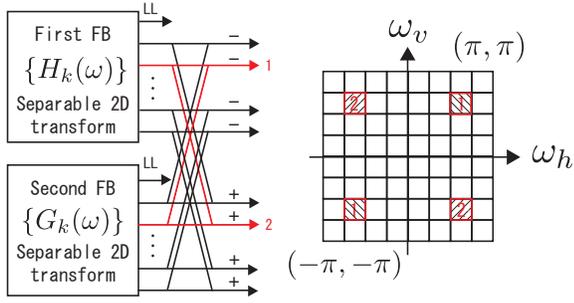}
}
\end{center}
\end{minipage}
\end{center}
\caption{Configurations for 2D DTCWTs. For $M=4$, 32 directional subbands can be distinguished.}
\label{2DPlaneDTCWT}
\end{figure}

\section{Directional Analytic Discrete Cosine Frames}
This section introduces the DADCF. The definition of the DADCF is given in Section \ref{subsec:DDDCF}. Directional selectivity of the DADCF is then discussed by analyzing its atoms in Section \ref{subsec:ADDCF}. As it will be shown in Section \ref{subsec:DC}, the DADCF suffers from the DC leakage problem. One solution is given by constructing the DADCF pyramid (the DADCF with Laplacian pyramid) in Section \ref{subsec:DC}.
\label{sec:DirDCT}
\subsection{Definition of Directional Analytic Discrete Cosine Frame}
\label{subsec:DDDCF}
This section introduces DADCFs for 2D signals by extending the conventional DCT. 
\begin{defin}
The analysis operator of the DADCF $\mathbf{F}^{(\mathrm{D})} \ \in\ \mathbb{R}^{2M^2\times M^2}$ is defined as
\begin{align} 
\label{eq:DDCF}
\mathbf{F}^{(\mathrm{D})} := &\ 
\mathbf{P}^{(\mathrm{I})\top}
\mathbf{W}^{(\mathrm{I})}
 \mathbf{P}^{(\mathrm{I})}
\begin{bmatrix}
\mathbf{F}^{(\mathrm{C})} \otimes \mathbf{F}^{(\mathrm{C})} \\
\mathbf{F}^{(\mathrm{S})} \otimes \mathbf{F}^{(\mathrm{S})}
\end{bmatrix},\nonumber\\
\mathbf{W}^{(\mathrm{I})} =& \ 
\mathrm{diag}\left(\frac{1}{\sqrt{2}}\mathbf{I}_{2M-1},\frac{1}{2}
\begin{bmatrix}
  \mathbf{I}_{(M-1)^2} & -\mathbf{I}_{(M-1)^2}\\
 \mathbf{I}_{(M-1)^2} & \mathbf{I}_{(M-1)^2}
\end{bmatrix}
\right),
\end{align}
where $\mathbf{F}^{(\mathrm{C})}$ is defined in \eqref{def:dct} and $\mathbf{P}^{(\mathrm{I})} \in \mathbb{R}^{M^2\times M^2}$ is a permutation matrix that places the $2M-1$ DCT and $2M-1$ DST coefficients associated with the subband indices $k_v = 0$ or $k_h = 0$ to the first part, and the other $2(M-1)^2$ coefficients associated with the subband indices $k_v \neq 0$ and $k_h \neq 0$ to the last (see Fig. \ref{eq:2DCSMFB}(a)). $\mathbf{F}^{(\mathrm{S})}\ \in \mathbb{R}^{M\times M}$ is defined as
\begin{align}\label{def:dst}
[ \mathbf{F}^{(\mathrm{S})} ]_{k,n}=&\ 
\begin{cases}
\sqrt{\frac{1}{M}}\sin \left(\pi\left(n+\frac{1}{2}\right)\right) & (k=0) \\
\sqrt{\frac{2}{M}}\sin \left(\frac{\pi}{M}k\left(n+\frac{1}{2}\right)\right) & (k \neq 0 )
\end{cases}.
\end{align}
\end{defin}

$ \mathbf{F}^{(\mathrm{S})}$ is nothing but the row-wise permuted version of the DST. In this paper, we simply denote the row-wise permuted DST as the DST. Because the DCT ($\mathbf{F}^{(\mathrm{C})}$) and the DST ($\mathbf{F}^{(\mathrm{S})}$) are orthogonal matrices, the DADCF is a Parseval block frame: $\mathbf{F}^{(\mathrm{D})\top}\mathbf{F}^{(\mathrm{D})} = \mathbf{I}_{M^2}$.

The construction flow of the DADCF is illustrated in Fig. \ref{eq:2DCSMFB}(a). The DADCF requires two block transforms, additions and subtractions between two transforms, and scaling operations. Its computational cost is slightly higher than conventional block transforms due to the SAP operations but much lower than other overlapped frames and dictionaries, as mentioned in Section \ref{sec:intro}. Its redundancy ratio is 2: It is the same as the DFT and the DTCWTs \cite{Selesnick2005, Kyochi2014, KYOCHI2010, Kyochi2009}, and thus it can reduce the amount of memory usage compared with highly redundant frames and dictionaries like those in \cite{Aharon2006, Rubinstein2013}.
\begin{rmk}
According to the basic knowledge on the DCT/DST, the DST $\mathbf{F}^{(\mathrm{S})}\in \mathbb{R}^{M \times M}$ can be implemented as the permuted and sign-altered version of the DCT $\mathbf{F}^{(\mathrm{C})}\in \mathbb{R}^{M \times M}$, i.e., $\mathbf{F}^{(\mathrm{S})} =\mathbf{P}^{(\mathrm{I\hspace{-.1em}I})}\mathbf{F}^{(\mathrm{C})}\mathrm{diag}(1,-1,\ldots,1,-1)$, where $\mathbf{P}^{(\mathrm{I\hspace{-.1em}I})} \in \mathbb{R}^{M \times M}$ denotes the permutation matrix that arranges the rows of matrices in reverse order. Thus, the DADCF can be implemented by the DCT with a few trivial SAP operations.
\end{rmk}
\begin{figure}[t]
\centering
\begin{subfigure}{1\linewidth}
\centering
\scalebox{0.5}{\includegraphics[keepaspectratio=true]{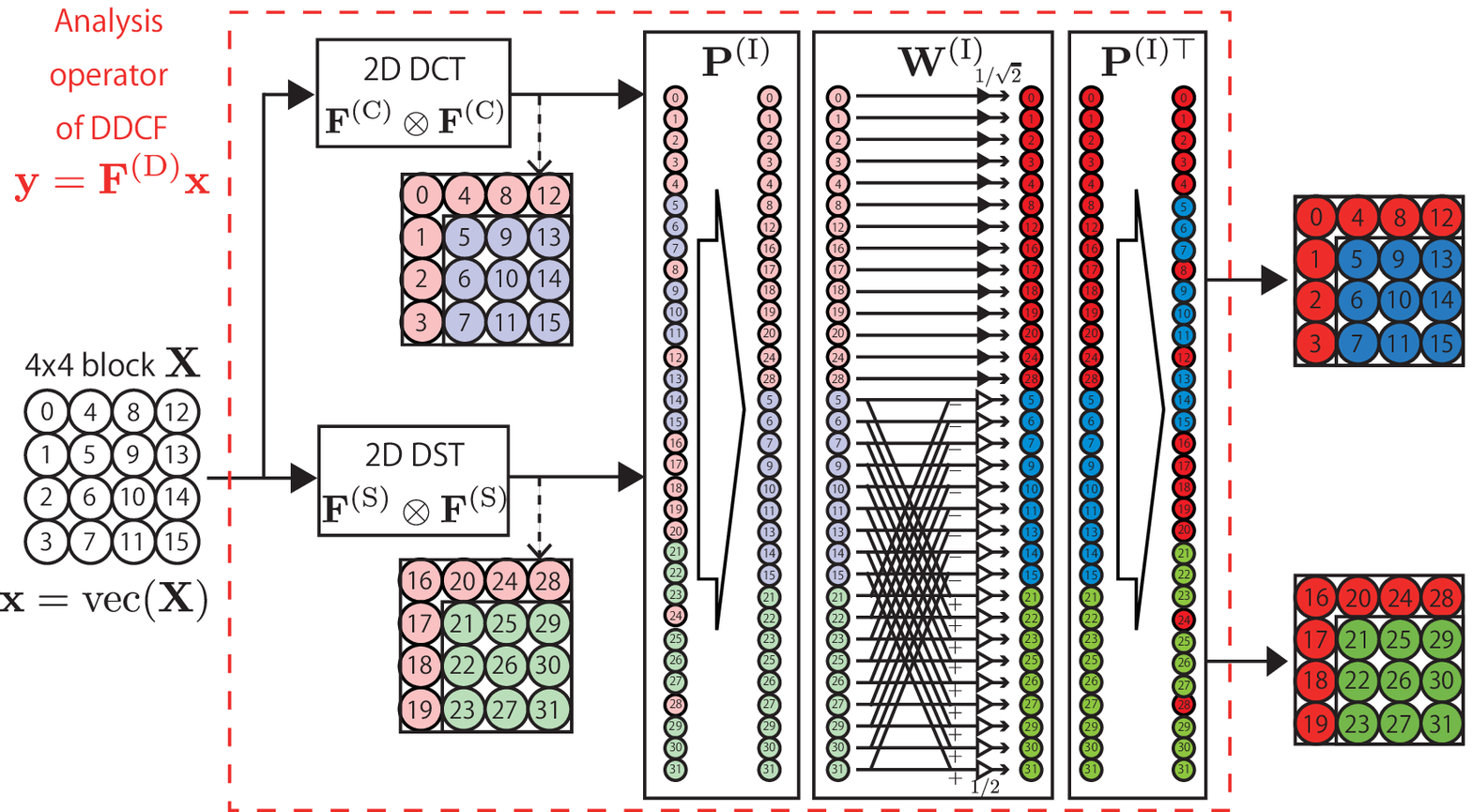}}
\caption{}
\end{subfigure}
\begin{subfigure}{0.45\linewidth}
\centering
\scalebox{0.4}{\includegraphics[keepaspectratio=true]{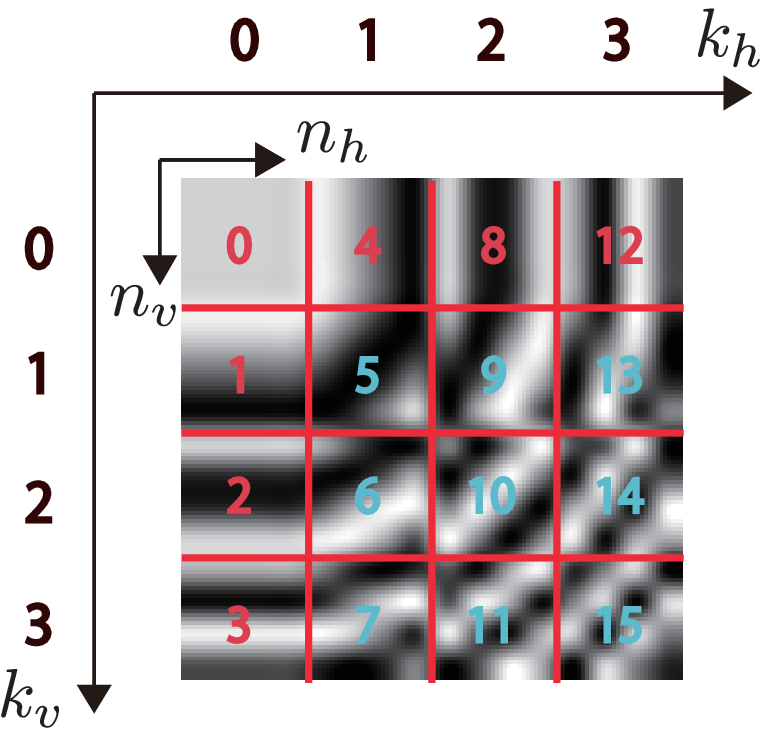}}
\caption{${C}_{n_v,n_h}^{(k_v,k_h, 1)}$, ${B}_{n_v,n_h}^{(k_v,k_h, -1)}$}
\end{subfigure}%
\begin{subfigure}{0.45\linewidth}
\centering
\scalebox{0.4}{\includegraphics[keepaspectratio=true]{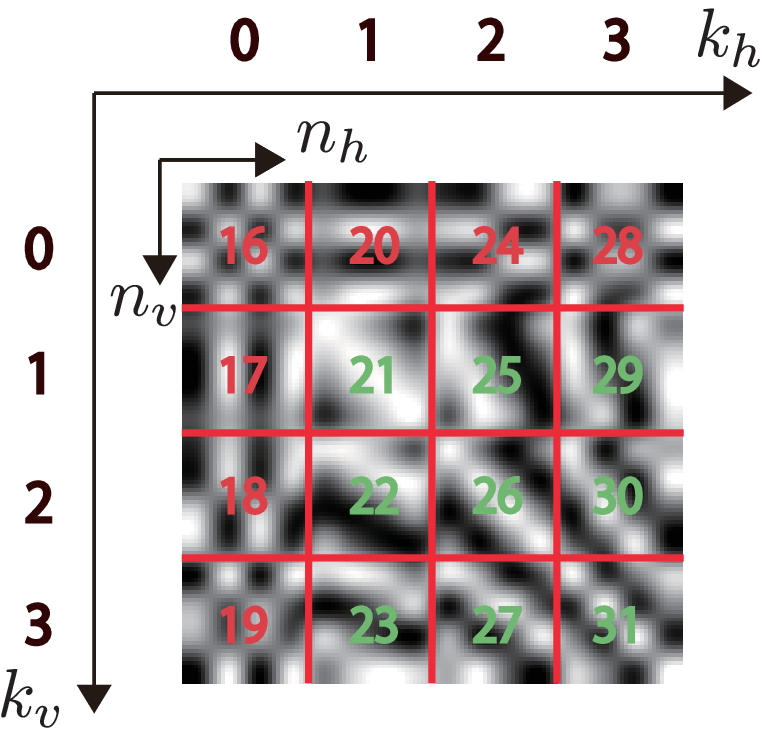}}
\caption{${S}_{n_v,n_h}^{(k_v,k_h, 1)}$, ${B}_{n_v,n_h}^{(k_v,k_h, 1)}$}
\end{subfigure}%
\caption{(a) Procedure of the DADCF ($M=4$). (b) and (c): Atoms ${C}_{n_v,n_h}^{(k_v,k_h, 1)}$, ${S}_{n_v,n_h}^{(k_v,k_h, 1)}$ (red), and ${B}_{n_v,n_h}^{(k_v,k_h, \pm 1)}$ (blue and green) in the DADCF. The numbers indicate the rightmost subband indices in (a).}
\label{eq:2DCSMFB}
\vspace{-0.3cm}
\end{figure}
\vspace{-0.2cm}
\subsection{Directional Atoms in DADCF}
\label{subsec:ADDCF}
Here, we examine the directional selectivity of the DADCF defined in \eqref{eq:DDCF}. The frequency spectra of the $k$-th rows of the DCT \eqref{def:dct} and the DST \eqref{def:dst} are given by
$
H_{k}({ \omega}):= \mathcal{F}[[\mathbf{F}^{(\mathrm{C})}]_{k,\cdot}] ,
G_{k}({ \omega}):= \mathcal{F}[[\mathbf{F}^{(\mathrm{S})}]_{k,\cdot}]
$,
where $k\geq 1$. Their complex combination
\begin{align}
H_{k}({ \omega})+j G_{k}({ \omega})=  \sqrt{\frac{2}{M}}\sum_{n=0}^{M-1} e^{j\theta_{k,n}}e^{-j\omega n},
\end{align}
which is the spectra of (9), approximately satisfies the analyticity, as shown in Fig. \ref{fig:DC1}(c). As a result, the DADCF is a directional transform with real coefficients from the 2D DCT and DST followed by addition/subtraction operations. Note that the frequency spectrum of $H_{0}({ \omega})+j G_{0}({ \omega})$, i.e., low-pass spectrum, does not satisfy the analyticity. As a result, the DADCF can distinguish $2(M-1)^2$ directional subbands.
\begin{figure}[t]
\centering
\begin{subfigure}{0.29\linewidth}
\centering
\scalebox{0.18}{\includegraphics[keepaspectratio=true]{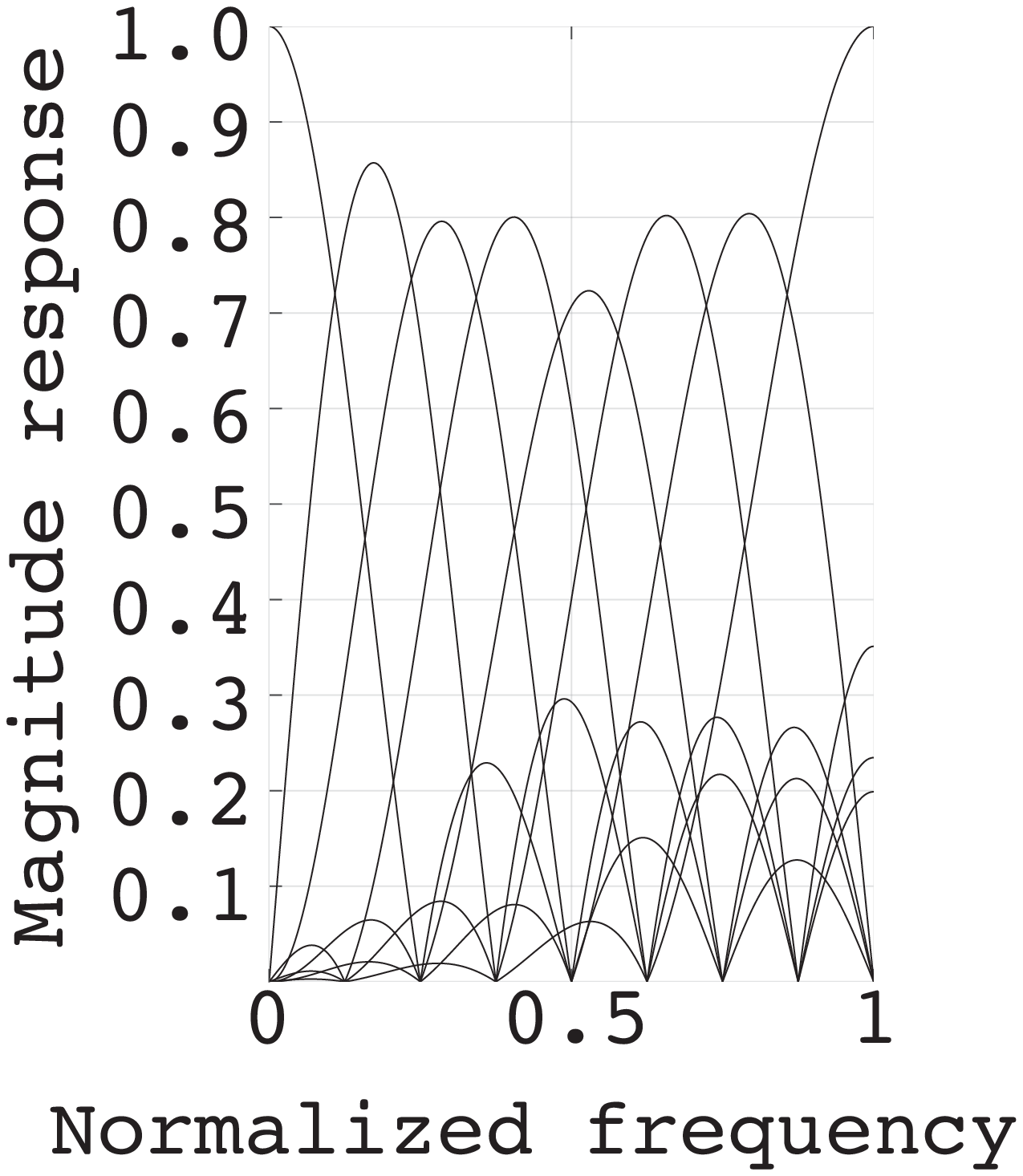}}
\caption{$H_{k}({ \omega})$}
\end{subfigure}%
\begin{subfigure}{0.29\linewidth}
\centering
\scalebox{0.18}{\includegraphics[keepaspectratio=true]{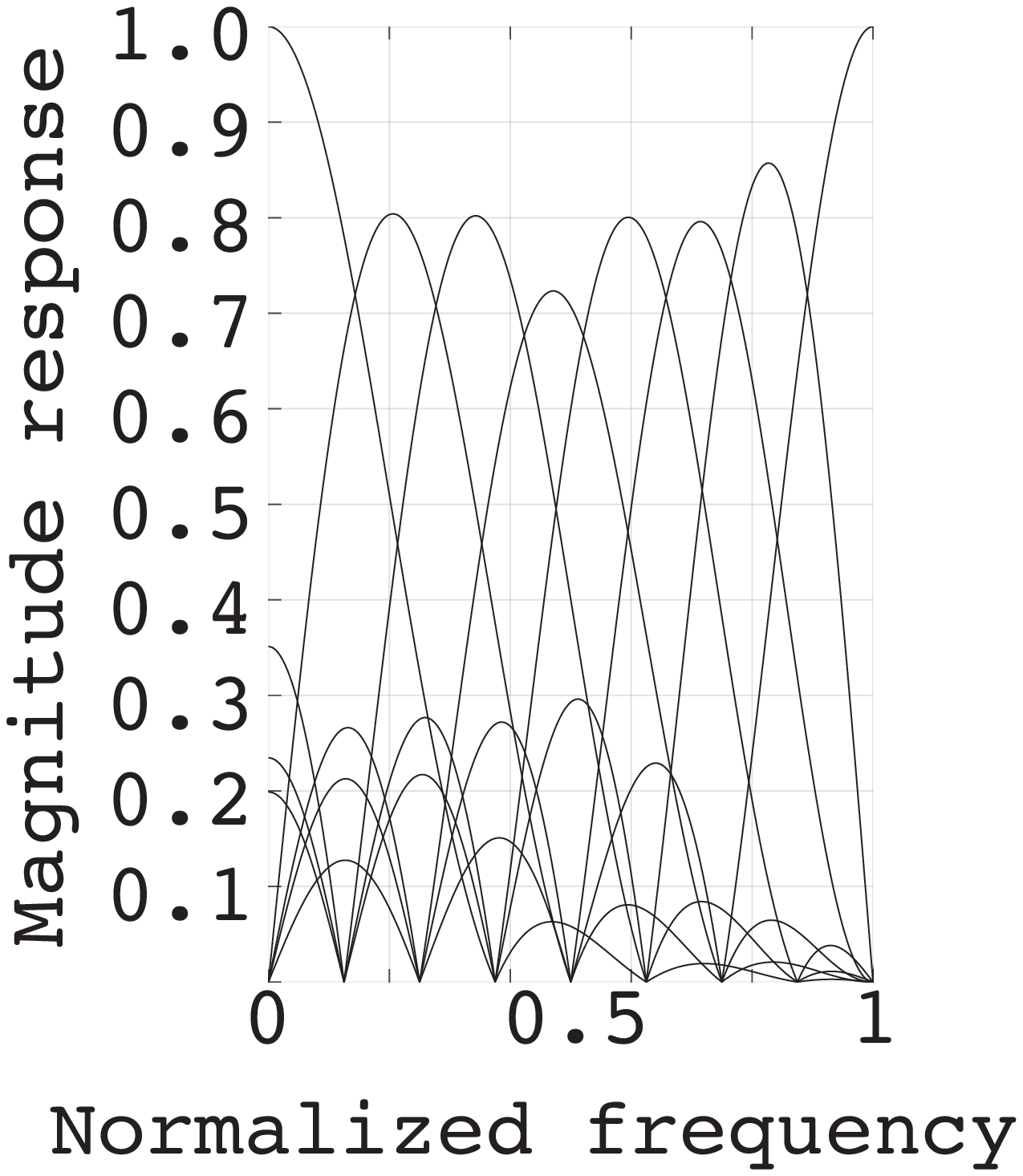}}
\caption{$G_{k}({ \omega})$}
\end{subfigure}%
\begin{subfigure}{0.44\linewidth}
\centering
\scalebox{0.18}{\includegraphics[keepaspectratio=true]{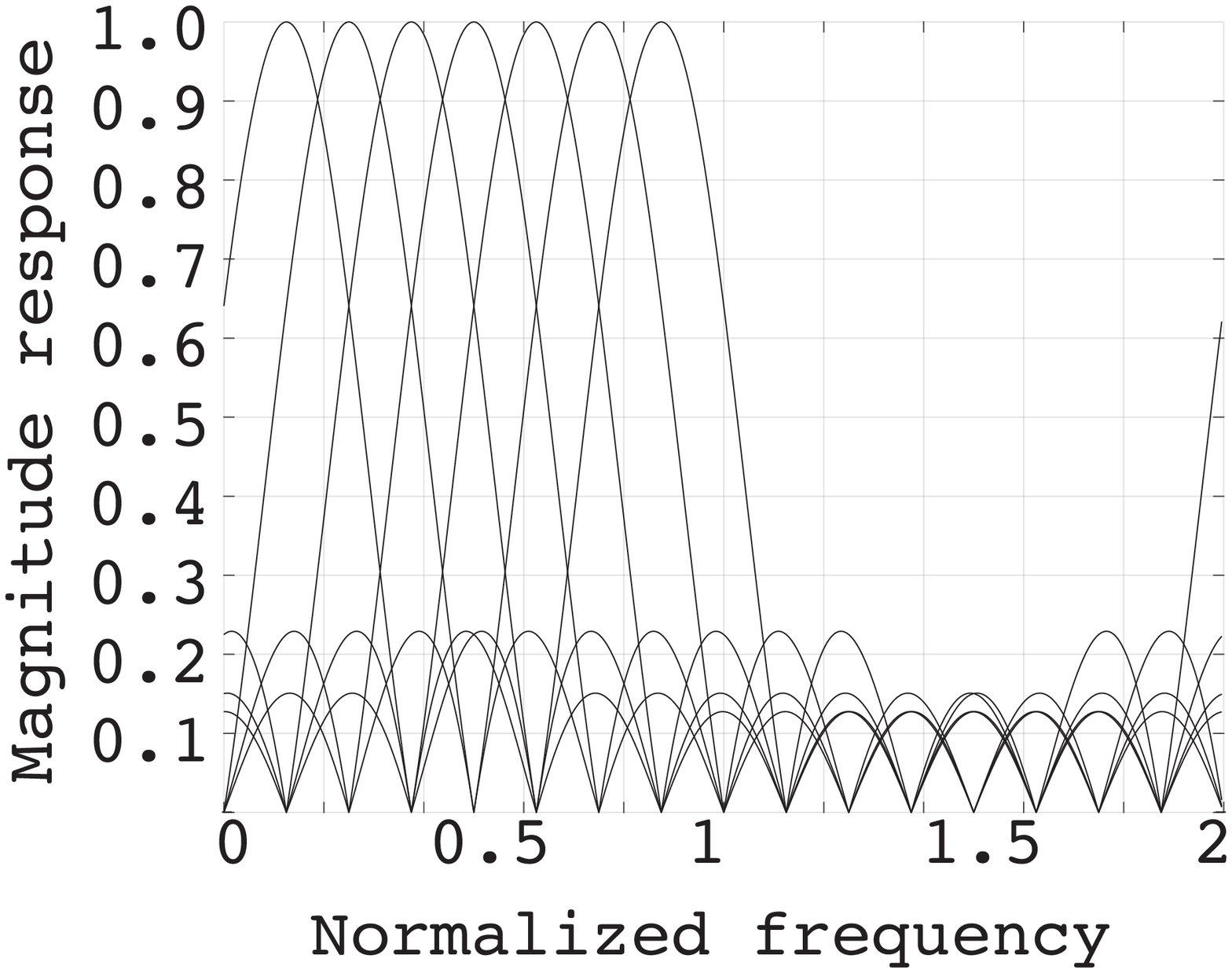}}
\caption{$H_{k}({ \omega})+j G_{k}({ \omega})$}
\end{subfigure}%
\caption{Frequency spectra (frequency: $[0, 2\pi]$, $M=8$): (a) DCT, (b) DST, (c) the complex combination ($1\leq k \leq 7$).}
\label{fig:DC1}
\end{figure}

Next, we show the atoms of the DADCF. Because the DADCF forms a Parseval block frame, it is enough to examine the synthesis transform $\begin{bmatrix} \mathbf{f}_0 & \ldots & \mathbf{f}_{2M^2-1} \end{bmatrix}^{\top}:= \mathbf{F}^{(\mathrm{D})\top}$. From \eqref{eq:DDCF}, $\mathbf{F}^{(\mathrm{D})\top}$ is composed of 1) an atom in the 2D DCT basis, 2) an atom in the 2D DST basis, or 3) directional atoms arising from the addition/subtraction of 2D DCT/DST atoms. Let $\mathbf{B}^{(k_v,k_h, 1)},\ \mathbf{B}^{(k_v,k_h, -1)} \in \mathbb{R}^{M\times M}$ be two directional atoms of the DADCF that correspond to the subband $(k_v,k_h) \in \Omega_{1,M-1} \times \Omega_{1,M-1}$. These atoms can be represented as
\begin{align} 
\label{eq:DDCFframe}
{B}^{(k_v,k_h, \pm 1)}_{n_v,n_h} 
 =&\ {C}^{(k_v,k_h)}_{n_v,n_h}   \pm {S}^{(k_v,k_h)}_{n_v,n_h} 
 \nonumber \\
=&\ 
\frac{2}{M} 
\cos \left(\theta_{k_v,n_v} \mp \theta_{k_h,n_h}\right),
\end{align}
where ${C}^{(k_v,k_h)}_{n_v,n_h}=[\mathbf{F}^{(\mathrm{C})}]_{k_v,n_v}[\mathbf{F}^{(\mathrm{C})}]_{k_h,n_h}$ and ${S}^{(k_v,k_h)}_{n_v,n_h}=[\mathbf{F}^{(\mathrm{S})}]_{k_v,n_v}[\mathbf{F}^{(\mathrm{S})}]_{k_h,n_h}$. In contrast to the DFT and the DHT bases \eqref{eq:2DDFT} and \eqref{eq:2DDHT}, these 2D atoms lie along various oblique directions, as illustrated in Figs. \ref{eq:2DCSMFB}(b) and (c).
\subsection{Lack of Regularity of DADCF}
\label{subsec:drawback}
As previously shown, some 2D frequency responses $U_{k_v,k_h}({\bm \omega}) + \overline{U_{k_v,k_h}({\bm \omega})}$ and $U_{k_v,\overline{k_h}}({\bm \omega}) + \overline{U_{k_v,\overline{k_h}}({\bm \omega})}$ obtained from the DADCF do not decay at ${\bm \omega} = (0,0)$ which leads to DC leakage. 

Figs. \ref{fig:DC}(a) and (b) show an image decomposition example. The image used is \textit{Zoneplate} $\{\mathbf{X}^{(i,j)}\}_{i,j \in \Omega_{31}}$ ($\mathbf{X} \in \mathbb{R}^{256\times 256}$) and its (half of the arranged) DADCF coefficients $\{\mathbf{x}_2\}_{i,j \in \Omega_{31}}$ with $M=8$, where $[ \mathbf{x}_1^{\top}\ \mathbf{x}_2^{\top} ]^{\top} =\mathbf{F}^{(\mathrm{D})}\ \mathrm{vec}(\mathbf{X}^{(i,j)})$ are shown in Fig. \ref{fig:DC}(b). We observe that the DC leakage has been appeared and it leads to the reduction of the sparsity.

The DC leakage is due to the fact that the DST $\mathbf{F}^{(\mathrm{S})}$ loses regularity, as mathematically explained in the following. For a block transform $\mathbf{F} \in \mathbb{R}^{M\times M}$, regularity condition \cite{strang1996} is formulated as
\begin{align}\label{eq:regF}
\begin{bmatrix}
c &
0 &
\cdots &
0
\end{bmatrix}^{\top}
= \mathbf{F}\mathbf{1},
\end{align}
where $c$ is some constant and $\mathbf{1} = \begin{bmatrix} 1& 1& \ldots& 1 \end{bmatrix}^{\top}$. As shown Fig. \ref{fig:DC1}(b), the DST $\mathbf{F}^{(\mathrm{S})}$ leads to the DC leakage. It can be theoretically verified as in the following proposition.
\begin{Prop}\label{lem:S1} 
Let vectors $\{\mathbf{s}_{k}\}_{k=0}^{M-1}$ be the basis of $M\times M$ DST, i.e., $ \begin{bmatrix} \mathbf{s}_0 & \ldots & \mathbf{s}_{M-1} \end{bmatrix} = \mathbf{F}^{(\mathrm{S})\top}$. Then,
\begin{align*}
\langle \mathbf{s}_{k} , \mathbf{1} \rangle = \begin{cases}
\frac{\sqrt{2}}{\sqrt{M}\sin\left(\frac{\pi}{2M}k\right) } & (k = 2\ell+1) \\
0 & (\mathrm{otherwise})
\end{cases},
\end{align*}
where $\ell \in \Omega_{\frac{M}{2}-1}$.
\end{Prop}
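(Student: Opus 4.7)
The plan is a direct computation: since $\langle \mathbf{s}_k,\mathbf{1}\rangle = \sum_{n=0}^{M-1}[\mathbf{F}^{(\mathrm{S})}]_{k,n}$, I would just sum the $k$-th row of $\mathbf{F}^{(\mathrm{S})}$ as defined in \eqref{def:dst} and show that it collapses exactly to the claimed expression.

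First I would dispose of the $k=0$ case. Here $[\mathbf{F}^{(\mathrm{S})}]_{0,n}=\sqrt{1/M}\sin\!\bigl(\pi(n+\tfrac{1}{2})\bigr)=\sqrt{1/M}(-1)^n\cos(\pi/2)$; more simply, $\sin(\pi n+\pi/2)=\cos(\pi n)=(-1)^n$, so the row sum is $\sqrt{1/M}\sum_{n=0}^{M-1}(-1)^n$, which vanishes because $M=2^m$ is even. This matches the ``otherwise'' branch (since $0$ is even).

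For $k\ge 1$ the key step is a standard sine arithmetic-progression identity,
\begin{equation*}
\sum_{n=0}^{M-1}\sin(a+nd)=\frac{\sin(Md/2)}{\sin(d/2)}\,\sin\!\Bigl(a+\tfrac{(M-1)d}{2}\Bigr),
\end{equation*}
applied with $a=\tfrac{\pi k}{2M}$ and $d=\tfrac{\pi k}{M}$. The phase inside the final sine simplifies to $\pi k/2$ and the prefactor has numerator $\sin(\pi k/2)$, so the whole sum collapses to $\sin^{2}(\pi k/2)/\sin(\pi k/(2M))$. Since $\sin^{2}(\pi k/2)$ is the indicator of $k$ odd, multiplying by the normalization $\sqrt{2/M}$ yields $\sqrt{2}/\bigl(\sqrt{M}\sin(\tfrac{\pi k}{2M})\bigr)$ when $k=2\ell+1$ and $0$ when $k$ is a nonzero even integer. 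Combining this with the $k=0$ computation gives exactly the two-case formula in the statement.

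There is essentially no obstacle — the only thing to be careful about is the row with index $k=0$, whose normalization and argument differ from the generic rows in \eqref{def:dst}, so it must be handled on its own rather than lumped into the arithmetic-progression computation. Everything else is bookkeeping of the trigonometric identity.
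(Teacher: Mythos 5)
Your proof is correct and takes essentially the same route as the paper's: the sine arithmetic-progression identity you invoke is exactly the imaginary part of the geometric sum $\sum_{n=0}^{M-1} e^{j\frac{\pi}{M}k(n+\frac{1}{2})}$ that the paper evaluates, and both collapse to $\frac{1-(-1)^k}{\sqrt{2M}\sin(\frac{\pi}{2M}k)}$. (The intermediate expression $\sqrt{1/M}(-1)^n\cos(\pi/2)$ in your $k=0$ case is a slip---it would be identically zero---but the correct simplification $\sin(\pi n+\pi/2)=(-1)^n$ that you give immediately afterward is what matters, and the paper simply asserts this case as clear.)
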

\begin{proof}
It is clear that $\langle \mathbf{s}_0, \mathbf{1} \rangle = 0$. For the other cases,
\begin{align*}
\langle \mathbf{s}_k,  \mathbf{1} \rangle =&\  \sqrt{\frac{2}{M}} \sum_{n=0}^{M-1} \sin \left(\frac{\pi}{M}k\left(n+\frac{1}{2}\right)\right) \nonumber\\
=&\  \sqrt{\frac{2}{M}} \mathcal{I} \left[\sum_{n=0}^{M-1} e^{ j\frac{\pi}{M}k\left(n+\frac{1}{2}\right)}\right] 
 = \frac{(1-(-1)^k)}{\sqrt{2M}\sin\left(\frac{\pi}{2M}k\right) },
\end{align*}
where $\mathcal{I}$ takes the imaginary part of a complex number. 
\end{proof}
From the above proposition, the odd rows $(k = 2\ell+1)$ of the DST produce nonzero responses for a constant-valued signal, i.e., DC leakage.
\subsection{DADCF Pyramid}
\label{subsec:DC} 
To obtain sparser coefficients, we introduce the DADCF pyramid inspired by \cite{Do2005}. The analysis operator of the DADCF pyramid $\mathbf{F}^{(\mathrm{DP})}$ is defined as:
\begin{align}\label{eq:DDCFP}
\mathbf{F}^{(\mathrm{DP})} = \begin{bmatrix} (\mathbf{D}\mathbf{M})^\top & (\mathbf{F}^{(\mathrm{D})}(\mathbf{I}- \widetilde{\mathbf{M}}\mathbf{D}^{\top}\mathbf{D}\mathbf{M}))^\top \end{bmatrix}^\top, 
\end{align}where $\mathbf{D} = \begin{bmatrix} 1 & 0 & \cdots & 0 \end{bmatrix}$ is the downsampling operator (and thus $\mathbf{D}^{\top}$ corresponds to the upsampling operator), $\mathbf{M} = \mathbf{M}^{(\mathrm{0})} \otimes \mathbf{M}^{(\mathrm{0})}\ \in \mathbb{R}^{M^2\times M^2}$ is the averaging operator, where $[\mathbf{M}^{(\mathrm{0})}]_{k,n} = \frac{1}{M}$, $\widetilde{\mathbf{M}} = M {\mathbf{M}}$.  By applying the DADCF pyramid to the input block $\mathrm{vec}(\mathbf{X}^{(i,j)})$, we can obtain its average value (denoted as ${x}_L$) and the DADCF coefficients of the DC-subtracted input block (denoted as $\mathbf{x}_1$ and $\mathbf{x}_2$) as $\begin{bmatrix} {{x}}_L & {\mathbf{x}}_1^\top &  {\mathbf{x}}_2^\top \end{bmatrix}^\top = \mathbf{F}^{(\mathrm{DP})}\mathrm{vec}(\mathbf{X}^{(i,j)})$. 

For example, Fig. \ref{fig:DC}(c) shows the (half of) the transformed coefficients $\{\mathbf{x}_2\}_{i,j \in \Omega_{31}}$. It is clear that sparser coefficients can be obtained and the DADCF pyramid $\mathbf{F}^{(\mathrm{DP})}$ is still invertible. By this operation, however, the number of transformed coefficients is slightly increased from $2N^2$ to $2N^2+(N/M)^2$ for $N \times N$ input images. 

\begin{figure}[t]
\centering
\begin{subfigure}[b]{0.33\linewidth}
\centering
\scalebox{0.22}{\includegraphics[keepaspectratio=true]{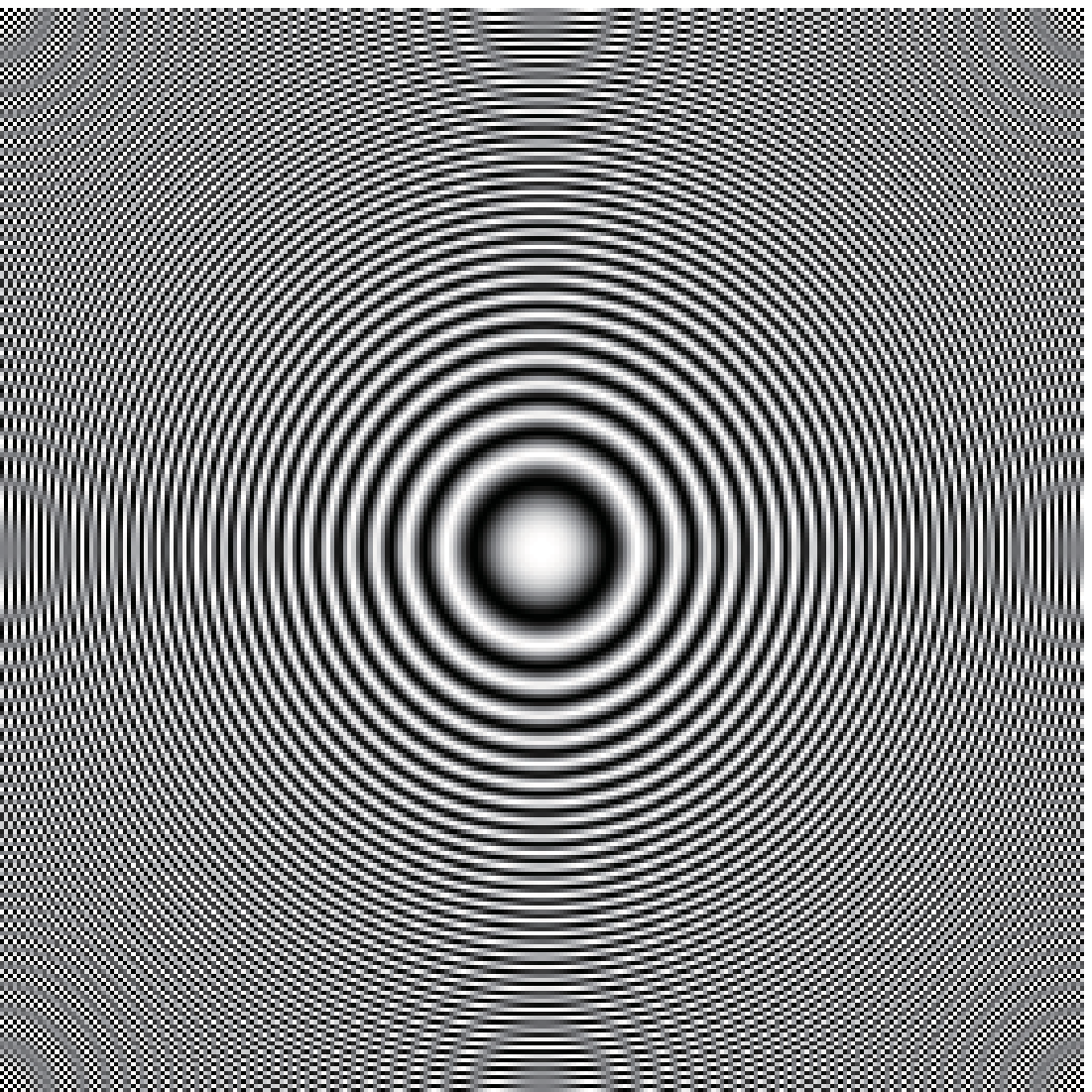}}
\caption{}
\end{subfigure}%
\begin{subfigure}[b]{0.33\linewidth}
\centering
\scalebox{0.22}{\includegraphics[keepaspectratio=true]{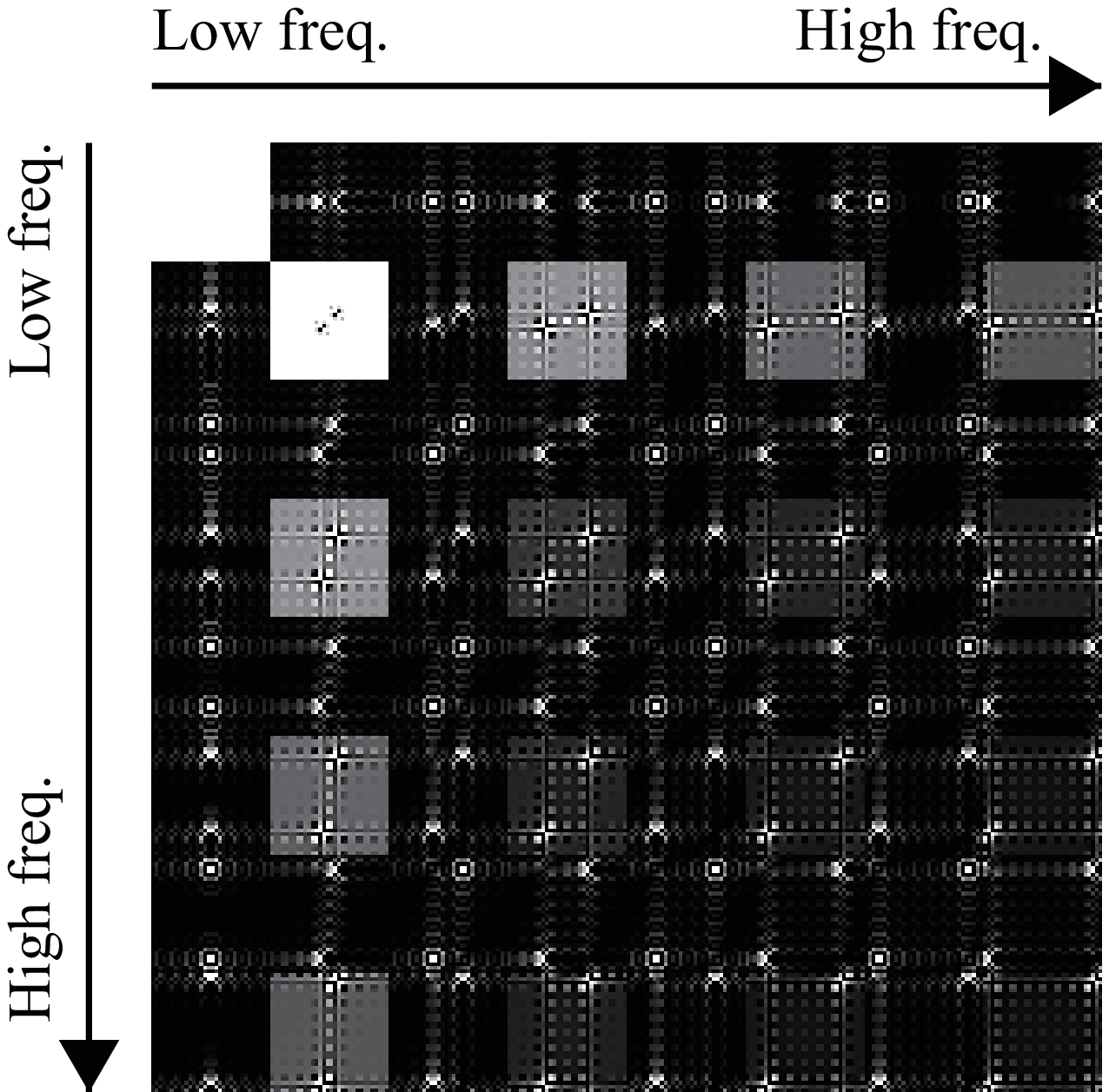}}
\caption{}
\end{subfigure}%
\begin{subfigure}[b]{0.33\linewidth}
\centering
\scalebox{0.22}{\includegraphics[keepaspectratio=true]{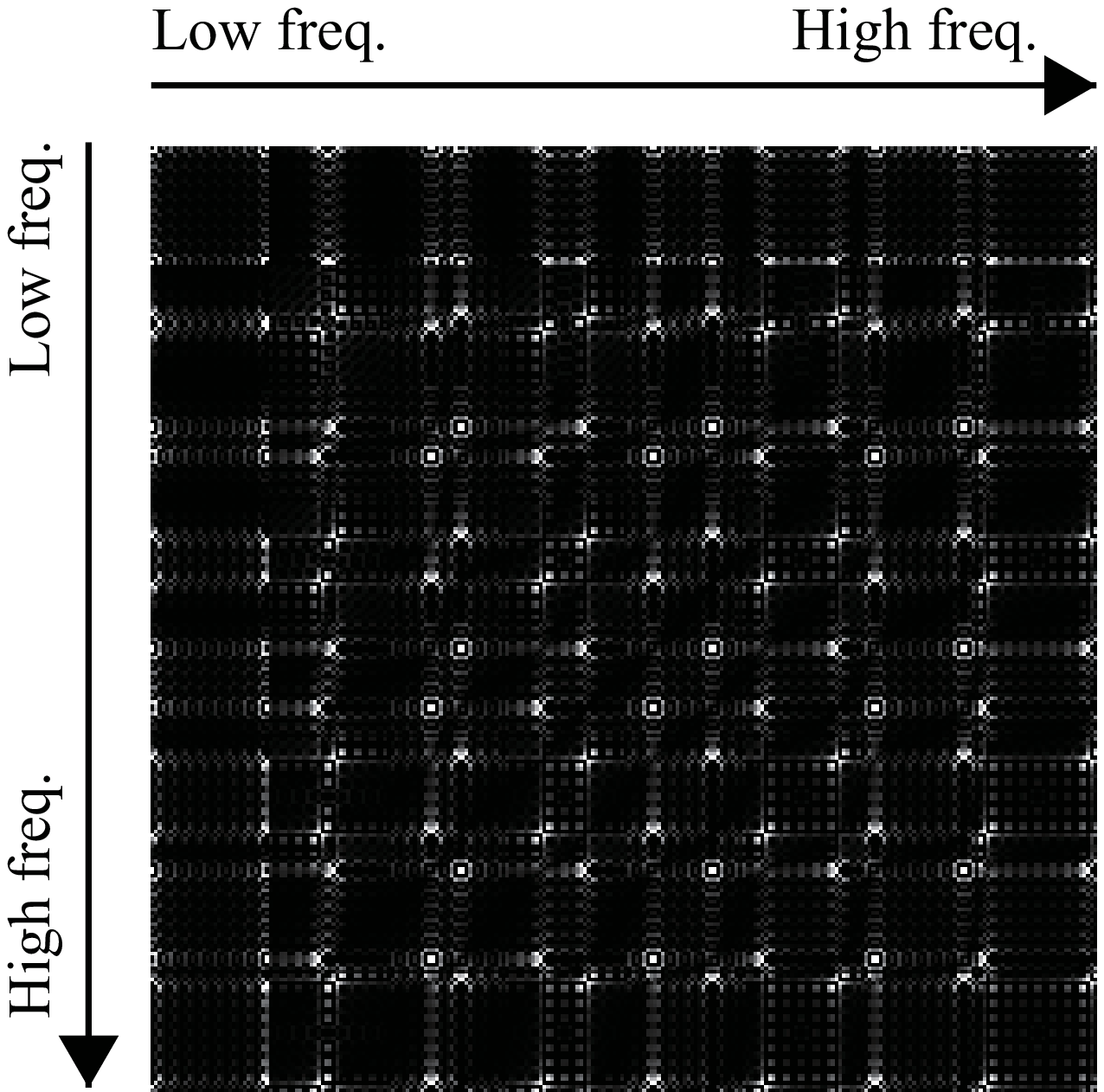}}
\caption{}
\end{subfigure}%
\caption{(a): \textit{Zoneplate}, (b) (Half of) DADCF coefficients ($M=8$), (c) (Half of) DADCF coefficients in the DADCF pyramid ones ($M=8$).}
\label{fig:DC}
\end{figure}
\section{Regularity-Constrained DADCF}\label{sec:RDDCF}
In this section, we introduce another DADCF, called RDADCF. We introduce a RDST in Section \ref{subsec:RDST} and \ref{subsec:imp_rdst}. Then, in Section \ref{subsec:DDCF_DTMRD}, we propose the RDADCF, which overcomes the problem of the DADCF, i.e., DC leakage, and saves the number of the transformed coefficients fewer than the DADCF pyramid. 
\subsection{Design of RDST}\label{subsec:RDST}
This section introduces a modified DST without DC leakage for constructing RDADCF. For notation simplicity, we present steps for constructing the RDST matrix $\mathbf{F}^{(\mathrm{RS})}$.
\begin{description}
\item[\textbf{Step 1}:]\ First, we define a modified DST $\mathbf{S} \in \mathbb{R}^{M\times M}$.
\begin{align}\label{eq:defS}
[ \mathbf{S}]_{k,n}=&\ 
\begin{cases}
\sqrt{\frac{1}{M}} & (k=0) \\
\sqrt{\frac{2}{M}}\sin \left(\frac{\pi}{M}k\left(n+\frac{1}{2}\right)\right) & (k \neq 0 )
\end{cases}\nonumber\\
= & \begin{bmatrix}
\mathbf{s}_0 & \mathbf{s}_1 & \cdots & \mathbf{s}_{M-1}
\end{bmatrix}^{\top}.
\end{align}
\end{description}
In short, it is constructed by replacing the 0-th row of the DST with that of the DCT. The modified DST satisfies the following property (see Appendix \ref{ap:prop:S} for its proof).
\begin{Prop}\label{prop:S}
 $\mathrm{rank}(\mathbf{S}) = M-1$. 
\end{Prop}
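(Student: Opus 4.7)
The plan is to sandwich $\mathrm{rank}(\mathbf{S})$ between $M-1$ (from below) and $M-1$ (from above), using the fact that $\mathbf{S}$ differs from the orthogonal DST $\mathbf{F}^{(\mathrm{S})}$ only in its 0-th row, which has been replaced by the constant row $\frac{1}{\sqrt{M}}\mathbf{1}^{\top}$.

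For the lower bound, I would observe that rows $\mathbf{s}_1, \ldots, \mathbf{s}_{M-1}$ of $\mathbf{S}$ coincide with rows $1, \ldots, M-1$ of $\mathbf{F}^{(\mathrm{S})}$. Since $\mathbf{F}^{(\mathrm{S})}$ is orthogonal, these $M-1$ vectors are orthonormal, hence linearly independent, which immediately yields $\mathrm{rank}(\mathbf{S}) \geq M-1$.

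For the matching upper bound, I would show that the new 0-th row $\frac{1}{\sqrt{M}}\mathbf{1}^{\top}$ lies in the span of $\mathbf{s}_1, \ldots, \mathbf{s}_{M-1}$, so it contributes no new dimension. Because the original DST rows $\{\mathbf{s}_k\}_{k=0}^{M-1}$ form an orthonormal basis of $\mathbb{R}^{M}$, one can expand $\mathbf{1} = \sum_{k=0}^{M-1} \langle \mathbf{s}_k, \mathbf{1}\rangle \mathbf{s}_k$. The key observation is then that the $k=0$ coefficient vanishes: the original row $\mathbf{s}_0$ of $\mathbf{F}^{(\mathrm{S})}$ equals $\frac{1}{\sqrt{M}}[(-1)^n]_{n=0}^{M-1}$ (since $\sin(\pi(n+\tfrac{1}{2})) = (-1)^n$), and because $M = 2^m$ is even, $\sum_{n=0}^{M-1}(-1)^n = 0$, so $\langle \mathbf{s}_0, \mathbf{1}\rangle = 0$. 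Hence $\mathbf{1} \in \mathrm{span}\{\mathbf{s}_1, \ldots, \mathbf{s}_{M-1}\}$, and in fact Proposition \ref{lem:S1} tells us more precisely that the expansion uses only the odd-indexed rows. This forces $\mathrm{rank}(\mathbf{S}) \leq M-1$, completing the proof.

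The only mildly subtle point is verifying $\langle \mathbf{s}_0, \mathbf{1}\rangle = 0$ for the original DST's 0-th row (the alternating-sign one), since this is what makes the rank drop by exactly one rather than potentially leaving it at $M$ if the replacement row had happened to be independent of the others. Everything else reduces to bookkeeping about orthonormal expansions, so I do not anticipate any serious obstacle.
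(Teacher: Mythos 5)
Your proof is correct and rests on exactly the same two facts as the paper's argument: the $M-1$ DST rows $\{\mathbf{s}_k\}_{k=1}^{M-1}$ are orthonormal (hence independent), and the replacement row $\tfrac{1}{\sqrt{M}}\mathbf{1}^{\top}$ lies in their span because $\langle \mathbf{s}_0^{(\mathrm{DST})},\mathbf{1}\rangle=0$, so only the odd-indexed coefficients of Proposition~\ref{lem:S1} survive in the expansion of $\mathbf{1}$. The paper packages this as an explicit linear dependency among the rows of the Gram matrix $\mathbf{S}\mathbf{S}^{\top}$ (via $\sum_{\ell}\langle\mathbf{s}_0,\mathbf{s}_{2\ell+1}\rangle^2=1$), but that is only a presentational difference from your direct span argument.
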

Then, we further modify $\mathbf{S}$ in \eqref{eq:defS}. From \eqref{eq:regF}, in order to impose the regularity condition on $\mathbf{S}$, $\{\mathbf{s}_k\}_{k=1}^{M-1}$ should be orthogonal to $\mathbf{s}_0$. Now, we orthogonalize the odd rows of $\mathbf{S}$ in the following way.
\begin{description}
\item[\textbf{Step 2}:]\  Set $\widetilde{\mathbf{S}}^{(0)}  = \begin{bmatrix}
\mathbf{s}_{0}& \mathbf{0} & \mathbf{s}_{2} &  \ldots & \mathbf{s}_{M-1}
\end{bmatrix}^{\top}$.
\end{description}
Here, $\widetilde{\mathbf{S}}^{(0)}$ satisfies the following proposition (see Appendix \ref{ap:prop:rankS1} for its proof).
\begin{Prop}\label{prop:rankS1}
$ \mathrm{rank}(\widetilde{\mathbf{S}}^{(0)}) = M-1$.
\end{Prop}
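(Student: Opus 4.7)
The plan is to exploit the fact that zeroing out $\mathbf{s}_1$ in $\mathbf{S}$ leaves exactly $M-1$ potentially nonzero rows, namely $\mathbf{s}_0, \mathbf{s}_2, \ldots, \mathbf{s}_{M-1}$, so the bound $\mathrm{rank}(\widetilde{\mathbf{S}}^{(0)}) \leq M-1$ is automatic and the substantive task reduces to proving that these $M-1$ rows are linearly independent.

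First I would observe that for $k \geq 1$ the rows $\mathbf{s}_k$ of $\mathbf{S}$ coincide with the corresponding rows of the orthogonal DST $\mathbf{F}^{(\mathrm{S})}$, so $\mathbf{s}_2, \mathbf{s}_3, \ldots, \mathbf{s}_{M-1}$ are mutually orthonormal and hence linearly independent. What remains is to show that the substituted DC row $\mathbf{s}_0 = \sqrt{1/M}\,\mathbf{1}^\top$ does not lie in $\mathrm{span}\{\mathbf{s}_2, \ldots, \mathbf{s}_{M-1}\}$, equivalently that $\mathbf{1} \notin \mathrm{span}\{\mathbf{s}_2, \ldots, \mathbf{s}_{M-1}\}$.

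The key step uses uniqueness of orthonormal coordinates. Because the full DST rows form an orthonormal basis of $\mathbb{R}^M$, any expansion of $\mathbf{1}$ into these basis vectors is unique, and Proposition \ref{lem:S1} gives the coefficient at $k=1$ as $\sqrt{2}/(\sqrt{M}\sin(\pi/(2M)))$, which is strictly positive. Consequently $\mathbf{1}$ has a nonzero $\mathbf{s}_1$ component in the DST basis and therefore cannot be written using only $\mathbf{s}_2, \ldots, \mathbf{s}_{M-1}$. This gives the required independence and establishes $\mathrm{rank}(\widetilde{\mathbf{S}}^{(0)}) = M-1$.

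The only real subtlety is notational: within $\mathbf{S}$ the symbol $\mathbf{s}_0$ denotes the substituted constant row rather than the original $k=0$ DST row, so care is needed when quoting Proposition \ref{lem:S1} (which is stated for the genuine DST rows). Once this is tracked carefully, the proof is essentially the one-line observation that $\mathbf{1}$ projects nontrivially onto precisely the one row $\mathbf{s}_1$ that has been removed, and so the surviving rows cannot span it.
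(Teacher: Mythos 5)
Your argument is correct, and it takes a genuinely different route from the paper's. The paper works with the Gram matrix $\widehat{\mathbf{S}}^{(0)}=\widetilde{\mathbf{S}}^{(0)}\widetilde{\mathbf{S}}^{(0)\top}$: using Lemma~\ref{lem:S} it writes out its entries (zero row and column at index $1$, identity elsewhere except for the entries $\langle\mathbf{s}_0,\mathbf{s}_{2\ell+1}\rangle$ in row and column $0$) and observes that the linear relation $\widehat{\mathbf{s}}_0=\sum_{\ell\geq 0}\langle\mathbf{s}_0,\mathbf{s}_{2\ell+1}\rangle\widehat{\mathbf{s}}_{2\ell+1}$, which was responsible for the rank deficiency in Proposition~\ref{prop:S}, no longer holds once the $\ell=0$ term is deleted, precisely because $\sum_{\ell\geq 1}\langle\mathbf{s}_0,\mathbf{s}_{2\ell+1}\rangle^2=1-\langle\mathbf{s}_0,\mathbf{s}_1\rangle^2<1$ by Lemma~\ref{lem:S}(2). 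You instead argue directly in $\mathbb{R}^M$: the surviving rows $\mathbf{s}_2,\ldots,\mathbf{s}_{M-1}$ are orthonormal (rows of the orthogonal $\mathbf{F}^{(\mathrm{S})}$), and $\mathbf{1}$ cannot lie in their span because $\langle\mathbf{s}_1,\mathbf{1}\rangle=\tfrac{\sqrt{2}}{\sqrt{M}\sin(\pi/2M)}\neq 0$ while $\mathbf{s}_1\perp\mathbf{s}_k$ for all $k\geq 2$; this is the $k=1$ case of Proposition~\ref{lem:S1}. The two proofs ultimately rest on the same fact (the nonvanishing of the $\mathbf{s}_1$-component of $\mathbf{1}$), but your formulation is more economical and, arguably, more complete: exhibiting one nonvanishing linear combination of Gram-matrix rows, as the paper does, does not by itself exclude some other dependence among the $M-1$ nonzero rows (one still has to exploit the arrowhead structure of $\widehat{\mathbf{S}}^{(0)}$ to finish), whereas your orthogonality argument yields linear independence of all $M-1$ surviving rows in one step. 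Note also that you do not need uniqueness of coordinates in the full orthonormal basis; the pairwise orthogonality $\mathbf{s}_1\perp\mathbf{s}_k$ ($k\geq 2$) already forces $\langle\mathbf{s}_1,\mathbf{1}\rangle=0$ for any $\mathbf{1}\in\mathrm{span}\{\mathbf{s}_2,\ldots,\mathbf{s}_{M-1}\}$, which is the contradiction you want.
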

From Proposition \ref{prop:rankS1}, there is only one zero singular value and its corresponding right-singular vector (denoted as $\mathbf{v}^{(0)}$) belongs to the null space of $\widetilde{\mathbf{S}}^{(0)}$. It implies that $\widetilde{\mathbf{S}}^{(0)}\mathbf{v}^{(0)} = \mathbf{0}$, i.e., $\mathbf{v}^{(0)}$ satisfies the regularity condition. $\widetilde{\mathbf{S}}^{(0)}$ is updated by replacing $\mathbf{0}$ to $\mathbf{v}^{(0)}$.
\begin{description}
\item[\textbf{Step 3}:] \  Set $\mathbf{S}^{(0)} = \begin{bmatrix}
 \mathbf{s}_{0} & \mathbf{v}^{(0)} & \mathbf{s}_{2} & \ldots & \mathbf{s}_{M-1}
\end{bmatrix}^{\top}$.
\end{description}
Note that $\mathbf{v}^{(0)}$ can be explicitly represented as $[\mathbf{v}^{(0)}]_n=\sqrt{\frac{1}{M}}(-1)^n = [\mathbf{F}^{(\mathrm{S})}]_{0,n} \left(= \sqrt{\frac{1}{M}}\sin \left( \pi\left(n+\frac{1}{2}\right)\right)\right)$ because the row of the DST $[\mathbf{F}^{(\mathrm{S})}]_{0,n}$ corresponding to the highest frequency subband is orthogonal to $\{\mathbf{s}_0,\ \mathbf{s}_2,\ \ldots,\ \mathbf{s}_{M-1}\}$.
It clearly follows that $\mathrm{rank}(\mathbf{S}^{(0)}) = M$.

Consequently, by repeating Steps 1 and 2, we can obtain the orthogonal matrix $\mathbf{S}^{(M/2-1)}$ whose odd rows are replaced by the different ones from the initial $\mathbf{S}^{(0)}$. A summary of the algorithm is given in Algorithm \ref{alg:RDST}. Hereafter $\mathbf{F}^{(\mathrm{RS})}:=\mathbf{S}^{(M/2-1)}$ is termed as a RDST. 
\begin{algorithm}[t]
    \caption{The design procedure for RDST}
    \label{alg:RDST}
    \begin{algorithmic}[1]
        {\footnotesize
            \STATE Set $\mathbf{S}$ is as in \eqref{eq:defS}.
            \FOR{$k=0$ to $M/2-1$}
            \STATE 
            Set $\widetilde{\mathbf{S}}^{(k)} = \begin{bmatrix}
 \ldots& \mathbf{s}_{2k}& \mathbf{0} & \mathbf{s}_{2k+2} & \ldots
\end{bmatrix}^{\top}$. 
            \STATE Find the right-singular vector $\mathbf{v}^{(k)}$ corresponding to zero singular value 
            .
            \STATE Set $\mathbf{S}^{(k)} = \begin{bmatrix}
 \ldots&\mathbf{s}_{2k}& \mathbf{v}^{(k)} & \mathbf{s}_{2k+2} & \ldots
\end{bmatrix}^{\top}$. 
            \ENDFOR
            \STATE Output $\mathbf{S}^{(M/2-1)}$.}
    \end{algorithmic}
\end{algorithm}

The RDST satisfies the following properties (see Appendix \ref{ap:prop:RDST} for its proof).
\begin{Prop}\label{prop:RDST}
Let $\mathbf{F}^{(\mathrm{RS})} \in \mathbb{R}^{M\times M}$ be the RDST.
\begin{enumerate}
\item This satisfies the regularity condition, i.e., $$\begin{bmatrix}
c &
0 &
\cdots &
0
\end{bmatrix}^{\top} = \mathbf{F}^{(\mathrm{RS})}\mathbf{1}.$$
\item Some rows of $\mathbf{F}^{(\mathrm{RS})} \in \mathbb{R}^{M\times M}$ are identical with those in the DST matrix $\mathbf{F}^{(\mathrm{S})}$:
$
 [\mathbf{F}^{(\mathrm{RS})}]_{0,n} = \sqrt{\frac{1}{M}}$, $[\mathbf{F}^{(\mathrm{RS})}]_{1,n} = [\mathbf{F}^{(\mathrm{S})}]_{0,n}$, $[\mathbf{F}^{(\mathrm{RS})}]_{2\ell,n} = [\mathbf{F}^{(\mathrm{S})}]_{2\ell,n},
$
where $\mathbf{F}^{(\mathrm{S})} \in \mathbb{R}^{M\times M}$ is the DST matrix and $\ell \in  \Omega_{\frac{M}{2}-1}$.
\item The passband of the spectrum $\mathcal{F}[[\mathbf{F}^{(\mathrm{RS})}]_{2\ell+1,\cdot}]$ is the same as that of the DST $\mathcal{F}[[\mathbf{F}^{(\mathrm{S})}]_{2\ell+1,\cdot}]$ $(\ell \geq 2)$.
\end{enumerate}
\end{Prop}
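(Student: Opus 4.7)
The plan is to handle the three claims sequentially, using throughout the orthonormality of the DST basis $\{[\mathbf{F}^{(\mathrm{S})}]_{k,\cdot}\}_{k=0}^{M-1}$ in $\mathbb{R}^{M}$, the closed form for $\langle\mathbf{s}_{k},\mathbf{1}\rangle$ supplied by Proposition~1, and the one-dimensional null-space property of $\widetilde{\mathbf{S}}^{(k)}$ from Proposition~3.

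For Claim~1, I would verify row-by-row that $\mathbf{F}^{(\mathrm{RS})}\mathbf{1}$ has only its 0-th entry nonzero. Row $0$ is $\sqrt{1/M}\,\mathbf{1}^{\top}$, giving $\sqrt{M}$. Each row $2\ell$ with $\ell\geq 1$ equals the unchanged sine row $\mathbf{s}_{2\ell}$, which is annihilated by $\mathbf{1}$ through Proposition~1 (the even-$k$ case). Each row $2\ell+1$ is $\mathbf{v}^{(\ell)}$, a right null vector of $\widetilde{\mathbf{S}}^{(\ell)}$; since $\sqrt{1/M}\,\mathbf{1}^{\top}$ is the 0-th row of $\widetilde{\mathbf{S}}^{(\ell)}$, it follows that $\langle\mathbf{v}^{(\ell)},\mathbf{1}\rangle=0$.

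For Claim~2, the equalities for rows $0$ and $2\ell$ are immediate from Steps~1--3, since the algorithm never rewrites even-indexed rows. The substantive content is $\mathbf{v}^{(0)}=\sqrt{1/M}\,(-1)^{n}=[\mathbf{F}^{(\mathrm{S})}]_{0,\cdot}^{\top}$. Let $\mathbf{u}:=[\mathbf{F}^{(\mathrm{S})}]_{0,\cdot}^{\top}$. By DST orthonormality, $\mathbf{u}$ is orthogonal to every other DST row, in particular to $\mathbf{s}_{2},\ldots,\mathbf{s}_{M-1}$; furthermore $\langle\mathbf{u},\sqrt{1/M}\,\mathbf{1}\rangle=(1/M)\sum_{n}(-1)^{n}=0$ for even $M$. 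Hence $\mathbf{u}\in\mathcal{N}(\widetilde{\mathbf{S}}^{(0)})$, and since Proposition~3 forces this null space to be one-dimensional while $\|\mathbf{u}\|_{2}=1$, we conclude $\mathbf{v}^{(0)}=\pm\mathbf{u}$.

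For Claim~3, the idea is to expand $\mathbf{v}^{(\ell)}$ in the DST basis inductively. Using Claim~2 as the base case, suppose $\mathbf{v}^{(j)}\in\mathrm{span}\{\mathbf{s}_{1},\mathbf{s}_{3},\ldots,\mathbf{s}_{2j+1}\}$ for every $j<\ell$. At iteration $\ell$, orthogonality of $\mathbf{v}^{(\ell)}$ to $\mathbf{v}^{(0)}=[\mathbf{F}^{(\mathrm{S})}]_{0,\cdot}^{\top}$, to every sine row $\mathbf{s}_{2j}$ ($j\geq 1$), and to every untouched $\mathbf{s}_{2j+1}$ with $j>\ell$ confines $\mathbf{v}^{(\ell)}$ to $\mathrm{span}\{\mathbf{s}_{1},\mathbf{s}_{3},\ldots,\mathbf{s}_{2\ell+1}\}$. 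Writing $\mathbf{v}^{(\ell)}=\sum_{j=0}^{\ell}a_{j}^{(\ell)}\mathbf{s}_{2j+1}$, orthogonality to $\mathbf{1}$ and to $\mathbf{v}^{(1)},\ldots,\mathbf{v}^{(\ell-1)}$ yields $\ell$ linear equations in the $\ell+1$ unknowns whose coefficients can be solved for inductively via the closed-form values $\langle\mathbf{s}_{2j+1},\mathbf{1}\rangle\propto 1/\sin(\pi(2j+1)/(2M))$ from Proposition~1. Solving this triangular system and normalizing shows that the top coefficient $a_{\ell}^{(\ell)}$ dominates once $\ell\geq 2$, so the discrete-time Fourier spectrum of $\mathbf{v}^{(\ell)}$ inherits its peak location and main lobe from $\mathbf{s}_{2\ell+1}$; for $\ell=0,1$ the low-frequency corrections are not sufficiently suppressed, which accounts for the restriction. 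The main obstacle is precisely this quantitative spectral estimate, i.e., turning ``smaller corrections at lower indices'' into ``coincident passbands,'' and I would address it by inverting the triangular system $\{a_{j}^{(\ell)}\}$ in closed form via Proposition~1 and bounding the Fourier energy of the tail $\sum_{j<\ell}a_{j}^{(\ell)}\mathbf{s}_{2j+1}$ against that of $a_{\ell}^{(\ell)}\mathbf{s}_{2\ell+1}$.
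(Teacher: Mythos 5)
Your treatment of claims 1) and 2) is correct and in fact more explicit than the paper's, which dismisses both as ``clearly true'': the row-by-row verification (row $0$ gives $\sqrt{M}$, even rows are annihilated by $\mathbf{1}$ via Proposition~\ref{lem:S1}, and each $\mathbf{v}^{(\ell)}$ is orthogonal to $\mathbf{1}$ because $\sqrt{1/M}\,\mathbf{1}^{\top}$ is the $0$-th row of $\widetilde{\mathbf{S}}^{(\ell)}$) and the identification $\mathbf{v}^{(0)}=\pm[\mathbf{F}^{(\mathrm{S})}]_{0,\cdot}^{\top}$ via the one-dimensional null space are exactly the right arguments. One small caveat: the rank statement you invoke for general $\ell$ is only proved in the paper for $\ell=0$ (Proposition~\ref{prop:rankS1}); the extension to later iterations is asserted, not proved, in both your write-up and the paper, so you inherit rather than introduce that gap.

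For claim 3), however, there is a genuine gap. Your skeleton differs from the paper's: you expand $\mathbf{v}^{(\ell)}$ directly in the odd DST rows, $\mathbf{v}^{(\ell)}=\sum_{j=0}^{\ell}a_{j}^{(\ell)}\mathbf{s}_{2j+1}$, and reduce the claim to showing $|a_{\ell}^{(\ell)}|$ dominates, whereas the paper shows $\mathbf{s}^{(k+1)}_{2k+1}$ is proportional to the dual-basis vector $\mathbf{t}^{(k)}_{2k+1}$ of $\mathbf{T}^{(k)}=(\mathbf{S}^{(k)})^{-1}$ and then proves (Lemma~\ref{lem:pbT}) that $\mathbf{t}^{(k)}_{m}$ and $\mathbf{s}^{(k)}_{m}$ share a passband by an explicit eigendecomposition of $\mathbf{S}^{(1)}\mathbf{S}^{(1)\top}$ and a diagonal-dominance bound $|\langle\mathbf{t}_{m},\mathbf{t}_{n}\rangle|\ll|\langle\mathbf{t}_{m},\mathbf{t}_{m}\rangle|$. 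Your route is structurally viable (the constraints from $\mathbf{1}$ and from $\mathbf{v}^{(1)},\dots,\mathbf{v}^{(\ell-1)}$ do give a triangular system, and the inner products $\langle\mathbf{s}_{2j+1},\mathbf{1}\rangle\propto 1/\sin(\pi(2j+1)/2M)$ are available in closed form), but the decisive step --- turning the closed-form coefficients into a proof that the tail $\sum_{j<\ell}a_{j}^{(\ell)}\mathbf{s}_{2j+1}$ does not shift the passband --- is exactly where you stop and say what you ``would'' do. That quantitative estimate is the entire content of the paper's Lemma~\ref{lem:pbT} and its several-page appendix (explicit eigenvalues $\lambda_{0},\lambda_{M-1}=1\pm\sqrt{\sum\widehat{s}_{2\ell+1}^{2}}$, the bound $\epsilon<1/4$, and the numerical bounds $1/2$ and $1/8$ on off-diagonal terms), so as written your proof of claim 3) is a plan rather than a proof.
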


In Fig. \ref{fig:FRRDST}(a), the red lines show the frequency spectra of the newly updated rows ($k=0,3,5,7$) in the RDST ($M=8$) and the dashed gray lines show those of the corresponding rows in the DST (the rest frequency spectra of the RDST are identical to those of the DST). The frequency spectra of the RDST approximate those of the original DST, but decay at zero frequency.
\begin{figure}[t]
        \begin{center}
        \begin{subfigure}{0.5\linewidth}
        \centering
                \scalebox{0.2}{\includegraphics[keepaspectratio=true]{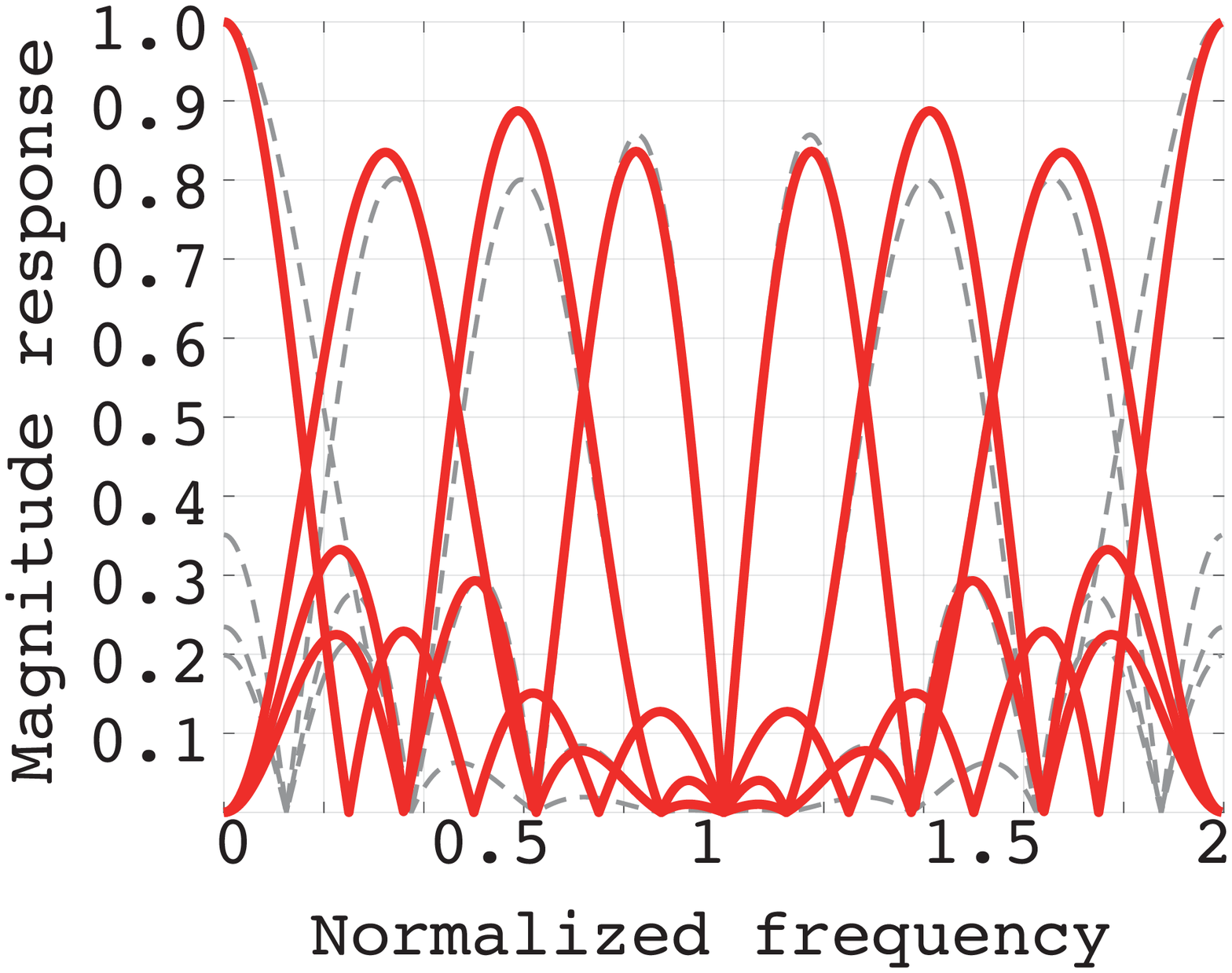}}
                \caption{}
        \end{subfigure}%
                \begin{subfigure}{0.5\linewidth}
        \centering
                \scalebox{0.2}{\includegraphics[keepaspectratio=true]{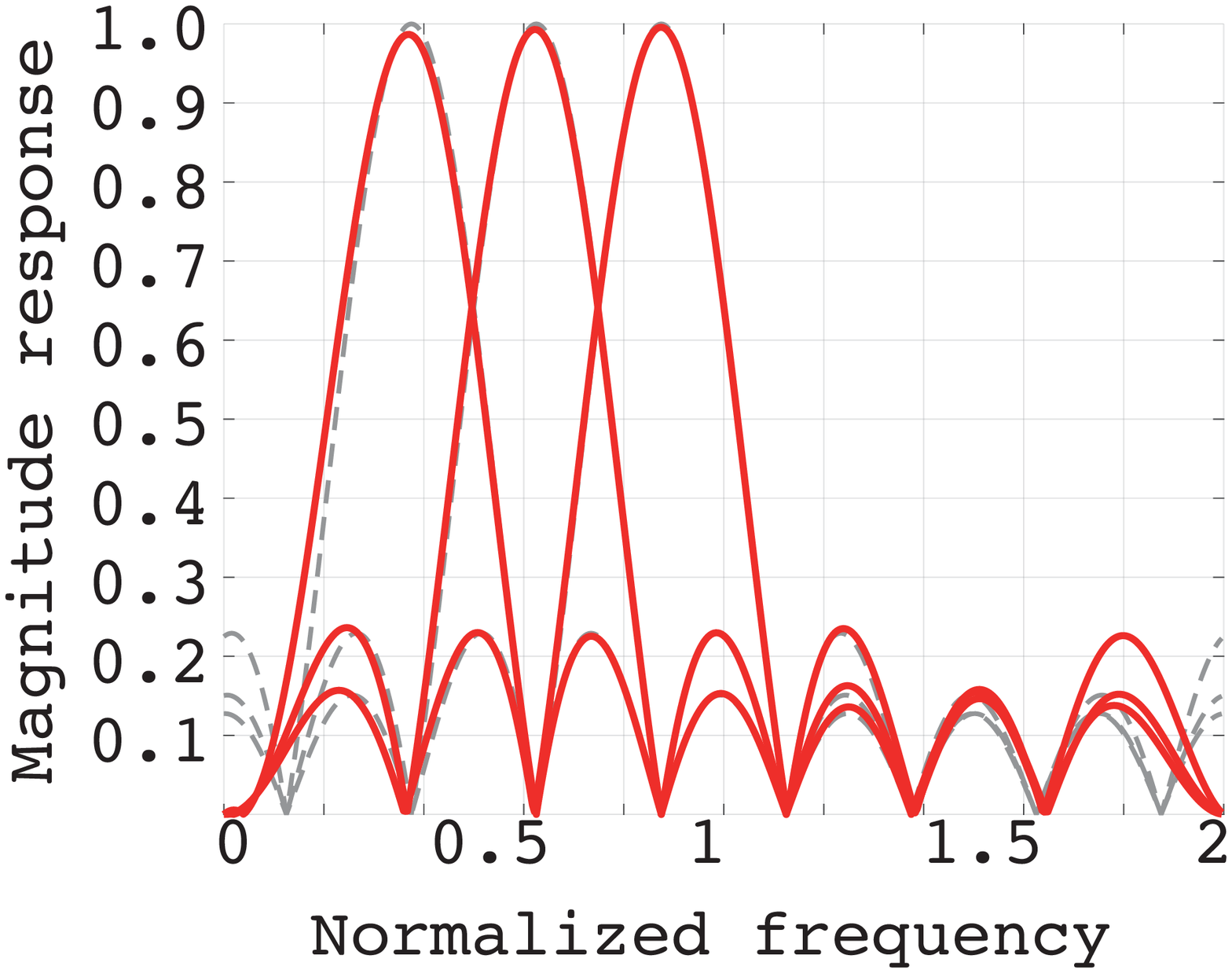}}
                                \caption{}
        \end{subfigure}%
        \end{center}
                               \caption{Frequency spectra (frequency: $[0, 2\pi]$, $M=8$): (a) red lines: $\mathcal{F}[[\mathbf{F}^{(\mathrm{RS})}]_{k,\cdot}]$, dashed gray lines: $\mathcal{F}[[\mathbf{F}^{(\mathrm{S})}]_{k,\cdot}]$ ($k=0,3,5,7$), (b) red lines: $\mathcal{F}[[\mathbf{F}^{(\mathrm{C})}]_{k,\cdot}]+j\mathcal{F}[[\mathbf{F}^{(\mathrm{RS})}]_{k,\cdot}]$, dashed gray lines: $\mathcal{F}[[\mathbf{F}^{(\mathrm{C})}]_{k,\cdot}]+j\mathcal{F}[[\mathbf{F}^{(\mathrm{S})}]_{k,\cdot}]$ ($k=3,5,7$).}
                \label{fig:FRRDST}
\end{figure}
\subsection{Implementation of RDST}\label{subsec:imp_rdst}
From Proposition \ref{prop:RDST}, the $\frac{M}{2}$ rows of the RDST $\mathbf{F}^{(\mathrm{RS})}\in \mathbb{R}^{M\times M}$ are the same as the rows of the original DST $\mathbf{F}^{(\mathrm{S})}\in\mathbb{R}^{M\times M}$, and both matrices are orthogonal. Thus, we can derive that the RDST can be implemented by the cascade of the DST and an orthogonal matrix as in the following. 

Let $\mathbf{F}^{(\mathrm{S,e})},\ \mathbf{F}^{(\mathrm{S,o})} \in \mathbb{R}^{\frac{M}{2} \times M}$ be the even and odd rows of the $\mathbf{F}^{(\mathrm{S})} \in \mathbb{R}^{M \times M}$, respectively. Then, the RDST $\mathbf{F}^{(\mathrm{RS})} \in \mathbb{R}^{M \times M}$ can be expressed as:
\begin{align}
\mathbf{F}^{(\mathrm{RS})} =&\  \mathbf{P}^{(\mathrm{I\hspace{-.1em}I\hspace{-.1em}I})}\begin{bmatrix}
\mathbf{F}^{(\mathrm{S,e})} \\
\widetilde{\mathbf{F}}^{(\mathrm{S,o})}
\end{bmatrix} = \mathbf{P}^{(\mathrm{I\hspace{-.1em}I\hspace{-.1em}I})}\mathrm{diag}(\mathbf{I}_{\frac{M}{2}},\mathbf{\Gamma}_{\frac{M}{2}}) 
\begin{bmatrix}
\mathbf{F}^{(\mathrm{S,e})} \\
\mathbf{F}^{(\mathrm{S,o})}
\end{bmatrix} \nonumber\\ =&\  \mathbf{P}^{(\mathrm{I\hspace{-.1em}I\hspace{-.1em}I})}\mathrm{diag}(\mathbf{I}_{\frac{M}{2}},\mathbf{\Gamma}_{\frac{M}{2}})  \mathbf{P}^{(\mathrm{I\hspace{-.1em}V})} {\mathbf{F}}^{(\mathrm{S})}, 
\end{align}
where $\mathbf{P}^{(\mathrm{I\hspace{-.1em}I\hspace{-.1em}I})},\  \mathbf{P}^{(\mathrm{I\hspace{-.1em}V})} \in \mathbb{R}^{M \times M}$ are the permutation matrices, and the matrix $\mathbf{\Gamma}_{\frac{M}{2}}$ is guaranteed to be an orthogonal matrix because of orthogonality of the RDST and the DST. Since $\mathbf{\Gamma}_{\frac{M}{2}}$ is an orthogonal matrix, it can be factorized into $\frac{M(M-2)}{8}$ rotation matrices. Thus, the RDST is still a hardware-friendly transform that can be implemented by the $\mathbf{F}^{(\mathrm{C})}$ with some trivial operations.
\subsection{Design of RDADCF}\label{subsec:DDCF_DTMRD}
Finally, a RDADCF $\mathbf{F}^{(\mathrm{RD})}$ is defined using the RDST as follows.
\begin{defin}
Let $\mathbf{F}^{(\mathrm{RD})}  \in \mathbb{R}^{2M^2\times M^2}$ be the analysis operator of the RDADCF defined as:
\begin{align} 
\label{eq:RDDCF}
\mathbf{F}^{(\mathrm{RD})}  := &\ 
\mathbf{P}^{(\mathrm{V})\top}
\mathbf{W}^{(\mathrm{I\hspace{-.1em}I})} 
 \mathbf{P}^{(\mathrm{V})}
\begin{bmatrix}
\mathbf{F}^{(\mathrm{C})} \otimes \mathbf{F}^{(\mathrm{C})} \\
\mathbf{F}^{(\mathrm{RS})} \otimes \mathbf{F}^{(\mathrm{RS})}
\end{bmatrix},\nonumber\\
\mathbf{W}^{(\mathrm{I\hspace{-.1em}I})} =& \ 
\mathrm{diag}\left(\frac{1}{\sqrt{2}}\mathbf{I}_{4M-4},\frac{1}{2}
\begin{bmatrix}
  \mathbf{I}_{(M-2)^2} & -\mathbf{I}_{(M-2)^2}\\
 \mathbf{I}_{(M-2)^2} & \mathbf{I}_{(M-2)^2}
\end{bmatrix}
\right),
\end{align}
where $\mathbf{P}^{(\mathrm{V})} \in \mathbb{R}^{2M^2\times 2M^2}$ is a permutation matrix. $\mathbf{P}^{(\mathrm{V})}$ places the $4M-4$ DCT and DST coefficients associated with the subband indices $k_v \in \{ 0,1 \}$ or $k_h \in \{ 0,1 \}$ to the first part, and the other $2(M-2)^2$ coefficients to the last (see Fig. \ref{fig:2DRDDCF1}(a)). Due to the orthogonality of the RDST, the RDADCF clearly forms a Parseval block frame, i.e., $\mathbf{F}^{(\mathrm{RD})\top} \mathbf{F}^{(\mathrm{RD})} = \mathbf{I}_M$.
\end{defin}

Now, we discuss the capability of the directional subband decomposition based on the DCT and the RDST. Let $\mathbf{F}^{(\mathrm{C})},\ \mathbf{F}^{(\mathrm{RS})} \in \mathbb{R}^{M\times M}$ be the DCT and the RDST matrices. As discussed in Section \ref{sec:analMD}, the complex combination $[\mathbf{F}^{(\mathrm{C})}]_{k,\cdot}\pm j [\mathbf{F}^{(\mathrm{RS})}]_{k,\cdot}$ should have a one-sided frequency spectrum for directional subband decomposition. In the case of even $k\ (\geq 2)$, the rows of $[\mathbf{F}^{(\mathrm{RS})}]_{k,\cdot}$ are identical to those of the DST. Therefore, the frequency spectrum $[\mathbf{F}^{(\mathrm{C})}]_{k,\cdot}\pm j [\mathbf{F}^{(\mathrm{RS})}]_{k,\cdot}$ is one-sided. In the case of odd $k\ (\geq 3)$, where the rows $[\mathbf{F}^{(\mathrm{RS})}]_{k,\cdot}$ are newly designed in Algorithm \ref{alg:RDST}, the frequency spectrum $[\mathbf{F}^{(\mathrm{C})}]_{k,\cdot}\pm j [\mathbf{F}^{(\mathrm{RS})}]_{k,\cdot}$ can be one-sided  (Fig. \ref{fig:FRRDST}(b)). 
 
Analyticity of the RDADCF can be explained as follows. Let $\{\mathbf{s}_{\ell}^{(s)}\}$ and $\{\mathbf{s}_{\ell}^{(r)}\}$ be the rows of the DST and the RDST, respectively. For any odd $k$ ($\geq 3$), $\mathbf{s}_{k}^{(r)}$ can be obtained by applying orthogonal projection to $\mathbf{s}_{k}^{(s)}$ onto the orthogonal complement of $\{ \mathbf{s}_{\ell}^{(s)} \}_{\Omega_{M-1}\setminus \{k\}}$, as
\begin{align}
\mathbf{s}_{k}^{(r)} = \frac{\pm 1}{\eta_k}\left(\mathbf{s}_{k}^{(s)} - \sum_{\ell \in \Omega_{M-1}\setminus \{k\}} \langle \mathbf{s}_{\ell}^{(r)} , \mathbf{s}_{k}^{(s)} \rangle \mathbf{s}_{\ell}^{(r)}\right),
\end{align} 
where $\eta_k$ is the normalization factor for $\mathbf{s}_{k}^{(r)}$ having unit norm. Let $ [\mathbf{F}^{(\mathrm{W})}]_{k,n} = \frac{1}{\sqrt{M}}e^{-j\frac{2\pi}{M}kn}$ denote the DFT. Because $\mathbf{F}^{(\mathrm{W})}\mathbf{s}_{m}^{(r)}$ and $\mathbf{F}^{(\mathrm{W})}\mathbf{s}_{k}^{(r)}$ have different passbands, $|\langle \mathbf{s}_{m}^{(r)}, \mathbf{s}_{k}^{(r)} \rangle| = |\langle \mathbf{F}^{(\mathrm{W})}\mathbf{s}_{m}^{(r)},\mathbf{F}^{(\mathrm{W})}\mathbf{s}_{k}^{(r)} \rangle|$ is small. Therefore, the spectrum of the $\mathbf{s}_{k}^{(r)}$ can approximate $\mathbf{s}_{k}^{(s)}$ over the passband of $\mathbf{s}_{k}^{(s)}$.

The atoms of the RDADCF lie along the $2(M-2)^2$ frequency directions, as shown in Figs. \ref{fig:2DRDDCF1}(b) and (c), where ${C}^{(k_v,k_h)}_{n_v,n_h}=[\mathbf{F}^{(\mathrm{C})}]_{k_v,n_v}[\mathbf{F}^{(\mathrm{C})}]_{k_h,n_h}$, ${S}^{(k_v,k_h)}_{n_v,n_h}=[\mathbf{F}^{(\mathrm{RS})}]_{k_v,n_v}[\mathbf{F}^{(\mathrm{RS})}]_{k_h,n_h}$, and ${B}^{(k_v,k_h,\pm 1)}_{n_v,n_h}={C}^{(k_v,k_h)}_{n_v,n_h}\pm {S}^{(k_v,k_h)}_{n_v,n_h}$. The number of directional selectivities of the RDADCF is slightly less than the original DADCF. Since the RDADCF with $M=2$ cannot ensure directional selectivity, we recommend $M=2^m$ where $m \geq 2$. 
\begin{figure}[t]
\centering
  \begin{subfigure}{1\linewidth}
  \centering
   \scalebox{0.52}{\includegraphics[keepaspectratio=true]{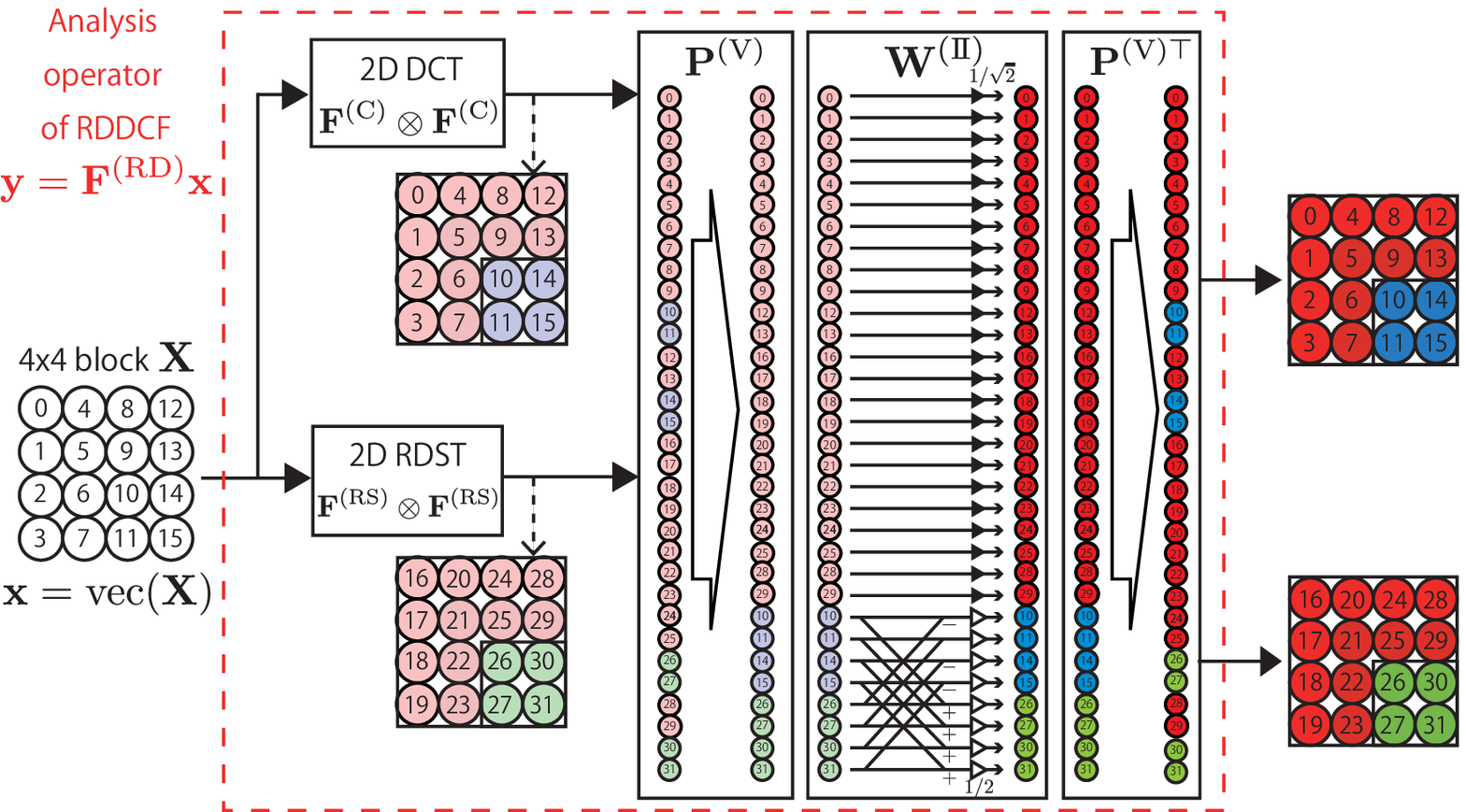}}
   \caption{}
   \end{subfigure}
   \\
        \begin{subfigure}{0.45\linewidth}
        \centering
                \scalebox{0.4}{\includegraphics[keepaspectratio=true]{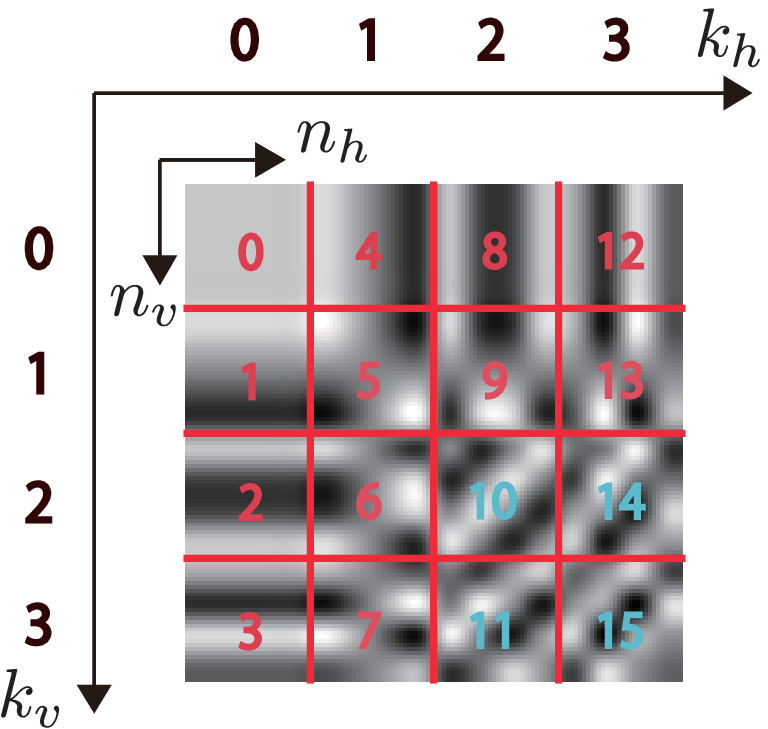}}
                \caption{${C}_{n_v,n_h}^{(k_v,k_h, 1)}$, ${B}_{n_v,n_h}^{(k_v,k_h, -1)}$}
        \end{subfigure}%
                \begin{subfigure}{0.45\linewidth}
        \centering
                \scalebox{0.4}{\includegraphics[keepaspectratio=true]{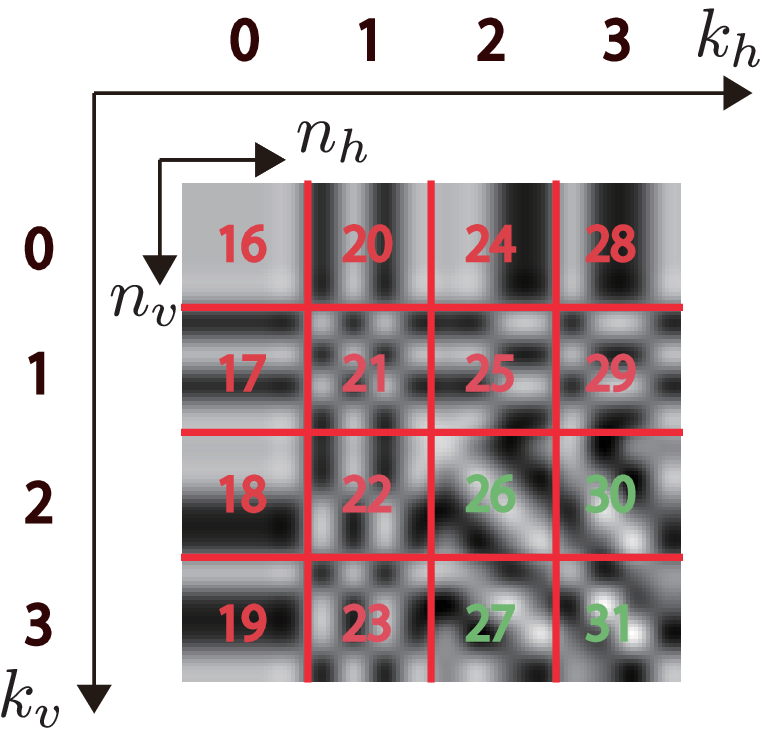}}
                                \caption{${S}_{n_v,n_h}^{(k_v,k_h, 1)}$, ${B}_{n_v,n_h}^{(k_v,k_h, 1)}$}
        \end{subfigure}%
\caption{(a) Procedure of the RDADCF ($M=4$). (b) and (c): Atoms ${C}_{n_v,n_h}^{(k_v,k_h)}$, ${S}_{n_v,n_h}^{(k_v,k_h)}$ (red), and ${B}_{n_v,n_h}^{(k_v,k_h, \pm 1)}$ (blue and green) in the RDADCF. The numbers indicate the rightmost subband indices in (a).}
\label{fig:2DRDDCF1}
\vspace{-0.3cm}
\end{figure}
\section{Experimental results}
\label{sec:experiments}
We evaluated the performance of the proposed DADCF pyramid (Section \ref{subsec:DC}) and RDADCF in compressive image sensing reconstruction \cite{Afonso2011}, as an example of image processing applications. $512\times 512$ pixel images in Fig. \ref{fig:expim} were used as the test set. Each incomplete observation ($\mathbf{y} = \mathrm{vec}(\mathbf{Y})  \in \mathbb{R}^{512^2}$) is obtained by Noiselet transform \cite{COIFMAN2001} ($\mathbf{\Phi} \in \mathbb{R}^{512^2 \times 512^2}$) followed by random sampling of 30\%, 40\%, 50\%, and 60\% pixels ($\mathbf{R}_{\mathrm{samp}} \in \mathbb{R}^{\mathcal{R}(512^2p) \times 512^2}$ where $\mathcal{R}$ is the rounding operator and $p = 0.3,\ 0.4,\  0.5,\ 0.6$) in the presence of additive white Gaussian noise ($\mathbf{n} \in \mathbb{R}^{512^2}$) with the standard derivation $\sigma =0.1$ as $\mathbf{y} = \mathbf{R}_{\mathrm{samp}}\mathbf{\Phi}\mathbf{x} + \mathbf{n}$, ($\mathbf{x}=\mathrm{vec}(\mathbf{X})  \in \mathbb{R}^{512^2}$). Figs. \ref{fig:expim}(a)--(d) indicate the estimated latent images by using the Moore-Penrose pseudo inverse of $\widetilde{\mathbf{\Phi}}^\dag=\mathbf{\Phi}^\top\mathbf{R}_{\mathrm{samp}}^\top$ ($\widetilde{\mathbf{\Phi}}=\mathbf{R}_{\mathrm{samp}}\mathbf{\Phi}$) in the case of $p = 0.3$. 

Up to now, many block transform-based methods for image recovery have been proposed, such as BM3D, patch-based redundant DCT approaches, and so on \cite{Danielyan2012, Fujita2015, Wang2017}.  For fair comparison, we simply evaluate directional block transforms in two image recovery problems: 
\begin{itemize}
\item{Problem 1:} image recovery based on sparsity of block-wise transformed coefficients.\\
\item{Problem 2:} image recovery based on sparsity of block-wise transformed coefficients and weighted total variation (WTV) for block boundaries as presented in \cite{Alter2004}. 
\end{itemize}
The cost function for these two problems is described as follows:
\begin{align} 
\label{eq:opt}
\mathbf{x}^{\star} =\argmin_{\mathbf{x} \in \mathbb{R}^{512^2}}& \|\mathbf{F}\mathbf{P}_{\mathrm{v2bv}}{\mathbf{x}}\|_1 + \rho \|\widetilde{\mathbf{W}}_{\mathrm{b}}\mathbf{D}_{\mathrm{hv}}\mathbf{x}\|_{1,2} \nonumber\\ &+ \iota_{C_{[0, 1]}} (\mathbf{x})+ F_{\mathbf{y}}(\widetilde{\mathbf{\Phi}}\mathbf{x}),
\end{align}
where $\mathbf{P}_{\mathrm{v2bv}}$ is the permutation matrix permuting a vectorized version of a matrix to a block-wise-vectorized one $\mathbf{P}_{\mathrm{v2bv}}{\mathbf{x}} = \mathrm{bvec}(\mathbf{X})$,  $\mathbf{F} = \mathbf{I}^{(\mathrm{V})}\otimes \mathbf{F}^{(\mathrm{DP})}$ or $\mathbf{F} = \mathbf{I}^{(\mathrm{V})}\otimes  \mathbf{F}^{(\mathrm{RD})}$ ($\mathbf{I}^{(\mathrm{V})} = \mathbf{I}_{512^2/M^2}$), and $\iota_{A}(\mathbf{x})$ is the indicator function\footnote{Indicator function of set $A$ is defined as $\iota_{A}(\mathbf{x}) = 0,\ (\mathbf{x} \in A), \quad \iota_{A}(\mathbf{x}) = \infty,\ (\mathbf{x} \notin A)$.} of a set $A$. $C_{[0, 1]}$ is the set of vectors whose entries are within $[0, 1]$. The data-fidelity function was set as $F_{\mathbf{y}} = \iota_{\mathcal{B}(\mathbf{y},\epsilon)}$ ($\mathcal{B}(\mathbf{y},\epsilon) := \{ \mathbf{x} \in \mathbb{R}^{M} | \| \mathbf{x} - \mathbf{y}\|_2 \leq \epsilon\}$) is the indicator function defined by the $\ell_2$-norm ball. The radius was set as $\epsilon = \|\mathbf{x}_o-\mathbf{y}\|_2$, where $\mathbf{x}_o$ is an original image. $\mathbf{D}_{\mathrm{hv}} = \begin{bmatrix} \mathbf{D}_{\mathrm{v}}^\top & \mathbf{D}_{\mathrm{h}}^\top \end{bmatrix}^\top \in \mathbb{R}^{(2\cdot 512^2) \times 512^2}$ denotes the vertical and horizontal difference operator. $\widetilde{\mathbf{W}}_b = \mathbf{I}^{(\mathrm{V})}\otimes \mathbf{W}_b$, where $\mathbf{W}_b \in \mathbb{R}^{M\times M}$ is the weighting matrix for block boundary as $[\mathbf{W}_b]_{m,n} = 0$ $(1\leq m,n \leq M-2)$,  $[\mathbf{W}_b]_{m,n} = 1$ (otherwise)
.  The cost functions with $\rho = 0$ and $\rho = 1$ correspond to Problem 1 and 2, respectively. The detailed algorithm used in the experiments is given in Appendix \ref{sec:AIR}. 

For comparison, we also used the DCT, the DFT, and the DHT in \eqref{eq:opt}. The block size is set to $M = 8,\ 16,\ 32$.

Fig. \ref{fig:expres} shows the resulting images of the proposed and conventional transforms obtained in the case of sampling rate $p = 0.4$. As these figures show (particularly in the dashed red boxes), the DCT cannot recover directional textures precisely. Table \ref{tab:error} shows the numerical results. In most cases, the DADCF pyramid or the RDADCF outperformed the DCT, the DFT, and the DHT in terms of the reconstruction error (PSNR). The RDADCF recovers the images better than the DADCF pyramid, especially for \textit{Monarch} and \textit{Parrot} (smooth images),  due to its regularity property. In fact, the DCT is superior to the DADCF pyramid and the RDADCF in some cases. However, since the DADCF pyramid and the RDADCF are compatible with the DCT, we can select the DCT, the DADCF pyramid, or the RDADCF by using or bypassing the DST/RDST and the SAP operations, depending on the input image.
\begin{figure}[t]
        \centering
                \begin{subfigure}{0.24\linewidth}
        \centering
                \scalebox{0.188}{\includegraphics[keepaspectratio=true]{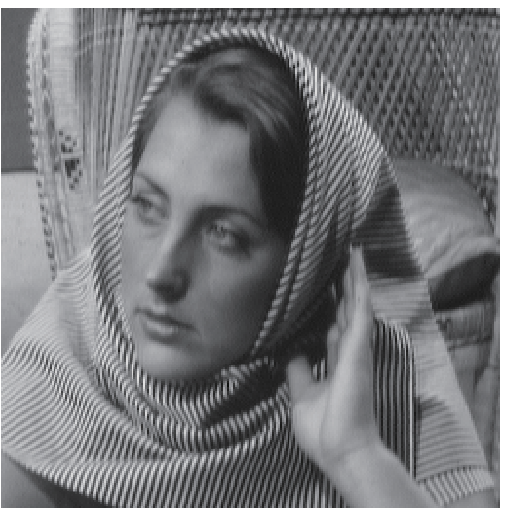}}
                \scalebox{0.188}{\includegraphics[keepaspectratio=true]{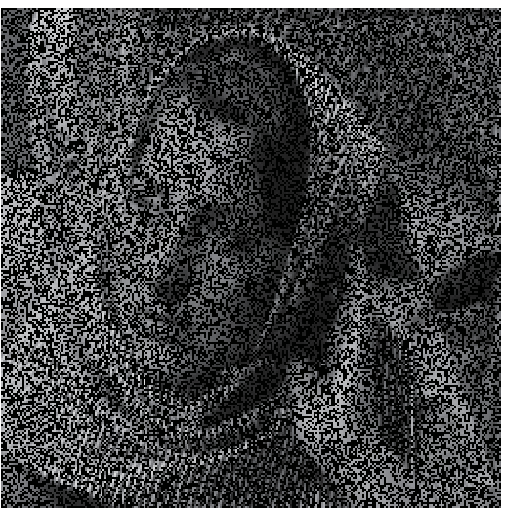}}
                \caption{\textit{Barbara}}
                \label{fig:texture}
        \end{subfigure}
                \begin{subfigure}{0.24\linewidth}
        \centering
                \scalebox{0.188}{\includegraphics[keepaspectratio=true]{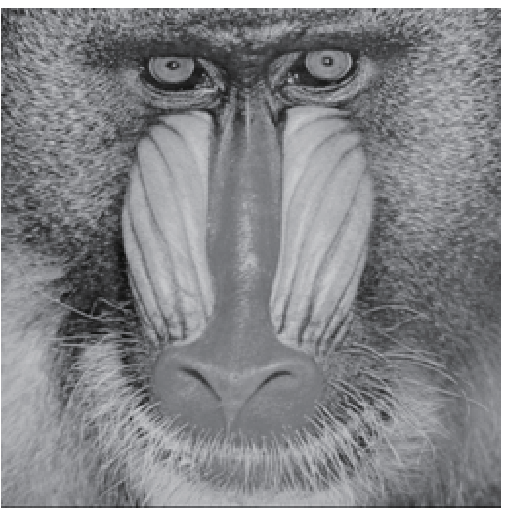}}
                \scalebox{0.188}{\includegraphics[keepaspectratio=true]{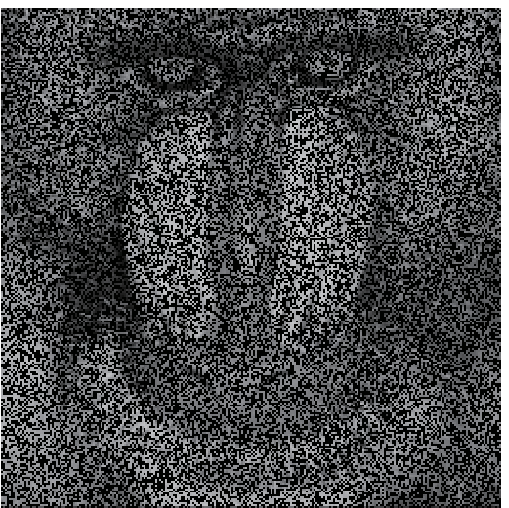}}        
                \caption{\textit{Mandrill}}
                \label{fig:house}
        \end{subfigure}
                \centering
                \begin{subfigure}{0.24\linewidth}
        \centering
                \scalebox{0.106}{\includegraphics[keepaspectratio=true]{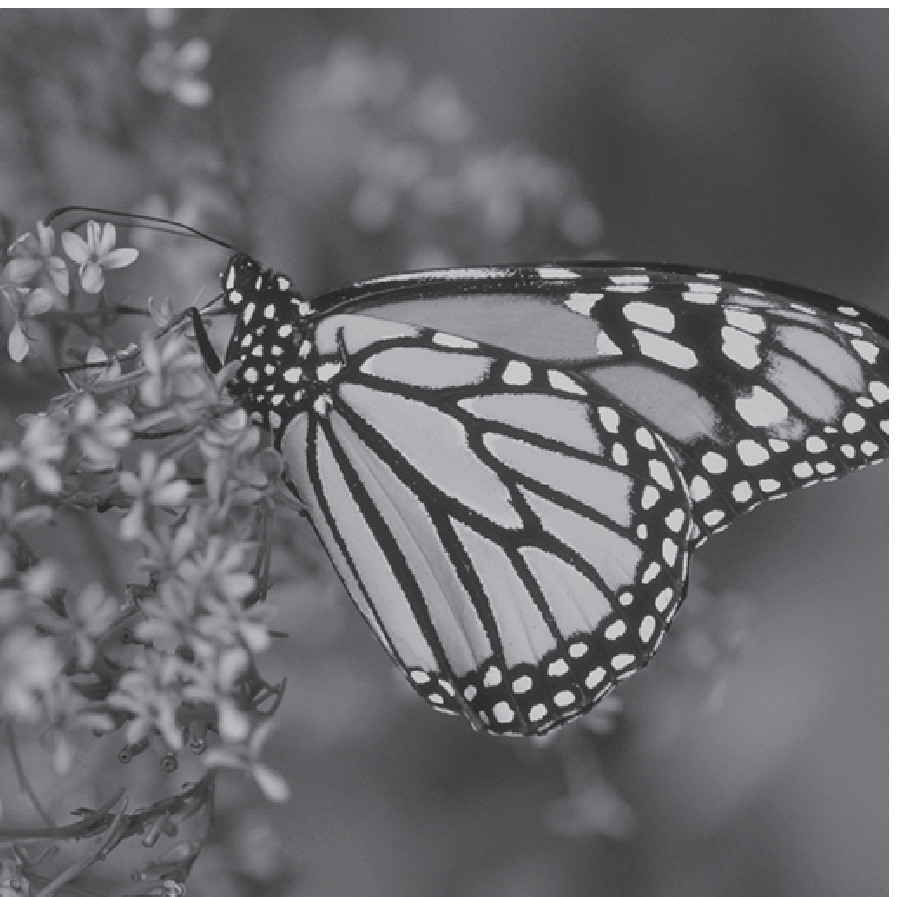}}
                \scalebox{0.106}{\includegraphics[keepaspectratio=true]{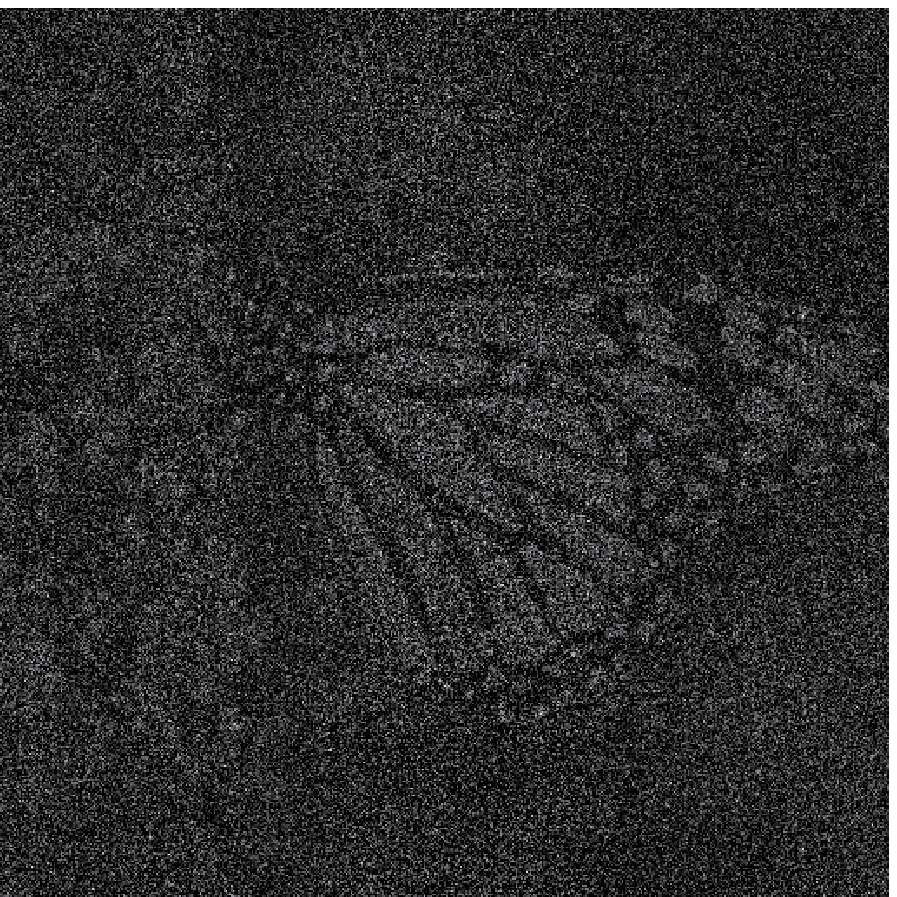}}
                \caption{\textit{Monarch}}
                \label{fig:texture}
        \end{subfigure}
                \centering
                \begin{subfigure}{0.24\linewidth}
        \centering
                \scalebox{0.106}{\includegraphics[keepaspectratio=true]{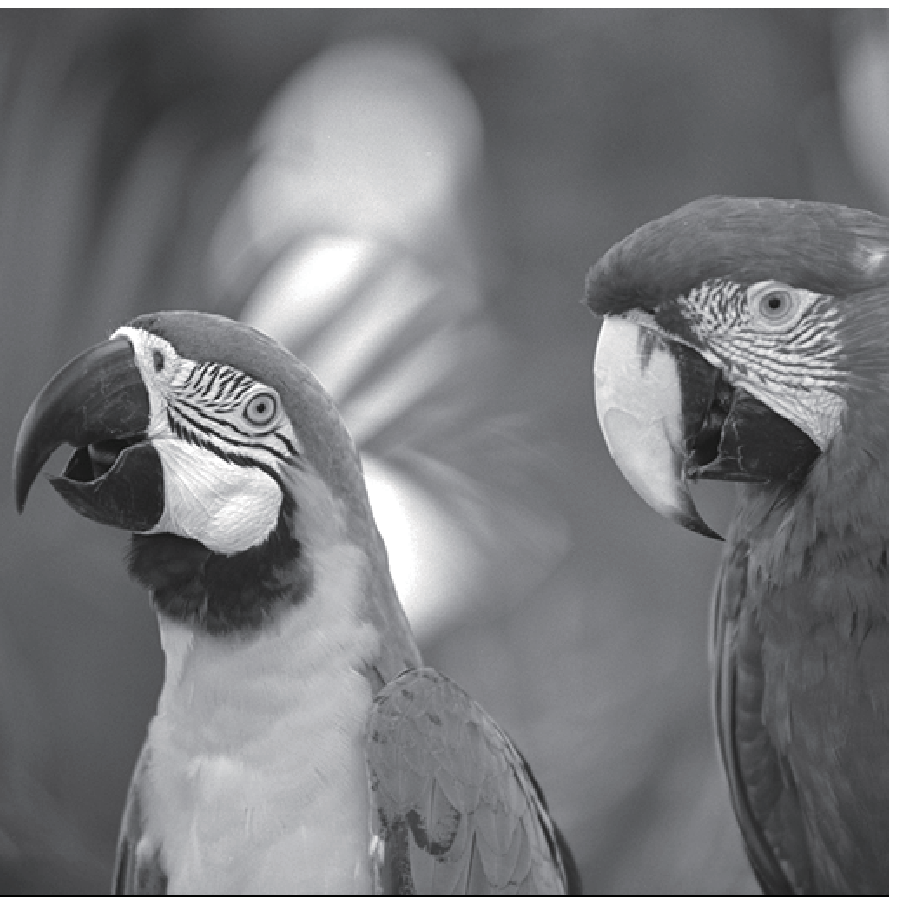}}
		 \scalebox{0.106}{\includegraphics[keepaspectratio=true]{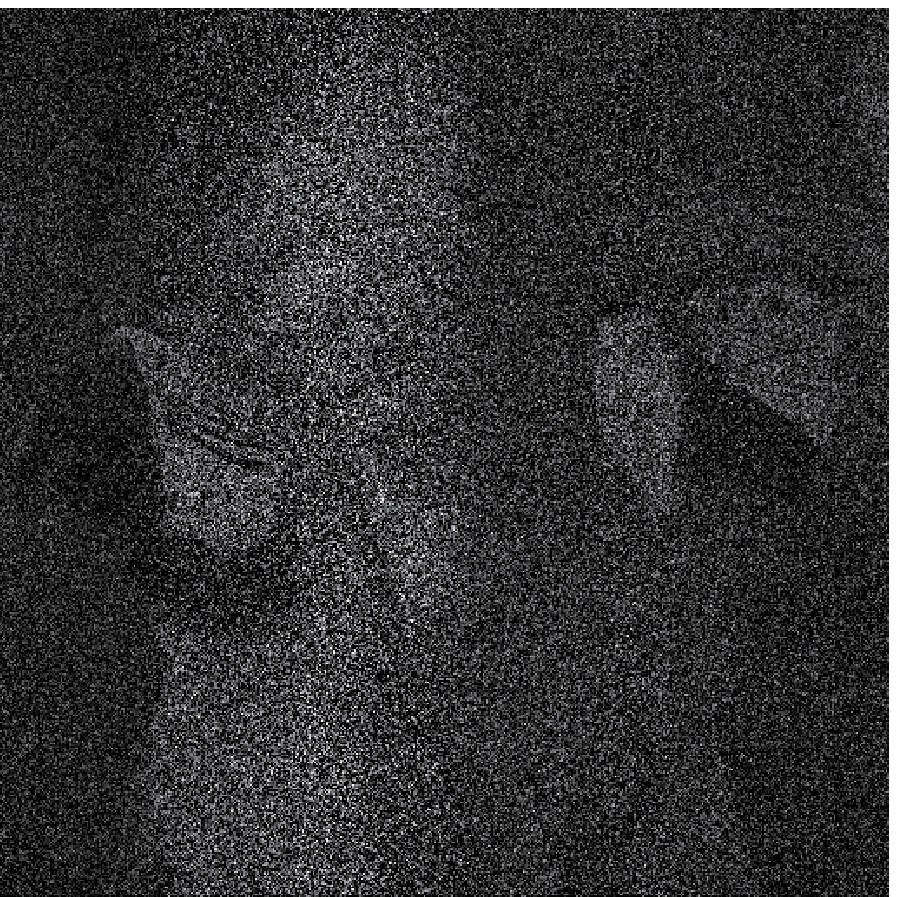}}
                \caption{\textit{Parrot}}
                \label{fig:house}
        \end{subfigure}
        \caption{(a)--(d): Original images  (256 $\times$ 256) and  recovered images by $\widetilde{\Phi}^\dag$ (sampling rate $p = 0.3$).}\label{fig:expim}
        \vspace{-0.2cm}
\end{figure}
\begin{figure}[t]
	\centering
	\begin{subfigure}{0.45\linewidth}
		\centering
               	\scalebox{0.4}{\includegraphics[keepaspectratio=true]{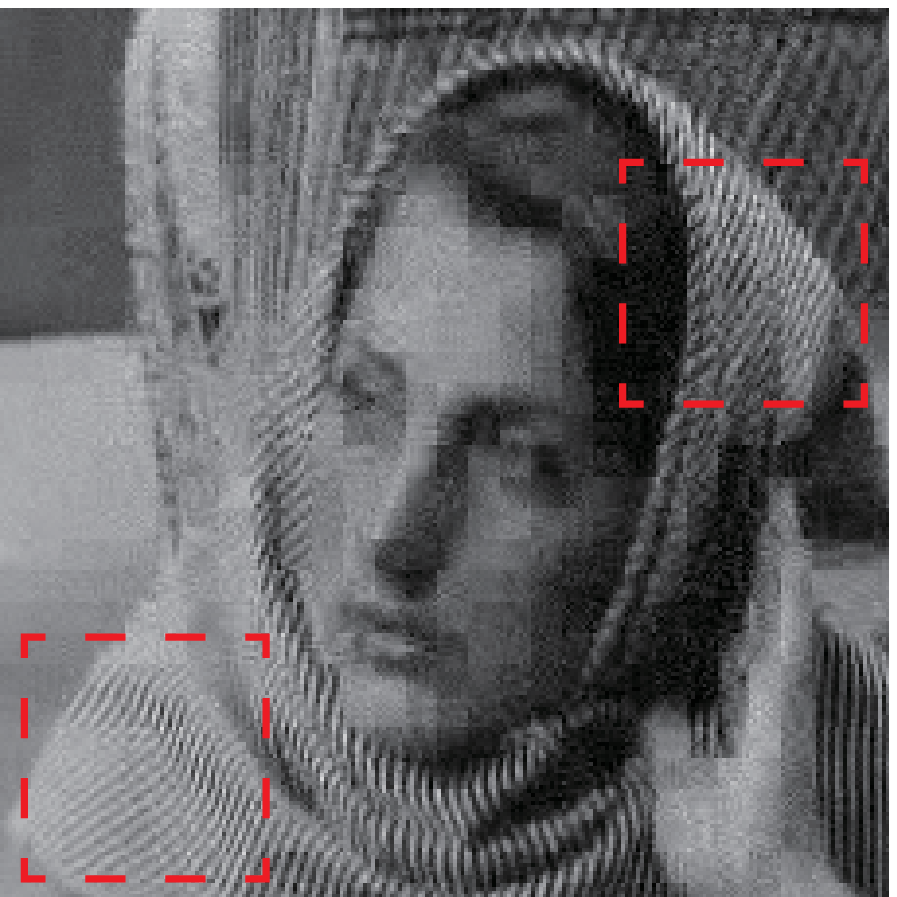}}
               	 \caption{DCT}
                	\label{fig:house}
       	\end{subfigure}
        \begin{subfigure}{0.45\linewidth}
        		\centering
                	\scalebox{0.4}{\includegraphics[keepaspectratio=true]{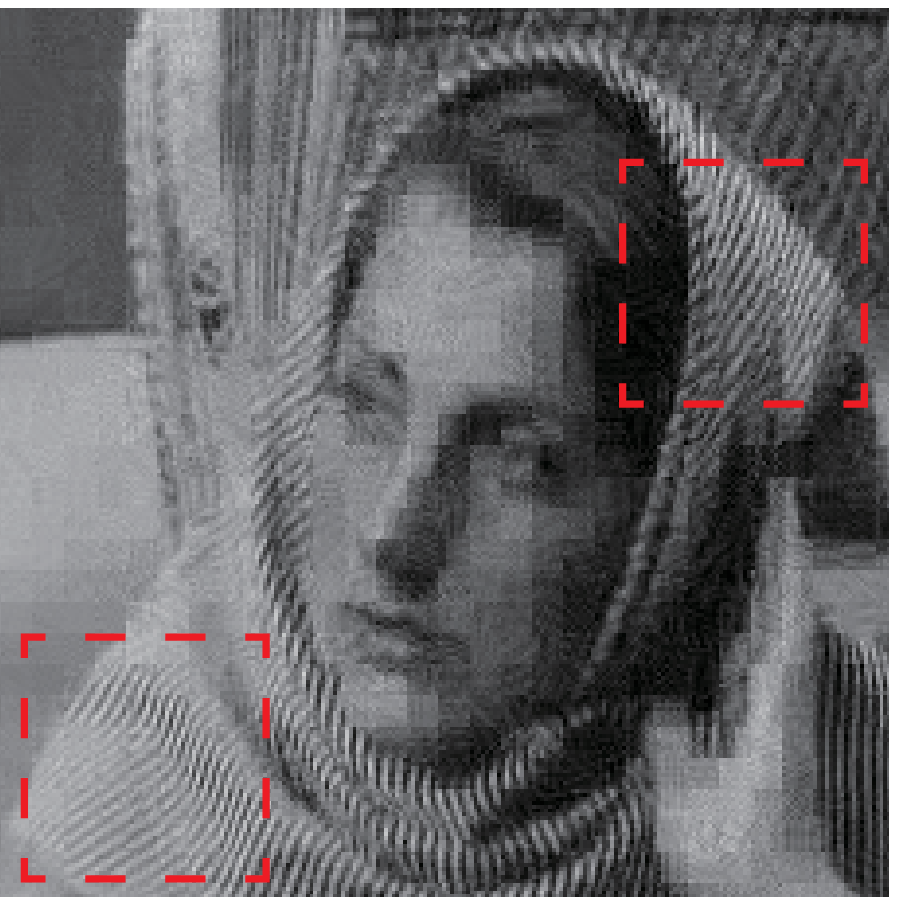}}
                	\caption{DFT}
                	\label{fig:barbara}
        \end{subfigure}
        \\
	\centering
                \begin{subfigure}{0.45\linewidth}
		\centering
                \scalebox{0.4}{\includegraphics[keepaspectratio=true]{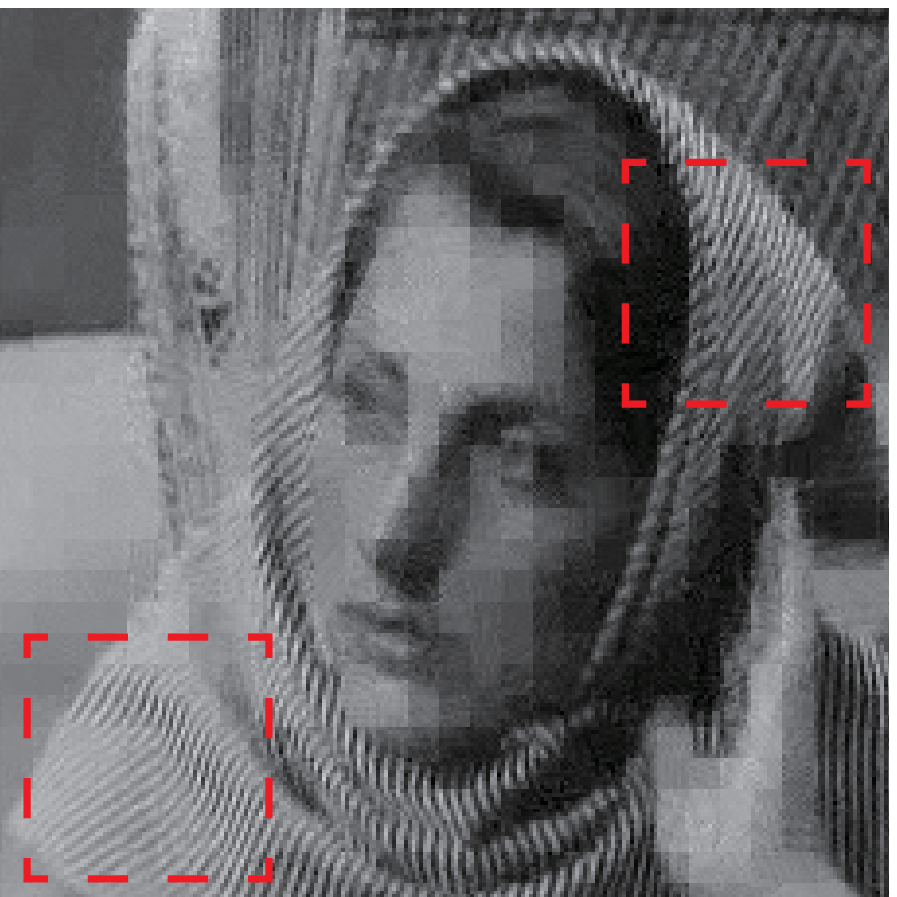}}
                \caption{DADCF}
                \label{fig:texture}
        \end{subfigure}
                \begin{subfigure}{0.45\linewidth}
        \centering
                \scalebox{0.4}{\includegraphics[keepaspectratio=true]{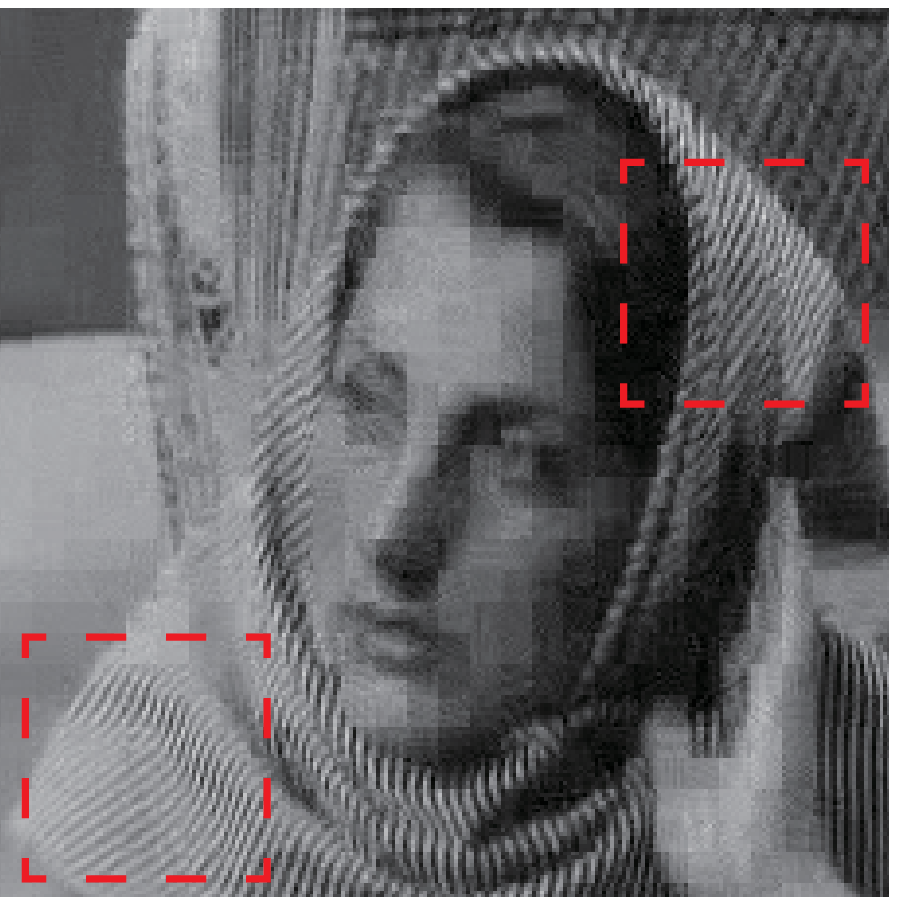}}
                \caption{RDADCF}
                \label{fig:house}
        \end{subfigure}
\\
        	\centering
	\begin{subfigure}{0.45\linewidth}
		\centering
               	\scalebox{0.4}{\includegraphics[keepaspectratio=true]{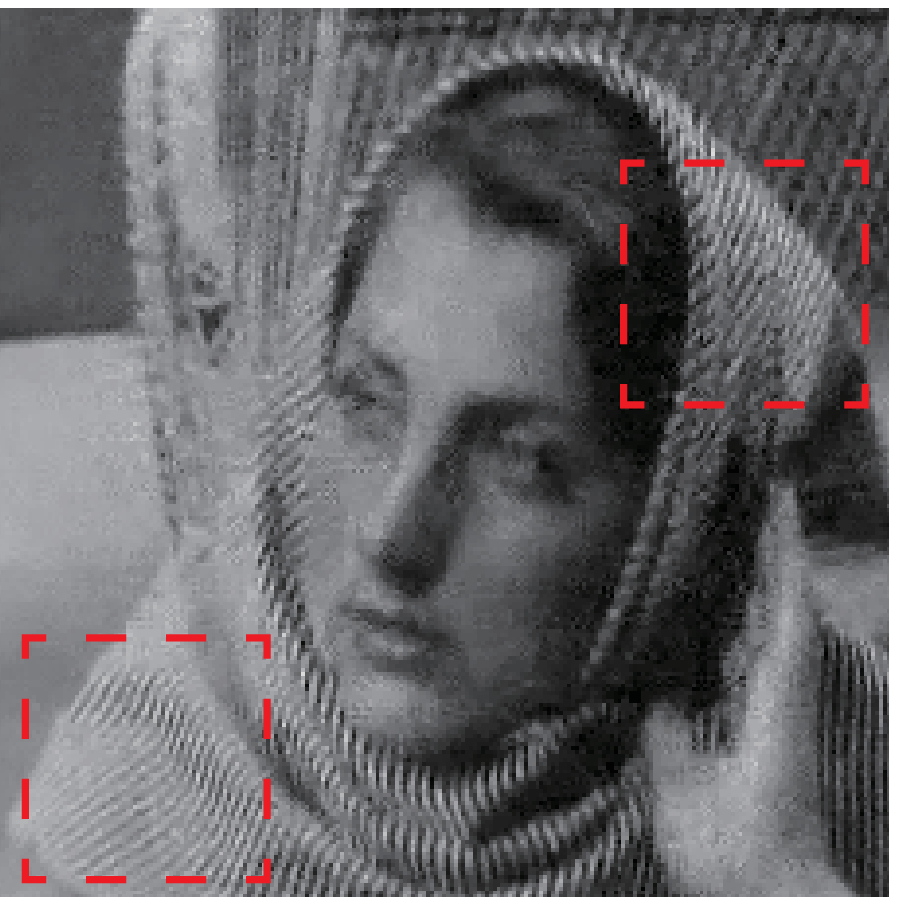}}
               	 \caption{DCT+WTV}
                	\label{fig:house}
       	\end{subfigure}
        \begin{subfigure}{0.45\linewidth}
        		\centering
                	\scalebox{0.4}{\includegraphics[keepaspectratio=true]{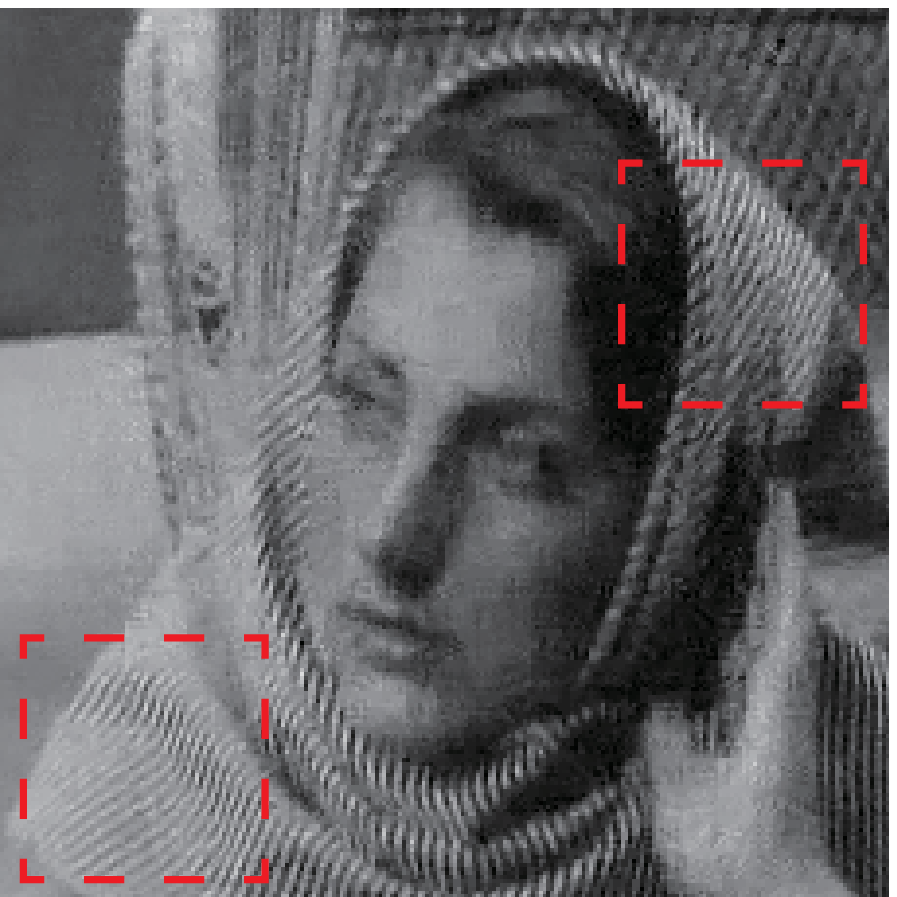}}
                	\caption{DFT+WTV}
                	\label{fig:barbara}
        \end{subfigure}
        \\
	\centering
                \begin{subfigure}{0.45\linewidth}
		\centering
                \scalebox{0.4}{\includegraphics[keepaspectratio=true]{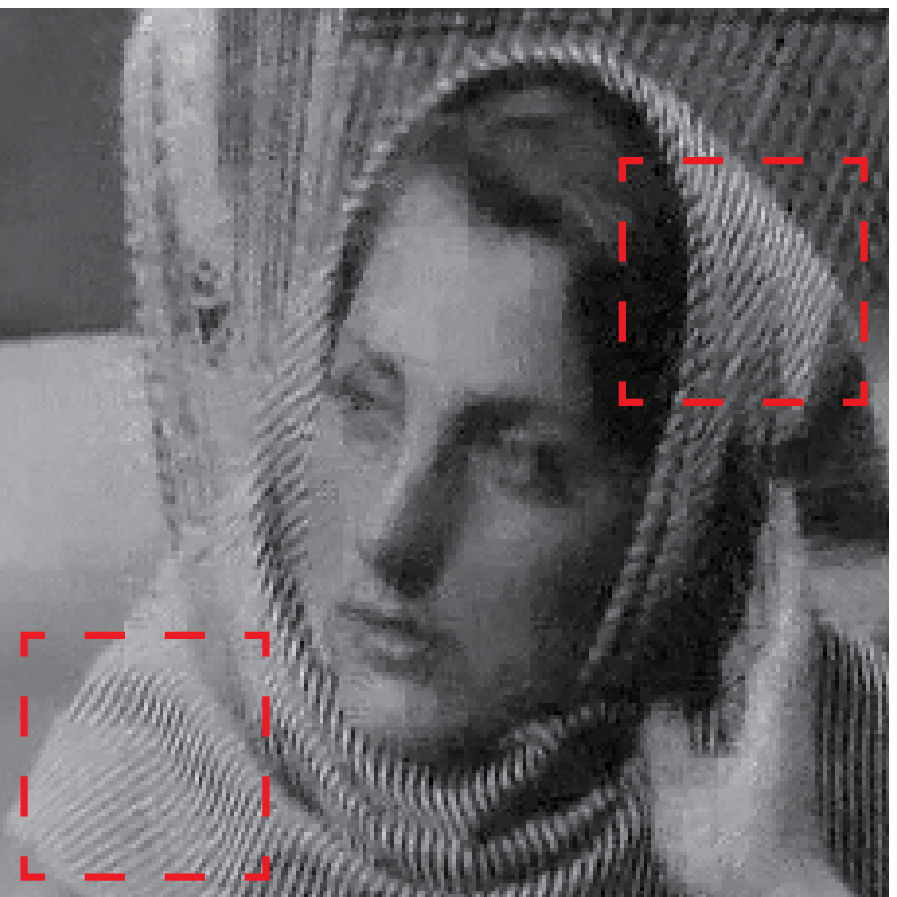}}
                \caption{DADCF+WTV}
                \label{fig:texture}
        \end{subfigure}
                \begin{subfigure}{0.45\linewidth}
        \centering
                \scalebox{0.4}{\includegraphics[keepaspectratio=true]{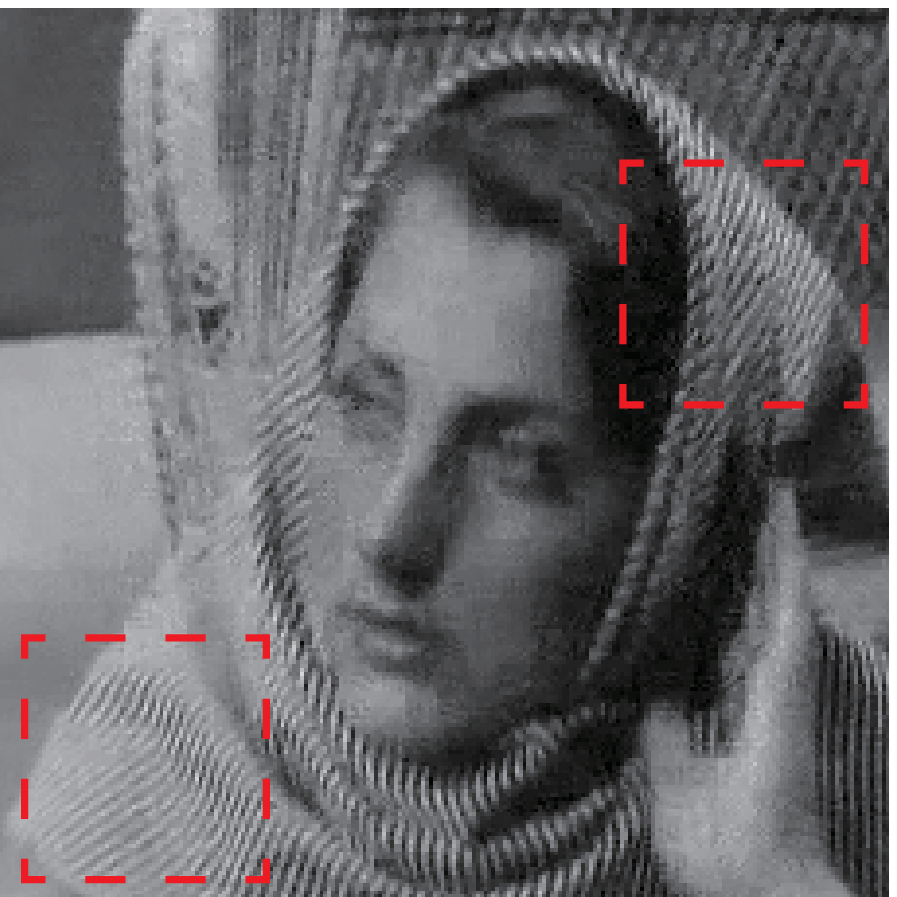}}
                \caption{RDADCF+WTV}
                \label{fig:house}
        \end{subfigure}
                  \caption{Zoomed resulting images reconstructed from 60\% noiselet coefficients ((a)--(d)) and 30\% noiselet coefficients ((e)--(h)). The size and the decomposition level of the transforms is $M=8$ and $J=2$, respectively.}\label{fig:expres}
        \vspace{-0.2cm}
\end{figure}
\begin{table*}[t]
\caption{Numerical results of compressive sensing reconstruction.}
\vspace{-0.2cm}
\label{tab:error}
\begin{center}
\scalebox{0.85}{
\begin{tabular}{cccccccccccccccc}
\thline
 \multicolumn{15}{c}{Problem 1: Image recovery based on sparsity of transformed coefficients}  \\ \hline
 \multicolumn{1}{c|}{} & \multicolumn{5}{c|}{Block size: $M=8$} &    \multicolumn{5}{c|}{Block size: $M=16$} & \multicolumn{5}{c}{Block size: $M=32$} \\ \hline
\multicolumn{1}{c|}{PSNR [dB]} & \multicolumn{1}{c}{DCT} &  \multicolumn{1}{c}{DFT} & \multicolumn{1}{c}{DHT}& \multicolumn{1}{c}{DADCFP} & \multicolumn{1}{c|}{RDADCF} &     \multicolumn{1}{c}{DCT} & \multicolumn{1}{c}{DFT} & \multicolumn{1}{c}{DHT} & \multicolumn{1}{c}{DADCFP} &  \multicolumn{1}{c|}{RDADCF} & \multicolumn{1}{c}{DCT} &  \multicolumn{1}{c}{DFT} & \multicolumn{1}{c}{DHT}& \multicolumn{1}{c}{DADCFP} & \multicolumn{1}{c}{RDADCF}\\\hline
\multicolumn{15}{c}{Image: \textit{Barbara}}\\
 \hline
\multicolumn{1}{c|}{60\%: 10.77} & 32.76 & 31.76 & 31.00 & 32.51 & \multicolumn{1}{c|}{\textbf{32.97}} & 33.81 & 32.54 & 31.71 & 33.14 & \multicolumn{1}{c|}{\textbf{33.89}} & 33.81 & 33.06 & 32.17 & 33.61 & \multicolumn{1}{c}{\textbf{34.19}} \\
\multicolumn{1}{c|}{50\%: 10.22} & 30.10 & 29.63 & 28.84 & 30.40 & \multicolumn{1}{c|}{\textbf{30.69}} & 31.43 & 30.47 & 29.61 & 31.16 & \multicolumn{1}{c|}{\textbf{31.85}} & 31.45 & 30.96 & 30.14 & 31.57 & \multicolumn{1}{c}{\textbf{32.14}} \\
\multicolumn{1}{c|}{40\%: 9.689} & 27.59 & 27.58 & 26.78 & 28.47 & \multicolumn{1}{c|}{\textbf{28.62}} & 29.10 & 28.44 & 27.62 & 29.25 & \multicolumn{1}{c|}{\textbf{29.91}} & 29.26 & 28.87 & 28.04 & 29.75 & \multicolumn{1}{c}{\textbf{30.25}} \\
\multicolumn{1}{c|}{30\%: 9.255} & 25.11 & 25.46 & 24.66 & \textbf{26.51} & \multicolumn{1}{c|}{26.46} & 26.70 & 26.38 & 25.58 & 27.25 & \multicolumn{1}{c|}{\textbf{27.78}} & 27.02 & 26.79 & 25.95 & 27.70 & \multicolumn{1}{c}{\textbf{28.24}} \\
\hline
\multicolumn{15}{c}{Image: \textit{Mandrill}}\\
 \hline
\multicolumn{1}{c|}{60\%: 10.77} & 25.68 & 25.95 & 25.29 & \textbf{26.64} & \multicolumn{1}{c|}{26.54} & 25.99 & 26.47 & 25.76 & 27.04 & \multicolumn{1}{c|}{\textbf{27.05}} & 25.99 & 26.66 & 25.94 & 27.25 & \multicolumn{1}{c}{\textbf{27.32}}  \\
\multicolumn{1}{c|}{50\%: 10.22} & 23.89 & 24.21 & 23.55 & \textbf{25.05} & \multicolumn{1}{c|}{24.88} & 24.27 & 24.78 & 24.08 & \textbf{25.44} & \multicolumn{1}{c|}{25.43} & 24.32 & 24.91 & 24.25 & 25.66 & \multicolumn{1}{c}{\textbf{25.70}}  \\
\multicolumn{1}{c|}{40\%: 9.689} & 22.18 & 22.57 & 21.97 & \textbf{23.59} & \multicolumn{1}{c|}{23.34} & 22.68 & 23.16 & 22.55 & \textbf{23.94} & \multicolumn{1}{c|}{23.91} & 22.77 & 23.30 & 22.68 & 24.15 & \multicolumn{1}{c}{\textbf{24.20}}  \\
\multicolumn{1}{c|}{30\%: 9.255} & 20.54 & 21.01 & 20.40 & \textbf{22.21} & \multicolumn{1}{c|}{21.87} & 21.20 & 21.68 & 21.15 & \textbf{22.56} & \multicolumn{1}{c|}{22.51} & 21.39 & 21.82 & 21.25 & 22.76 & \multicolumn{1}{c}{\textbf{22.80}} \\
\hline
\multicolumn{15}{c}{Image: \textit{Monach}}\\
 \hline
\multicolumn{1}{c|}{60\%: 10.77} & \textbf{37.43} & 35.40 & 34.37 & 36.31 & \multicolumn{1}{c|}{37.22} & \textbf{37.23} & 35.14 & 34.09 & 36.09 & \multicolumn{1}{c|}{36.77} & 36.52 & 35.39 & 34.36 & 36.19 & \multicolumn{1}{c}{\textbf{36.59}}  \\
\multicolumn{1}{c|}{50\%: 10.22} & 34.70 & 32.72 & 31.66 & 33.99 & \multicolumn{1}{c|}{\textbf{34.73}} & \textbf{34.61} & 32.62 & 31.52 & 33.78 & \multicolumn{1}{c|}{34.41} & 34.10 & 32.93 & 31.91 & 33.94 & \multicolumn{1}{c}{\textbf{34.31}}  \\
\multicolumn{1}{c|}{40\%: 9.689} & 31.65 & 29.88 & 28.77 & 31.61 & \multicolumn{1}{c|}{\textbf{32.06}} & 31.93 & 30.10 & 29.03 & 31.53 & \multicolumn{1}{c|}{\textbf{32.04}} & 31.66 & 30.60 & 29.57 & 31.88 & \multicolumn{1}{c}{\textbf{32.21}}  \\
\multicolumn{1}{c|}{30\%: 9.255} & 28.23 & 26.92 & 25.79 & 29.05 & \multicolumn{1}{c|}{\textbf{29.19}} & 29.04 & 27.44 & 26.44 & 29.07 & \multicolumn{1}{c|}{\textbf{29.44}} & 28.92 & 28.09 & 27.15 & 29.49 & \multicolumn{1}{c}{\textbf{29.77}}  \\
\hline
\multicolumn{15}{c}{Image: \textit{Parrot}}\\
 \hline
\multicolumn{1}{c|}{60\%: 10.77} & 39.25 & 38.03 & 37.28 & 38.46 & \multicolumn{1}{c|}{\textbf{39.40}} & \textbf{39.46} & 37.94 & 37.15 & 38.05 & \multicolumn{1}{c|}{39.35} & 39.10 & 37.96 & 37.11 & 37.98 & \multicolumn{1}{c}{\textbf{39.11}} \\
\multicolumn{1}{c|}{50\%: 10.22} & 37.13 & 35.91 & 34.95 & 36.67 & \multicolumn{1}{c|}{\textbf{37.58}} & 37.47 & 35.84 & 34.91 & 36.27 & \multicolumn{1}{c|}{\textbf{37.58}} & 37.07 & 35.85 & 34.90 & 36.13 & \multicolumn{1}{c}{\textbf{37.30}} \\
\multicolumn{1}{c|}{40\%: 9.689} & 34.57 & 33.48 & 32.49 & 34.54 & \multicolumn{1}{c|}{\textbf{35.35}} & 35.19 & 33.57 & 32.57 & 34.23 & \multicolumn{1}{c|}{\textbf{35.56}} & 34.78 & 33.63 & 32.62 & 34.11 & \multicolumn{1}{c}{\textbf{35.28}} \\
\multicolumn{1}{c|}{30\%: 9.255} & 31.60 & 30.79 & 29.69 & 32.30 & \multicolumn{1}{c|}{\textbf{32.85}} & 32.47 & 31.10 & 30.11 & 32.04 & \multicolumn{1}{c|}{\textbf{33.26}} & 32.21 & 31.09 & 30.07 & 31.88 & \multicolumn{1}{c}{\textbf{33.09}} \\
\thline
\\
\end{tabular}
}
\scalebox{0.85}{
\begin{tabular}{cccccccccccccccc}
\thline
 \multicolumn{15}{c}{Problem 2: Image recovery based on sparsity of transformed coefficients and weighted total variation for block boundaries}  \\ \hline
 \multicolumn{1}{c|}{} & \multicolumn{5}{c|}{Block size: $M=8$} &    \multicolumn{5}{c|}{Block size: $M=16$} & \multicolumn{5}{c}{ Block size: $M=32$} \\ \hline
\multicolumn{1}{c|}{PSNR [dB]} & \multicolumn{1}{c}{DCT} &  \multicolumn{1}{c}{DFT} & \multicolumn{1}{c}{DHT}& \multicolumn{1}{c}{DADCFP} & \multicolumn{1}{c|}{RDADCF} &  \multicolumn{1}{c}{DCT} & \multicolumn{1}{c}{DFT} & \multicolumn{1}{c}{DHT} & \multicolumn{1}{c}{DADCFP} &  \multicolumn{1}{c|}{RDADCF} & \multicolumn{1}{c}{DCT} &  \multicolumn{1}{c}{DFT} & \multicolumn{1}{c}{DHT}& \multicolumn{1}{c}{DADCFP} & \multicolumn{1}{c}{RDADCF}\\\hline
\multicolumn{15}{c}{Image: \textit{Barbara}}\\
 \hline
\multicolumn{1}{c|}{60\%: 10.77} & 32.27 & 31.98 & 31.48 & 32.32 & \multicolumn{1}{c|}{\textbf{32.62}} & 33.15 & 32.60 & 31.93 & 32.96 & \multicolumn{1}{c|}{\textbf{33.47}} & 33.38 & 33.15 & 32.30 & 33.49 & \textbf{33.97}\\
\multicolumn{1}{c|}{50\%: 10.22} & 29.95 & 30.00 & 29.52 & 30.29 & \multicolumn{1}{c|}{\textbf{30.53}} & 30.89 & 30.69 & 29.98 & 31.07 & \multicolumn{1}{c|}{\textbf{31.54}} & 31.09 & 31.14 & 30.32 & 31.53 & \textbf{31.99}\\
\multicolumn{1}{c|}{40\%: 9.689} & 27.91 & 28.17 & 27.70 & 28.49 & \multicolumn{1}{c|}{\textbf{28.64}} & 28.76 & 28.89 & 28.22 & 29.27 & \multicolumn{1}{c|}{\textbf{29.69}} & 29.05 & 29.24 & 28.45 & 29.78 & \textbf{30.16}\\
\multicolumn{1}{c|}{30\%: 9.255} & 26.06 & 26.42 & 26.01 & 26.74 & \multicolumn{1}{c|}{\textbf{26.82}} & 26.88 & 27.11 & 26.49 & 27.42 & \multicolumn{1}{c|}{\textbf{27.82}} & 27.11 & 27.30 & 26.53 & 27.86 & \textbf{28.30}\\
\hline
\multicolumn{15}{c}{Image: \textit{Mandrill}}\\
 \hline
\multicolumn{1}{c|}{60\%: 10.77} & 26.85 & 26.95 & 26.59 & \textbf{27.23} & \multicolumn{1}{c|}{27.21} & 26.46 & 26.91 & 26.37 & \textbf{27.26} & \multicolumn{1}{c|}{27.25} & 26.20 & 26.86 & 26.20 & 27.29 & \textbf{27.34}\\
\multicolumn{1}{c|}{50\%: 10.22} & 25.24 & 25.42 & 25.07 & \textbf{25.68} & \multicolumn{1}{c|}{25.65} & 24.89 & 25.35 & 24.82 & \textbf{25.67} & \multicolumn{1}{c|}{25.67} & 24.61 & 25.22 & 24.63 & 25.76 & \textbf{25.79}\\
\multicolumn{1}{c|}{40\%: 9.689} & 23.84 & 24.03 & 23.71 & \textbf{24.31} & \multicolumn{1}{c|}{24.25} & 23.47 & 23.90 & 23.43 & 24.25 & \multicolumn{1}{c|}{\textbf{24.26}} & 23.21 & 23.73 & 23.19 & 24.29 & \textbf{24.33}\\
\multicolumn{1}{c|}{30\%: 9.255} & 22.46 & 22.67 & 22.41 & \textbf{22.96} & \multicolumn{1}{c|}{22.86} & 22.16 & 22.58 & 22.18 & \textbf{22.92} & \multicolumn{1}{c|}{22.90} & 21.91 & 22.38 & 21.92 & 22.95 & \textbf{22.99}\\
\hline
\multicolumn{15}{c}{Image: \textit{Monach}}\\
 \hline
\multicolumn{1}{c|}{60\%: 10.77} & \textbf{40.60} & 39.55 & 39.18 & 39.49 & \multicolumn{1}{c|}{40.10} & \textbf{39.26} & 38.01 & 37.29 & 38.08 & \multicolumn{1}{c|}{38.67} & \textbf{37.63} & 36.96 & 36.09 & 37.20 & 37.59\\
\multicolumn{1}{c|}{50\%: 10.22} & \textbf{38.70} & 37.49 & 37.04 & 37.50 & \multicolumn{1}{c|}{38.13} & \textbf{37.01} & 35.68 & 34.89 & 35.87 & \multicolumn{1}{c|}{36.48} & \textbf{35.34} & 34.58 & 33.66 & 34.95 & 35.34\\
\multicolumn{1}{c|}{40\%: 9.689} & \textbf{36.49} & 35.12 & 34.62 & 35.21 & \multicolumn{1}{c|}{35.82} & \textbf{34.69} & 33.27 & 32.46 & 33.56 & \multicolumn{1}{c|}{34.12} & 33.04 & 32.23 & 31.37 & 32.85 & \textbf{33.17}\\
\multicolumn{1}{c|}{30\%: 9.255} & \textbf{33.76} & 32.33 & 31.84 & 32.58 & \multicolumn{1}{c|}{33.13} & \textbf{31.97} & 30.40 & 29.64 & 30.89 & \multicolumn{1}{c|}{31.44} & 30.39 & 29.59 & 28.71 & 30.34 & \textbf{30.64}\\
\hline
\multicolumn{15}{c}{Image: \textit{Parrot}}\\
 \hline
\multicolumn{1}{c|}{60\%: 10.77} & \textbf{40.83} & 40.24 & 39.91 & 40.20 & \multicolumn{1}{c|}{40.71} & \textbf{40.34} & 39.56 & 39.00 & 39.35 & \multicolumn{1}{c|}{40.19} & 39.66 & 39.07 & 38.37 & 38.84 & \textbf{39.69}\\
\multicolumn{1}{c|}{50\%: 10.22} & \textbf{39.33} & 38.73 & 38.34 & 38.68 & \multicolumn{1}{c|}{39.27} & \textbf{38.72} & 37.85 & 37.19 & 37.72 & \multicolumn{1}{c|}{38.62} & 37.89 & 37.21 & 36.40 & 37.09 & \textbf{38.02} \\
\multicolumn{1}{c|}{40\%: 9.689} & 37.43 & 36.83 & 36.41 & 36.84 & \multicolumn{1}{c|}{\textbf{37.45}} & 36.77 & 35.91 & 35.21 & 35.77 & \multicolumn{1}{c|}{\textbf{36.77}} & 35.77 & 35.17 & 34.32 & 35.09 & \textbf{36.07}\\
\multicolumn{1}{c|}{30\%: 9.255} & 35.32 & 34.69 & 34.25 & 34.70 & \multicolumn{1}{c|}{\textbf{35.32}} & 34.55 & 33.71 & 33.00 & 33.64 & \multicolumn{1}{c|}{\textbf{34.64}} & 33.47 & 32.77 & 31.91 & 32.85 & \textbf{33.95}\\
\thline
\end{tabular}
}
\end{center}
\vspace{-0.4cm}
\end{table*}

\section{Concluding Remarks}
\label{sec:conclusion}
In this paper, we proposed the DADCF and the RDADCF for directional image representation by extending the DCT. Since they are Parseval block frames with low redundancy, they can deliver computational efficiency for practical situations. Furthermore, unlike the conventional directional block transforms, they can finely decompose the frequency plane and provide richer directional atoms. Comparing both the DADCF and the RDADCF, the DADCF provides richer directional selectivity than the RDADCF. However, in practice, the slightly redundant DADCF pyramid should be used instead of the DADCF to avoid the DC leakage and perform good image processing, i.e., the RDADCF can save more amount of memory usage than the DADCF (pyramid). Also, they can be easily implemented by appending trivial operations (the DST or the RDST, and the SAP operations) to the DCT. Moreover, the DST can be implemented based on the DCT with the permutation and sign-alternation operations and the RDST based on the DST and one additional orthogonal matrix with the size of the half. Since the DCT is integrated into many existing digital devices, the system modification for the proposed method is minimal. Note that the DADCF and the RDADCF are compatible with the DCT. Depending on applications, we can switch the DCT/DADCF/RDADCF by using the DST/RDST and SAP operations.

We evaluated the DADCF pyramid and the RDADCF in compressive image sensing reconstruction as a practical application. The experimental results showed that, for both fine textures and smooth images, they could achieve higher numerical qualities than the DCT, the DFT, and the DHT. Furthermore, it was shown that the RDADCF could recover
smooth regions better than the DADCF pyramid due to its regularity property.
\appendices
\section{Proof for Proposition 2}\label{ap:prop:S}
\begin{proof}
We first introduce the following lemma.
\begin{lem}\label{lem:S}\mbox{}\\
\begin{enumerate}
\item The elements in the upper-right triangle $[\mathbf{S}\mathbf{S}^{\top}]_{k_v,k_h} = [\langle \mathbf{s}_{k_v},  \mathbf{s}_{k_h} \rangle]_{k_v,k_h}$ are expressed as
\begin{align*}
\langle \mathbf{s}_{k_v},  \mathbf{s}_{k_h} \rangle = \begin{cases}
\frac{\sqrt{2}}{M\sin\left(\frac{\pi}{2M}k_h\right) } & (k_v = 0\ \mathrm{and}\ k_h = 2\ell+1) \\
1 & (k_v=k_h) \\
0 & (\mathrm{otherwise})
\end{cases},
\end{align*}
where $\ell \in \Omega_{\frac{M}{2}-1}$. For example, for $M = 4$, 
\begin{align}
\mathbf{S}\mathbf{S}^{\top} \approx 
\begin{bmatrix}
    1  &  0.9239  &  0  &  0.3827 \\
    0.9239  &  1  &       0  & 0 \\
    0  &       0  &  1  &  0 \\
    0.3827  & 0  &  0  &  1
\end{bmatrix}.
\end{align}
\item $\sum^{M/2-1}_{\ell=0} \langle \mathbf{s}_0,  \mathbf{s}_{2\ell+1} \rangle ^2 = 1$.
\end{enumerate}
\end{lem}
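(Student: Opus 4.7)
The plan is to dispatch part 1 by splitting into three cases over the index pair $(k_v,k_h)$, and to reduce part 2 to a Parseval identity for $\mathbf{s}_0 = \sqrt{1/M}\mathbf{1}$ expanded in the standard DST basis $\mathbf{F}^{(\mathrm{S})}$.

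For part 1, observe from \eqref{eq:defS} that when $k \neq 0$ the row $\mathbf{s}_k$ of $\mathbf{S}$ coincides with $[\mathbf{F}^{(\mathrm{S})}]_{k,\cdot}^{\top}$. Hence if both $k_v,k_h \neq 0$, orthonormality of the DST rows immediately yields $\langle \mathbf{s}_{k_v},\mathbf{s}_{k_h}\rangle = \delta_{k_v,k_h}$, covering the diagonal entry $1$ and the off-diagonal $0$. The case $k_v = k_h = 0$ is a direct one-line computation, $\langle \mathbf{s}_0,\mathbf{s}_0\rangle = \tfrac{1}{M}\sum_n 1 = 1$. The only nontrivial case is $k_v = 0$, $k_h \neq 0$, where
\[
\langle \mathbf{s}_0,\mathbf{s}_{k_h}\rangle = \sqrt{\tfrac{1}{M}}\,\langle \mathbf{1},\mathbf{s}_{k_h}\rangle,
\]
and Proposition~1 delivers the answer: the inner product vanishes for even $k_h$ and equals $\tfrac{\sqrt{2}}{M\sin(\pi k_h/(2M))}$ for $k_h = 2\ell+1$. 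So part 1 follows almost entirely from known DST orthogonality plus Proposition~1.

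For part 2, the natural move is Parseval with respect to the orthonormal basis $\{[\mathbf{F}^{(\mathrm{S})}]_{k,\cdot}^{\top}\}_{k=0}^{M-1}$:
\[
1 = \|\mathbf{s}_0\|^2 = \sum_{k=0}^{M-1}\bigl\langle \mathbf{s}_0,[\mathbf{F}^{(\mathrm{S})}]_{k,\cdot}^{\top}\bigr\rangle^2.
\]
The $k=0$ contribution vanishes because $[\mathbf{F}^{(\mathrm{S})}]_{0,n} = \sqrt{1/M}\,(-1)^n$ and $M = 2^m$ is even, so $\langle \mathbf{s}_0,[\mathbf{F}^{(\mathrm{S})}]_{0,\cdot}^{\top}\rangle = \tfrac{1}{M}\sum_{n}(-1)^n = 0$. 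For $k \geq 1$ the basis vector equals $\mathbf{s}_k$, and by part 1 the even indices $k = 2\ell$ contribute zero. What remains are exactly the odd indices $k = 2\ell+1$, giving $\sum_{\ell=0}^{M/2-1}\langle \mathbf{s}_0,\mathbf{s}_{2\ell+1}\rangle^2 = 1$.

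The main obstacle is conceptual rather than computational: part 2 looks like a reciprocal-squared-sine trigonometric identity (which would be painful by brute force), but the correct viewpoint is that it is a disguised Parseval identity for the all-ones vector in the DST basis. Once that observation is made, the only technical checks are the orthogonality of $\mathbf{s}_0$ to $[\mathbf{F}^{(\mathrm{S})}]_{0,\cdot}^{\top}$ (requiring $M$ even, which holds by assumption) and the vanishing of even-indexed cross terms from part 1, both of which are immediate.
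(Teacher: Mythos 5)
Your proof is correct and follows essentially the same route as the paper: part 1 by the identical case split (DST row orthonormality for $k_v,k_h\neq 0$, direct computation for $k_v=k_h=0$, and Proposition~1 for the mixed case), and part 2 via the orthogonality of $\mathbf{F}^{(\mathrm{S})}$ applied to $\mathbf{s}_0=\tfrac{1}{\sqrt{M}}\mathbf{1}$, which is exactly the Parseval identity the paper compresses into the one line $\sum_{\ell}\langle\mathbf{s}_0,\mathbf{s}_{2\ell+1}\rangle^2=\|\mathbf{F}^{(\mathrm{S})}\tfrac{1}{\sqrt{M}}\mathbf{1}\|_2^2=1$. Your explicit check that the $k=0$ and even-indexed components vanish is a worthwhile piece of bookkeeping that the paper leaves implicit.
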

\begin{proof}
\begin{enumerate}
\item  It is clear that $\langle \mathbf{s}_0,  \mathbf{s}_0 \rangle = 1$ and $\langle \mathbf{s}_{k_v},  \mathbf{s}_{k_h} \rangle = \delta(k_v-k_h)$ for $k_v, k_h \in \Omega_{1,M-1}$ because $\{\mathbf{s}_k\}_{k=1}^{M-1}$ are the rows of the DST $\mathbf{F}^{(\mathrm{S})}$. In the other cases, it is clear from Proposition \ref{lem:S1}.
\item $
\sum^{M/2-1}_{\ell=0} \langle \mathbf{s}_0,  \mathbf{s}_{2\ell+1} \rangle ^2 = \left\|\mathbf{F}^{(\mathrm{S})} \frac{1}{\sqrt{M}}\mathbf{1}\right\|_2^2 = 1.
$
\end{enumerate}
\end{proof}
Let $\widehat{\mathbf{S}}$ be $\widehat{\mathbf{S}} = \mathbf{S}\mathbf{S}^{\top} = \begin{bmatrix}
\widehat{\mathbf{s}}_0 & \widehat{\mathbf{s}}_1 & \cdots & \widehat{\mathbf{s}}_{M-1}
\end{bmatrix}^{\top}$. From Lemma \ref{lem:S} 2),
$
\widehat{\mathbf{s}}_0 - \sum^{M/2-1}_{\ell=0}\langle \mathbf{s}_0,  \mathbf{s}_{2\ell+1}\rangle\widehat{\mathbf{s}}_{2\ell+1} = \mathbf{0}.
$
This implies that $\mathrm{rank}(\widehat{\mathbf{S}}) = M-1$, and so $\mathrm{rank}(\mathbf{S}) = M-1$.
\end{proof}
\section{Proof for Proposition 3}\label{ap:prop:rankS1}
\begin{IEEEproof}
From Proposition \ref{lem:S}, the elements in the upper-right triangle of $\widehat{\mathbf{S}}^{(0)}=\mathbf{S}^{(0)}\mathbf{S}^{(0)\top} = \begin{bmatrix}
\widehat{\mathbf{s}}_0^{(0)}  &  \cdots & \widehat{\mathbf{s}}_{M-1}^{(0)}
\end{bmatrix}^{\top}$ are as follows:
 \begin{align*}
[\widehat{\mathbf{S}}^{(0)}]_{k_v,k_h}=
\begin{cases}
\frac{\sqrt{2}}{M\sin\left(\frac{\pi}{2M}k_h\right) } & (k_v = 0\ \mathrm{and}\ k_h = 2\ell+1) \\
1 & (k_v=k_h\  \mathrm{and}\  k_v \neq 1) \\
0 & (\mathrm{otherwise})
\end{cases},
\end{align*}
where $1\leq \ell \leq \frac{M}{2}-1$. For example, for $M = 4$, 
\begin{align}
\widehat{\mathbf{S}}^{(0)} \approx 
\begin{bmatrix}
    1  &  0  &  0  &  0.3827 \\
    0  &  0  &       0  & 0 \\
    0  &       0  &  1  &  0 \\
    0.3827  & 0  &  0  &  1
\end{bmatrix}.
\end{align}
Then, we can derive that
\begin{align}
\widehat{\mathbf{s}}_0^{(0)} - \sum^{M/2-1}_{\ell=1}\langle \mathbf{s}_0,  \mathbf{s}_{2\ell+1}\rangle\widehat{\mathbf{s}}_{2\ell+1}^{(0)} \neq \mathbf{0}.
\end{align}
Thus, it can be concluded that $\mathrm{rank}(\mathbf{S}^{(0)}) = M-1$.
\end{IEEEproof}
\section{Proof for Proposition 4}\label{ap:prop:RDST}
\begin{proof}
The statements 1) and 2) are clearly true. We only show the proof for 3). 

For any even row $2k \geq 2$, $\mathcal{F}[[\mathbf{F}^{(\mathrm{RS})}]_{2k,\cdot}]$ is exactly the same as $\mathcal{F}[[\mathbf{F}^{(\mathrm{S})}]_{2k,\cdot}]$. Thus, it is enough to show the case of odd rows $[\mathbf{F}^{(\mathrm{RS})}]_{2k+1,\cdot}$ $(2k+1 \geq 3)$.

$[\mathbf{F}^{(\mathrm{RS})}]_{2k+1,n}$ is the same as $[\mathbf{s}^{(k+1)}_{2k+1}]_n$ of $\mathbf{S}^{(k+1)} = \begin{bmatrix}
\mathbf{s}_{0}^{(k+1)}&  \ldots & \mathbf{s}_{M-1}^{(k+1)} \end{bmatrix}^{\top}$ in Step 5 of Algorithm \ref{alg:RDST}. Let $\mathbf{T}^{(k)} = \begin{bmatrix}
\mathbf{t}_{0}^{(k)}&  \ldots & \mathbf{t}_{M-1}^{(k)} \end{bmatrix}$ be the inverse matrix of $\mathbf{S}^{(k)}$. Since $\mathbf{s}^{(k+1)}_{2k+1}$ is designed to be orthogonal to $\{\mathbf{s}^{(k)}_{n}\}_{\Omega_{M-1}\setminus \{2k+1\}}$ in Algorithm \ref{alg:RDST}, $\mathbf{s}^{(k+1)}_{2k+1}$ can be expressed with a linear combination of  $\{\mathbf{t}^{(k)}_{n}\}_{\Omega_{M-1}}$ as:
\begin{align}
\mathbf{s}^{(k+1)}_{2k+1} =& \mathbf{T}^{(k)}\mathbf{S}^{(k)}\mathbf{s}^{(k+1)}_{2k+1} = \sum_{m = 0}^{M-1} \langle \mathbf{s}^{(k)}_{m} , \mathbf{s}^{(k+1)}_{2k+1} \rangle \mathbf{t}^{(k)}_{m} \nonumber\\ = & \langle \mathbf{s}^{(k)}_{2k+1} , \mathbf{s}^{(k+1)}_{2k+1} \rangle \mathbf{t}^{(k)}_{2k+1}.
\end{align}
Here, we use the following lemma (see its proof in Appendix \ref{ap:lem:pbT}).
\begin{lem}\label{lem:pbT}
The passband of the frequency response of  $\mathbf{t}^{(k)}_{\ell}$ of $\mathbf{T}^{(k)} = \begin{bmatrix}
\mathbf{t}_{0}^{(k)}&  \ldots & \mathbf{t}_{M-1}^{(k)} \end{bmatrix}$ is the same as that of $\mathbf{s}^{(k)}_{\ell}$ of $\mathbf{S}^{(k)} = \begin{bmatrix}
\mathbf{s}_{0}^{(k)}&  \ldots & \mathbf{s}_{M-1}^{(k)} \end{bmatrix}^{\top}$.
\end{lem}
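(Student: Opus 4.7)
The plan is to exploit the identity $\mathbf{T}^{(k)}=\mathbf{S}^{(k)\top}(\mathbf{G}^{(k)})^{-1}$ with $\mathbf{G}^{(k)}=\mathbf{S}^{(k)}\mathbf{S}^{(k)\top}$, so that each column $\mathbf{t}_{\ell}^{(k)}=\sum_{m}[(\mathbf{G}^{(k)})^{-1}]_{m,\ell}\,\mathbf{s}_{m}^{(k)}$ is a linear combination of the rows of $\mathbf{S}^{(k)}$. Proving the lemma then reduces to showing that the nonzero weights concentrate on rows whose passbands match that of $\mathbf{s}_{\ell}^{(k)}$.

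First I would catalog the rows of $\mathbf{S}^{(k)}$ into four groups: (i) the DC row $\mathbf{s}_{0}=M^{-1/2}\mathbf{1}$, (ii) the already constructed replacements $\mathbf{v}^{(0)},\ldots,\mathbf{v}^{(k-1)}$ sitting at odd indices $1,3,\ldots,2k-1$, (iii) the even DST rows $\mathbf{s}_{2},\ldots,\mathbf{s}_{M-2}$, and (iv) the yet-unmodified odd DST rows $\mathbf{s}_{2k+1},\ldots,\mathbf{s}_{M-1}$. By orthonormality of the DST, by the defining property of each $\mathbf{v}^{(j)}$ (orthogonal to every row of $\mathbf{S}^{(j)}$ except row $2j+1$), and by Proposition~\ref{lem:S1}, the only nonzero off-diagonal entries of $\mathbf{G}^{(k)}$ sit in row~$0$ and column~$0$ at positions indexed by the unmodified odd DST rows; that is, $\mathbf{G}^{(k)}=\mathbf{I}+\mathbf{e}_{0}\mathbf{r}^{\top}+\mathbf{r}\mathbf{e}_{0}^{\top}$ with $\mathbf{r}$ supported on those positions and $\mathbf{r}_{0}=0$. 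Combining Proposition~\ref{lem:S1} with Lemma~\ref{lem:S} further gives $\|\mathbf{r}\|^{2}<1$, ensuring invertibility.

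Next I would invert this rank-two perturbation in closed form via Sherman--Morrison--Woodbury, producing $(\mathbf{G}^{(k)})^{-1}$ whose off-diagonal support remains confined to the block spanned by $\mathbf{e}_{0}$ and $\mathbf{r}$. Three cases then emerge. When $\ell$ is an even index or one of the already-modified odd indices $1,3,\ldots,2k-1$, one gets $(\mathbf{G}^{(k)})^{-1}\mathbf{e}_{\ell}=\mathbf{e}_{\ell}$, so $\mathbf{t}_{\ell}^{(k)}=\mathbf{s}_{\ell}^{(k)}$ and the passband claim is immediate. When $\ell=0$, the column is proportional to $\mathbf{e}_{0}-\mathbf{r}$ and $\mathbf{t}_{0}^{(k)}$ is a lowpass combination dominated by $\mathbf{s}_{0}$. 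When $\ell=2p+1$ with $p\geq k$---the case actually invoked in the proof of Proposition~\ref{prop:RDST} via $\mathbf{s}_{2k+1}^{(k+1)}\propto\mathbf{t}_{2k+1}^{(k)}$---the column gives $\mathbf{t}_{\ell}^{(k)}=\mathbf{s}_{\ell}^{(k)}+\tfrac{r_{\ell}}{1-\|\mathbf{r}\|^{2}}\bigl(\mathbf{S}^{(k)\top}\mathbf{r}-\mathbf{s}_{0}\bigr)$, a combination dominated by $\mathbf{s}_{\ell}^{(k)}$ with out-of-band corrections from $\mathbf{s}_{0}$ and the other unmodified odd DST rows.

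The main obstacle is precisely this third case: arguing that the passband of $\mathbf{t}_{\ell}^{(k)}$ still coincides with that of $\mathbf{s}_{\ell}^{(k)}$ despite the presence of a DC term and of contributions from other bandpass DST rows. I would handle this by noting that the rows entering the perturbation have passbands disjoint from that of $\mathbf{s}_{\ell}^{(k)}$, so inside the passband of $\mathbf{s}_{\ell}^{(k)}$ the spectrum of $\mathbf{t}_{\ell}^{(k)}$ is dominated by $\mathbf{s}_{\ell}^{(k)}$ itself, whereas outside that band the perturbation contributes only at the far lower levels associated with DC and distant bands. This matches the notion of passband used throughout the paper and completes the argument.
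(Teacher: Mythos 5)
Your proposal is correct in substance and follows the same underlying strategy as the paper: both proofs reduce the lemma to the observation that $\mathbf{t}_{\ell}^{(k)}=\sum_{m}[(\mathbf{S}^{(k)}\mathbf{S}^{(k)\top})^{-1}]_{m,\ell}\,\mathbf{s}_{m}^{(k)}$ and then exploit the ``arrow'' structure of the Gram matrix $\mathbf{G}^{(k)}=\mathbf{I}+\mathbf{e}_{0}\mathbf{r}^{\top}+\mathbf{r}\mathbf{e}_{0}^{\top}$, with $\mathbf{r}$ supported on the not-yet-modified odd indices and $\|\mathbf{r}\|^{2}<1$ guaranteed by Lemma~\ref{lem:S}. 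Where you diverge is in how the inverse is obtained: the paper carries out an explicit eigendecomposition of $\widehat{\mathbf{S}}^{(1)}$ (all eigenvalues equal to $1$ except $\lambda_{0},\lambda_{M-1}=1\pm\|\mathbf{r}\|$), writes out $\mathbf{U}^{(1)}(\mathbf{D}^{(1)})^{-1}\mathbf{U}^{(1)\top}$ entrywise, and then bounds each entry numerically; you instead invert the rank-two perturbation in closed form via Sherman--Morrison--Woodbury. Your route is cleaner and buys an exact statement the paper only obtains implicitly, namely that $\mathbf{t}_{\ell}^{(k)}=\mathbf{s}_{\ell}^{(k)}$ \emph{exactly} for every even index and every already-modified odd index, confining all analysis to the single nontrivial case $\ell=2p+1$, $p\geq k$. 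The one place you stop short of the paper is the final quantitative step: asserting that $\mathbf{t}_{\ell}^{(k)}$ is ``dominated by'' $\mathbf{s}_{\ell}^{(k)}$ requires bounding the coefficients $\tfrac{r_{\ell}}{1-\|\mathbf{r}\|^{2}}$ (on $\mathbf{s}_{0}$) and $\tfrac{r_{\ell}r_{m}}{1-\|\mathbf{r}\|^{2}}$ (on the other unmodified odd rows) away from $1$; the paper does this explicitly by showing $\tfrac{1}{1-\|\mathbf{r}\|^{2}}=1+\epsilon$ with $\epsilon<\tfrac{1}{4}$ and $r_{\ell}=\widehat{s}_{\ell}<\tfrac{2}{5}$ for $M\geq 4$, so that the DC coefficient is below $\tfrac{1}{2}$ and the cross-band coefficients below $\tfrac{1}{8}$. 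These estimates follow directly from the values of $\langle\mathbf{s}_{0},\mathbf{s}_{2\ell+1}\rangle$ you already cite, so the gap is one of finishing the arithmetic rather than of approach; adding that computation would make your argument complete.
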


From Lemma \ref{lem:pbT}, the passband of the frequency response of $\mathbf{s}^{(k+1)}_{2k+1}$ is located at the same position as that of $\mathbf{t}^{(k)}_{2k+1}$ and $\mathbf{s}^{(k)}_{2k+1} = ([\mathbf{F}^{(\mathrm{S})}]_{2k+1,\cdot})^{\top}$. Consequently, we conclude that statement 3) is true.
\end{proof}

\section{Proof for Lemma 2}\label{ap:lem:pbT}
\begin{proof}
First, consider the case of $k=1$. Since $\mathbf{T}^{(1)} = \begin{bmatrix} \mathbf{t}_{0}^{(1)}&  \ldots & \mathbf{t}_{M-1}^{(1)} \end{bmatrix}$ is the inverse of $\mathbf{S}^{(1)} = \begin{bmatrix} \mathbf{s}_{0}^{(1)}&  \ldots & \mathbf{s}_{M-1}^{(1)} \end{bmatrix}^{\top}$, each $\mathbf{t}_{m}^{(1)}$ can be represented as $\mathbf{t}_{m}^{(1)} = \sum_{n = 0}^{M-1} \langle \mathbf{t}_{n}^{(1)} , \mathbf{t}_{m}^{(1)} \rangle \mathbf{s}_{n}^{(1)}$. Therefore, it is enough to show that $|[(\mathbf{T}^{(1)})^{\top}\mathbf{T}^{(1)}]_{m,n}| = |\langle \mathbf{t}_{m}^{(1)} , \mathbf{t}_{n}^{(1)} \rangle | \ll |\langle \mathbf{t}_{m}^{(1)} , \mathbf{t}_{m}^{(1)} \rangle |$. 

For that, we consider the eigenvalue decomposition of $\mathbf{S}^{(1)}\mathbf{S}^{(1)\top}= \mathbf{U}^{(1)}\mathbf{D}^{(1)}\mathbf{U}^{(1)\top}$, then calculate $\mathbf{T}^{(1)\top}\mathbf{T}^{(1)} = \mathbf{U}^{(1)}(\mathbf{D}^{(1)})^{-1}\mathbf{U}^{(1)\top}$, where $\mathbf{U}^{(1)} = \begin{bmatrix} \mathbf{u}_{0}&  \ldots & \mathbf{u}_{M-1}\end{bmatrix}$ and $\mathbf{D}^{(1)} = \mathrm{diag}(\lambda_0,\ldots,\lambda_{M-1})$ are some orthogonal and diagonal matrices consisting of eigenvectors and eigenvalues, respectively. Similar to Lemma \ref{lem:S}, we can derive $\widehat{\mathbf{S}}^{(1)} = \mathbf{S}^{(1)}\mathbf{S}^{(1)\top}$ forms
\begin{align}\label{eq:Shat1}
\widehat{\mathbf{S}}^{(1)} =&\  
\begin{bmatrix}
1 & 0 & 0 & \widehat{s}_{3} & 0 & \widehat{s}_{5} & \cdots \\
0 & 1 & 0 & 0 & 0 & 0 & \cdots \\
0 & 0 & 1 & 0 & 0 & 0 &  \cdots \\
 \widehat{s}_{3} & 0 & 0 & 1 & 0 & 0 &  \cdots \\
 0 & 0 & 0 & 0 & 1 & 0 &  \cdots \\
   \widehat{s}_{5} & 0 & 0 & 0 & 0 & 1 &  \cdots \\
  \vdots & \vdots & \vdots & \vdots & \vdots & \vdots &  \ddots \\
\end{bmatrix},
\end{align} 
where $ \widehat{s}_{2\ell+1} = \frac{\sqrt{2}}{M\sin\left(\frac{\pi}{2M}(2\ell+1)\right) }\  (\ell \geq 1)$. Let us consider some eigenvalue $\lambda$ and its corresponding eigenvectors $\mathbf{u}$ of $\widehat{\mathbf{S}}^{(1)}$. Note that all the eigenvalues are positive $\lambda_n > 0$, since $\mathrm{rank}(\widehat{\mathbf{S}}^{(1)}) = M$. Suppose an eigenvalue $\lambda = 1$, then its eigenvector $\mathbf{u} = \begin{bmatrix} {u}_{0}&  \ldots & {u}_{M-1}\end{bmatrix}^{\top}$ should satisfy
\begin{align}
&\widehat{\mathbf{S}}^{(1)} \begin{bmatrix} {u}_{0}&  \ldots & {u}_{M-1} \end{bmatrix}^{\top} =   \begin{bmatrix} {u}_{0}&  \ldots & {u}_{M-1}\end{bmatrix}^{\top} \nonumber \\
 \Rightarrow & \begin{cases} {u}_{3} = - \sum_{\ell = 2}^{M/2-1} \frac{\widehat{s}_{2\ell+1}}{\widehat{s}_{3}}{u}_{2\ell+1}  \\
{u}_{0} = 0\end{cases} \nonumber \\
\Rightarrow & \mathbf{u} = \begin{bmatrix} 0 & u_1 & u_2 & - \sum_{\ell = 2}^{M/2-1} \frac{\widehat{s}_{2\ell+1}}{\widehat{s}_{3}}{u}_{2\ell+1} & {u}_{4} & \cdots  \end{bmatrix}^{\top} \nonumber \\
 \Rightarrow & \mathbf{u}\in \mathrm{span}\left\{\mathbf{u}_{1} ,\ldots, \mathbf{u}_{M-2} \right\},
\end{align}
where
\begin{align}
\mathbf{u}_m = &\ 
\begin{cases}
{\bm \delta_{m}} & (m = 1, 2) \\ 
{\bm \delta_{m+1}} & (m = 2\ell+1,  m \geq 3) \\ 
{\bm \delta_{m+1}}  - \frac{\widehat{s}_{m+1}}{\widehat{s}_{3}}{\bm \delta_{3}}& (m = 2\ell,  m \geq 3) \\ 
\end{cases},
\end{align}
where ${\bm \delta_{m}} \in \mathbb{R}^{M}$ $(m \in \Omega_{M-1})$ consists of $[{\bm \delta}_m]_m = 1$ and $[{\bm \delta}_m]_n = 0$ $(m \neq n)$. Since $ \mathbf{U}^{(1)}$ should be an orthogonal matrix, but the vectors $\{\mathbf{u}_m\}_{m = 2\ell,  m \geq 3}$ are not orthogonal yet, Gram-Schmidt orthonormalization is applied to them. 

Next, we consider the case of $\lambda \neq 1$. 
\begin{align}
&\widehat{\mathbf{S}}^{(1)} \begin{bmatrix} {u}_{0}&  \ldots & {u}_{M-1} \end{bmatrix}^{\top} =   \lambda\begin{bmatrix} {u}_{0}&  \ldots & {u}_{M-1}\end{bmatrix}^{\top} \nonumber \\
\Rightarrow & \begin{cases} \sum_{\ell = 1}^{M/2-1} \widehat{s}_{2\ell+1}  {u}_{2\ell+1}=  (\lambda -1 ){u}_{0}\ (u_0 \neq 0)\\
 \widehat{s}_{2\ell+1}{u}_{0} =   (\lambda -1 ){u}_{2\ell+1} \\  (\lambda -1 ){u}_{\ell} = 0\end{cases} \nonumber \\
 \Rightarrow & \lambda^2 -2\lambda + \left(1 -\sum_{\ell = 1}^{M/2-1} \widehat{s}_{2\ell+1}^2\right) = 0 
\end{align}
Thus, $\lambda_0 = 1 +  \sqrt{\sum_{\ell = 1}^{M/2-1} \widehat{s}_{2\ell+1}^2},\  \lambda_{M-1} = 1- \sqrt{\sum_{\ell = 1}^{M/2-1} \widehat{s}_{2\ell+1}^2}$. For $\lambda_0$ and $\lambda_{M-1}$, the eigenvectors $\mathbf{u}_{0}$ and $\mathbf{u}_{M-1}$ can be found as
\begin{align}
\mathbf{u}_{0}^{(1)} =&\  \frac{1}{\sqrt{2}}{\bm \delta}_0 + \sum_{\ell = 1}^{M/2-1} \frac{\widehat{s}_{2\ell+1}}{ \sqrt{2}\sqrt{\sum_{\ell = 1}^{M/2-1} \widehat{s}_{2\ell+1}^2}}{\bm \delta}_{2\ell+1}, \nonumber \\
\mathbf{u}_{M-1}^{(1)} =&\  \frac{1}{\sqrt{2}}{\bm \delta}_0 - \sum_{\ell = 1}^{M/2-1} \frac{\widehat{s}_{2\ell+1}}{ \sqrt{2}\sqrt{\sum_{\ell = 1}^{M/2-1} \widehat{s}_{2\ell+1}^2}}{\bm \delta}_{2\ell+1}.
\end{align}
$\{\lambda_n\}$ and $\{\mathbf{u}^{(1)}_n\}$ give us the eigenvalue decomposition of $\widehat{\mathbf{S}}^{(1)}$. For example, when $M=8$, ${\mathbf{D}}^{(1)} =  \mathrm{diag}(\lambda_0, 1, 1, 1, 1, 1, 1, \lambda_{M-1})$ and 
\begin{align}
\mathbf{U}^{(1)} =  & 
\begin{bmatrix}
\frac{1}{ \sqrt{2}} & 0 & 0 & 0 & 0 & 0 &0 & \frac{1}{ \sqrt{2}} \\
0 & 1 & 0 & 0 & 0 & 0 &0 & 0 \\
0 & 0 & 1 & 0 & 0 & 0 &0 & 0 \\
\frac{A_3}{ \sqrt{2}} & 0& 0& 0 & u_{3,4} & 0 &u_{3,6}& -\frac{A_3}{ \sqrt{2}}  \\
0 & 0 & 0 & 1 & 0 & 0 &0 & 0 \\
\frac{A_5}{ \sqrt{2}} & 0 & 0 & 0 &u_{5,4} &0 & 0  &  -\frac{A_5}{ \sqrt{2}} \\
0 & 0 & 0 & 0 & 0 & 1 & 0 & 0 \\
\frac{A_7}{ \sqrt{2}} & 0 & 0 & 0 & 0 & 0 &u_{7,6} &  -\frac{A_7 }{ \sqrt{2}}
\end{bmatrix},
\end{align}
where $A_{2\ell+1} = \frac{\widehat{s}_{2\ell+1}}{ \sqrt{ \widehat{s}_{3}^2+ \widehat{s}_{5}^2+ \widehat{s}_{7}^2}} \ (<1)$ and $u_{3,4},\ u_{3,6},\ u_{5,4},\ u_{7,6}$ are the elements after orthogonalization. Then, the elements in the upper-right triangle of $\widehat{\mathbf{S}}^{(1)}$ are
\begin{align}\label{eq:Shat1UDU}
&[\widehat{\mathbf{S}}^{(1)}]_{m,n} 
= [{\mathbf{U}}^{(1)}{\mathbf{D}}^{(1)}{\mathbf{U}}^{(1)\top}]_{m,n} \nonumber \\
 =&\  
 \begin{cases}
 \frac{1}{2}(\lambda_0 +\lambda_{M-1}) \quad (m= n = 1) \\
 \frac{A_n}{2}(\lambda_0 - \lambda_{M-1}) \quad (m=0,  n = 2\ell + 1  \geq 3 )\\ 
\frac{A_m^2}{2} (\lambda_0 + \lambda_{M-1})+ \Delta_{m,m}\quad (m= n = 2\ell+1 \geq 3)\\
\frac{A_mA_n}{2} (\lambda_0 + \lambda_{M-1})+ \Delta_{m,n} \\ 
 (m = 2\ell_m+1,\ n = 2\ell_n+1,\ \ell_m \neq \ell_n,\  m,\  n \geq 3)\\
1 \quad (m = n = 1,\  2,\  2\ell \  (\ell \geq 2) )\\
0 \quad \mathrm{otherwise}
 \end{cases},
\end{align}
where $\Delta_{m,n}$ contains the result of multiplication. The upper-right triangle elements of $\widehat{\mathbf{T}}^{(1)} = {\mathbf{U}}^{(1)}({\mathbf{D}}^{(1)})^{-1}{\mathbf{U}}^{(1)\top}$ are
\begin{align}\label{eq:That}
&[\widehat{\mathbf{T}}^{(1)}]_{m,n}
=  [{\mathbf{U}}^{(1)}({\mathbf{D}}^{(1)})^{-1}{\mathbf{U}}^{(1)\top}]_{m,n} \nonumber \\
 =&\  
 \begin{cases}
 \frac{1}{2}(\frac{1}{\lambda_0} + \frac{1}{\lambda_{M-1}})  \quad (m= n = 1)   \\
 \frac{A_n}{2}(\frac{1}{\lambda_0} - \frac{1}{\lambda_{M-1}}) \quad (m=0,  n = 2\ell + 1 \geq 3 ) \\ 
\frac{A_m^2}{2}(\frac{1}{\lambda_0} + \frac{1}{\lambda_{M-1}}) + \Delta_{m,m}\quad (m= n = 2\ell+1 \geq 3)\\ 
 \frac{A_mA_n}{2}(\frac{1}{\lambda_0} + \frac{1}{\lambda_{M-1}}) + \Delta_{m,n} \\
  (m = 2\ell_m+1,\ n = 2\ell_n+1,\ \ell_m \neq \ell_n,\  m,\  n \geq 3)\\
1 \quad(m = n = 1,\  2,\  2\ell\  (\ell \geq 2))\\
0 \quad  \mathrm{otherwise}
 \end{cases} .
\end{align}
From Lemma \ref{lem:S}, it follows that
\begin{align}
\frac{1}{\lambda_0\lambda_{M-1}} =
 \frac{1}{1 -  (\sum_{\ell = 1}^{M/2-1} \widehat{s}_{2\ell+1}^2)} = \frac{1}{\widehat{s}_{1}^2} = \frac{(M\sin(\frac{\pi}{2M}))^2}{2}.
\end{align}
Since $M\sin(\frac{\pi}{2M}) = \frac{\pi}{2}\frac{2M}{\pi}\sin(\frac{1}{\frac{2M}{\pi}})$, $x\sin(\frac{1}{x})$ monotonically increases over $[\frac{4}{\pi}, \infty ) $ and $x\sin(\frac{1}{x}) \xrightarrow{x\rightarrow \infty} 1$, then
\begin{align}\label{eq:11716}
1.1716 \approx  \frac{(4\sin(\frac{\pi}{8}))^2}{2} <&  \frac{(M\sin(\frac{\pi}{2M}))^2}{2}
   \xrightarrow{M\rightarrow \infty} \frac{\pi^2}{8} \approx 1.2337, \nonumber\\
   \frac{1}{\lambda_0\lambda_{M-1}} =&\  1 + \epsilon \quad (\epsilon < \frac{1}{4}),
\end{align}
and $\frac{1}{\lambda_0} + \frac{1}{\lambda_{M-1}} = \frac{\lambda_0+\lambda_{M-1}}{\lambda_0\lambda_{M-1}} = (\lambda_0+\lambda_{M-1})(1 + \epsilon)$, $\frac{1}{\lambda_0} - \frac{1}{\lambda_{M-1}} = -\frac{\lambda_0-\lambda_{M-1}}{\lambda_0\lambda_{M-1}} = -(\lambda_0-\lambda_{M-1})(1 + \epsilon)$. Thus, by substituting \eqref{eq:Shat1}, \eqref{eq:Shat1UDU}, and \eqref{eq:11716} into \eqref{eq:That}, we can derive
\begin{align}\label{eq:That1}
&[\widehat{\mathbf{T}}^{(1)}]_{m,n} 
=  [{\mathbf{U}}^{(1)}({\mathbf{D}}^{(1)})^{-1}{\mathbf{U}}^{(1)\top}]_{m,n} \nonumber \\
 =&\  
 \begin{cases}
 1 + \epsilon > 1 \quad (m= n = 1)  \\
 - \widehat{s}_{n} (1 + \epsilon)\quad (m=0,  n = 2\ell + 1 \geq 3 )\\ 
1+ \frac{A_m^2}{2}\epsilon > 1\quad (m= n = 2\ell+1 \geq 3) \\
\frac{A_mA_n}{2}\epsilon < \frac{1}{8} \\
  (m = 2\ell_m+1,\ n = 2\ell_n+1,\ \ell_m \neq \ell_n,\  m,\  n \geq 3)\\
1 \quad(m = n = 1,\  2,\  2\ell\  (\ell \geq 2))\\
0 \quad  \mathrm{otherwise} 
 \end{cases} .
\end{align}
Here, let $\rho(M,n) = \widehat{s}_{n} = \frac{\sqrt{2}}{(M\sin(\frac{\pi}{2M}n))}$. Since we assume that the size $M$ for the RDADCF is $M \geq 4$,
\begin{align}
\frac{2}{5}>0.3827 \approx  \rho(4,3) > \rho(M,n)  > \rho(M+1,n) ,\nonumber\\ 
\frac{2}{5}>0.3827 \approx  \rho(4,3) > \rho(M,n)  > \rho(M,n+1),
\end{align}
thus $|-\widehat{s}_{n} (1 + \epsilon)| < \frac{2}{5}\frac{5}{4} = \frac{1}{2}$. Finally, we conclude that $|[\mathbf{T}^{(1)\top}\mathbf{T}^{(1)}]_{m,n}| =  |\langle \mathbf{t}_{m}^{(1)} , \mathbf{t}_{n}^{(1)} \rangle | \ll |\langle \mathbf{t}_{m}^{(1)} , \mathbf{t}_{m}^{(1)} \rangle |$, which implies the passband of each $\mathbf{t}_m^{(1)}$ is the same as $\mathbf{s}_m^{(1)}$.

For $k = 2$, $\mathbf{S}^{(2)}$ forms as in \eqref{eq:Shat1} with $\widehat{s}_3=0$. With the same discussion when $k=1$, it can be derived that lower bounds of the diagonal elements of $|[\widehat{\mathbf{T}}^{(2)}]_{m,m}|$ are 1 and upper bounds of the elements $|[\widehat{\mathbf{T}}^{(2)}]_{m,n}|$ ($m\neq n$) are $\frac{1}{2}$ or $\frac{1}{8}$, as in \eqref{eq:That1}. Thus, $|\langle \mathbf{t}_{m}^{(2)} , \mathbf{t}_{n}^{(2)} \rangle |\ll |\langle \mathbf{t}_{m}^{(2)} , \mathbf{t}_{m}^{(2)} \rangle |$. This is the end of proof for Lemma \ref{lem:pbT}.
\end{proof}
\section{Detailed Algorithm of Image Recovery Used in Experiments}\label{sec:AIR}
To solve \eqref{eq:opt}, the primal-dual splitting (PDS) algorithm \cite{Condat2013,Vu2013} is used. Consider the following convex optimization problem to find
\begin{align}
\label{eq:pds}
\mathbf{x}^{\star} \in \argmin_{\mathbf{x}\in \mathbb{R}^{N_1}} f(\mathbf{x}) + g(\mathbf{L}\mathbf{x}),
\end{align}
where $f\in \Gamma_0(\mathbb{R}^{N_1})$, $g \in \Gamma_0(\mathbb{R}^{N_2})$ ($\Gamma_0(\mathbb{R}^{N_2})$ is the set of proper lower semicontinuous convex functions \cite{Bauschke2011} on $\mathbb{R}^{N}$), and $\mathbf{L} \in \mathbb{R}^{{N_2}\times {N_1}}$. Then, the optimal solution $\mathbf{x}^{\star}$, can be obtained as
\begin{align}
\label{eq:pdsalg}
\begin{cases}
\mathbf{x}^{(n+1)}:= \mathrm{prox}_{\gamma_1 f} [\mathbf{x}^{(n)} - \gamma_1 \mathbf{L}^{\top}\mathbf{z}^{(n)}] \\
\mathbf{z}^{(n+1)}:= \mathrm{prox}_{\gamma_2 g^{\ast}} [\mathbf{z}^{(n)} + \gamma_2 \mathbf{L}(2\mathbf{x}^{(n+1)} - \mathbf{x}^{(n)})]
\end{cases},
\end{align}
where $\mathrm{prox}$ denotes the {proximal operator} \cite{Bauschke2011}, $g^{\ast}$ is the conjugate function \cite{Bauschke2011} of $g$.  In the experiments, the parameters $\gamma_1$ and $\gamma_2$ in \eqref{eq:pdsalg}, are chosen as 0.01 and $\frac{1}{12\gamma_1}$. 
For Problem 2 ($\rho = 1$ in \eqref{eq:opt}), the functions $f$ and $g$, and the matrix $\mathbf{L}$ in \eqref{eq:pds}, are set as
\begin{align} 
\label{eq:setting}
&f(\mathbf{x}) = \iota_{C_{[0, 1]}}(\mathbf{x}), \nonumber\\  
&g([ \mathbf{z}_1^{\top}\ \mathbf{z}_2^{\top}\ \mathbf{z}_3^{\top}]^{\top}) = \|\mathbf{z}_1\|_{1} + \|\mathbf{z}_2\|_{1,2} + \iota_{\{\mathbf{y}\}}(\mathbf{z}_3),
\nonumber\\&\mathbf{z}_1 = \mathbf{F}\mathbf{P}_{\mathrm{v2bv}}\mathbf{x}, \ \mathbf{z}_2 = \widetilde{\mathbf{W}}_{\mathrm{b}}\mathbf{D}_{\mathrm{hv}}\mathbf{x},\  \mathbf{z}_3 = \widetilde{\mathbf{\Phi}}{\mathbf{x}},\nonumber\\ &\mathbf{L}=  
\begin{bmatrix}
(\mathbf{F}\mathbf{P}_{\mathrm{v2bv}})^\top & (\widetilde{\mathbf{W}}_{\mathrm{b}}\mathbf{D}_{\mathrm{hv}})^\top  & \widetilde{\mathbf{\Phi}}^\top 
\end{bmatrix}^\top.
\end{align}
The resulting solver for \eqref{eq:opt} is described in Algorithm \ref{alg2}\footnote{For $\mathbf{x} \in \mathbb{R}^N$, $[\mathrm{prox}_{\gamma \|\cdot \|_{1}}(\mathbf{x})]_i  = \mathrm{sign}(x_i) \max \{|x_i|-\gamma,0\}$ (soft-thresholding), $\mathrm{prox}_{\gamma \|\cdot \|_{1,2}}(\mathbf{x})$ is the group soft-thresholding \cite{Bach2012}, $\mathrm{prox}_{\iota_{C_{[0, 1]}}}(\mathbf{x})$ is the clipping operation to $[0,1]$, and $\mathrm{prox}_{\iota_{\{\mathbf{v}\}}}(\mathbf{x}) = \mathbf{y} $, where $\mathbf{y} \in \mathbb{R}^N$ is an observation. }. The stopping criterion is $\|\mathbf{x}^{(n+1)}-\mathbf{x}^{(n)}\|_2\leq 0.01$. The algorithm fof Problem 1 ($\rho = 0$ in \eqref{eq:opt}) can be designed by removing the terms and steps (Step 5, 8, and 10) relating to $\mathbf{z}_2$, $\mathbf{t}_2$, ${\mathbf{t}}_2^{(n)}$, and $\hat{\mathbf{t}}_2^{(n)}$ from Algorithm \ref{alg2}.
\begin{algorithm}[t]
    \caption{Solver for \eqref{eq:opt}}
    \label{alg2}
    \begin{algorithmic}[1]
        {\footnotesize
            \STATE set $n=0$ and choose $\mathbf{x}^{(0)}$, $\mathbf{z}_1^{(0)}$, $\mathbf{z}_2^{(0)}$, $\gamma_1$, $\gamma_2$.
            \WHILE{stop criterion is not satisfied}
            \STATE 
            $
            \mathbf{x}^{(n+1)}=\mathrm{prox}_{\gamma_1\iota_{C_{[0, 1]}}}(\mathbf{x}^{(n)}-\gamma_1((\mathbf{F}\mathbf{P}_{\mathrm{v2bv}})^\top\mathbf{z}_1^{(n)}+(\widetilde{\mathbf{W}}_{\mathrm{b}}\mathbf{D}_{\mathrm{hv}})^\top\mathbf{z}_2^{(n)}+\widetilde{\mathbf{\Phi}}^{\top}\mathbf{z}_3^{(n)}))
            $
            \STATE $\mathbf{t}_1^{(n)}=\mathbf{z}_1^{(n)}+\gamma_2\mathbf{F}\mathbf{P}_{\mathrm{v2bv}}(2\mathbf{x}^{(n+1)}-\mathbf{x}^{(n)})$.
            \STATE $\mathbf{t}_2^{(n)}=\mathbf{z}_2^{(n)}+\gamma_2\widetilde{\mathbf{W}}_{\mathrm{b}}\mathbf{D}_{\mathrm{hv}}(2\mathbf{x}^{(n+1)}-\mathbf{x}^{(n)})$.
            \STATE $\mathbf{t}_3^{(n)}=\mathbf{z}_3^{(n)}+\gamma_2\widetilde{\mathbf{\Phi}}(2\mathbf{x}^{(n+1)}-\mathbf{x}^{(n)})$.
            \STATE $\hat{\mathbf{t}}_1^{(n)}=\mathrm{prox}_{\frac{1}{\gamma_2} \|\cdot \|_{1}}\left(\frac{1}{\gamma_2}\mathbf{t}_1^{(n)}\right)$.
            \STATE $\hat{\mathbf{t}}_2^{(n)}=\mathrm{prox}_{\frac{1}{\gamma_2} \|\cdot \|_{1,2}}\left(\frac{1}{\gamma_2}\mathbf{t}_2^{(n)}\right)$.
            \STATE $\hat{\mathbf{t}}_3^{(n)}=\mathrm{prox}_{\frac{1}{\gamma_2}\iota_{\{\mathbf{y}\}}}\left(\frac{1}{\gamma_2}\mathbf{t}_3^{(n)}\right)$.

            \STATE $\mathbf{z}_k^{(n+1)}=\mathbf{t}_k^{(n)}-\gamma_2\hat{\mathbf{t}}_k^{(n)}$ ($k=1,2,3$).
            \STATE $n=n+1$.
            \ENDWHILE
            \STATE Output $\mathbf{u}^{(n)}$.}
    \end{algorithmic}
\end{algorithm}
\bibliographystyle{IEEEtran}
\bibliography{refs1}

\begin{thebibliography}{10}
\providecommand{\url}[1]{#1}
\csname url@samestyle\endcsname
\providecommand{\newblock}{\relax}
\providecommand{\bibinfo}[2]{#2}
\providecommand{\BIBentrySTDinterwordspacing}{\spaceskip=0pt\relax}
\providecommand{\BIBentryALTinterwordstretchfactor}{4}
\providecommand{\BIBentryALTinterwordspacing}{\spaceskip=\fontdimen2\font plus
\BIBentryALTinterwordstretchfactor\fontdimen3\font minus
  \fontdimen4\font\relax}
\providecommand{\BIBforeignlanguage}[2]{{%
\expandafter\ifx\csname l@#1\endcsname\relax
\typeout{** WARNING: IEEEtran.bst: No hyphenation pattern has been}%
\typeout{** loaded for the language `#1'. Using the pattern for}%
\typeout{** the default language instead.}%
\else
\language=\csname l@#1\endcsname
\fi
#2}}
\providecommand{\BIBdecl}{\relax}
\BIBdecl

\bibitem{Elad2010}
M.~Elad, \emph{Sparse and Redundant Representations: From Theory to
  Applications in Signal and Image Processing}, 1st~ed.\hskip 1em plus 0.5em
  minus 0.4em\relax Springer Publishing Company, Incorporated, 2010.

\bibitem{Starck2010}
J.-L. Starck, F.~Murtagh, and J.~Fadili, \emph{Sparse Image and Signal
  Processing: Wavelets, Curvelets, Morphological Diversity}.\hskip 1em plus
  0.5em minus 0.4em\relax New York, NY, USA: Cambridge University Press, 2010.

\bibitem{Combettes2008}
P.~L. Combettes and J.-C. Pesquet, ``A proximal decomposition method for
  solving convex variational inverse problems,'' \emph{Inverse Problems},
  vol.~24, no.~6, p. 065014, Nov. 2008.

\bibitem{Afonso2011}
{M. V. Afonso, J. M. B.-Dias,} and M.~A.~T. Figueiredo, ``An augmented
  {Lagrangian} approach to the constrained optimization formulation of imaging
  inverse problems,'' \emph{IEEE Trans. Image Process.}, vol.~20, no.~3, pp.
  681--695, Mar. 2011.

\bibitem{Starck2002}
J.~L. Starck, E.~J. Cand\`es, and D.~L. Donoho, ``The curvelet transform for
  image denoising,'' \emph{IEEE Trans. Image Process.}, vol.~11, no.~6, pp.
  670--684, Jun. 2002.

\bibitem{Do2005}
M.~N. Do and M.~Vetterli, ``The contourlet transform: an efficient directional
  multiresolution image representation,'' \emph{IEEE Trans. Image Process.},
  vol.~14, no.~12, pp. 2091--2106, Dec. 2005.

\bibitem{Easley2008}
G.~Easley, D.~Labate, and W.-Q. Lim, ``Sparse directional image representations
  using the discrete shearlet transform,'' \emph{Applied and Computational
  Harmonic Analysis}, vol.~25, no.~1, pp. 25 -- 46, 2008.

\bibitem{Tanaka2009}
Y.~Tanaka, M.~Ikehara, and T.~Q. Nguyen, ``Multiresolution image representation
  using combined 2-{D} and 1-{D} directional filter banks,'' \emph{IEEE Trans.
  Image Process.}, vol.~18, no.~2, pp. 269--280, Feb. 2009.

\bibitem{Lim2013}
W.~Q. Lim, ``Nonseparable shearlet transform,'' \emph{IEEE Trans. Image
  Process.}, vol.~22, no.~5, pp. 2056--2065, May 2013.

\bibitem{Zhiyu2014}
C.~Zhiyu and S.~Muramatsu, ``Poisson denoising with multiple directional
  {LOTs},'' in \emph{Proc. IEEE Int. Conf. Acoust., Speech and Signal Process.
  (ICASSP)}, May 2014.

\bibitem{Shi2014}
Y.~Shi, X.~Yang, and Y.~Guo, ``Translation invariant directional framelet
  transform combined with gabor filters for image denoising,'' \emph{IEEE
  Trans. Image Process.}, vol.~23, no.~1, pp. 44--55, Jan. 2014.

\bibitem{Selesnick2005}
I.~W. Selesnick, R.~G. Baraniuk, and N.~C. Kingsbury, ``The dual-tree complex
  wavelet transform,'' \emph{IEEE Signal Process. Mag.}, vol.~22, no.~6, pp.
  123--151, Nov. 2005.

\bibitem{Lili2013}
L.~Liang and H.~Liu, ``Dual-tree cosine-modulated filter bank with linear-phase
  individual filters: An alternative shift-invariant and directional-selective
  transform,'' \emph{IEEE Trans. Image Process.}, vol.~22, no.~12, pp.
  5168--5180, Dec. 2013.

\bibitem{KYOCHI2010}
S.~Kyochi and M.~Ikehara, ``A class of near shift-invariant and
  orientation-selective transform based on delay-less oversampled even-stacked
  cosine-modulated filter banks,'' \emph{IEICE Trans. Fundam.}, vol.~93, no.~4,
  pp. 724--733, Apr. 2010.

\bibitem{Kyochi2009}
S.~Kyochi, T.~Uto, and M.~Ikehara, ``Dual-tree complex wavelet transform
  arising from cosine-sine modulated filter banks,'' in \emph{Proc. IEEE Int.
  Symp. Circuits and Syst. (ISCAS)}, May 2009.

\bibitem{Ishibashi2019}
R.~{Ishibashi}, T.~{Suzuki}, S.~{Kyochi}, and H.~{Kudo}, ``Image boundary
  extension with mean value for cosine-sine modulated lapped/block
  transforms,'' \emph{IEEE Trans. Circuits and Syst. for Video Technol.},
  vol.~29, no.~1, pp. 1--11, Jan. 2019.

\bibitem{Mallat2008}
S.~Mallat, \emph{A Wavelet Tour of Signal Processing, Third Edition: The Sparse
  Way}, 3rd~ed.\hskip 1em plus 0.5em minus 0.4em\relax Academic Press, 2008.

\bibitem{Lu2007}
Y.~M. Lu and M.~N. Do, ``Multidimensional directional filter banks and
  surfacelets,'' \emph{IEEE Trans. Image Process.}, vol.~16, no.~4, pp.
  918--931, Apr. 2007.

\bibitem{Yang2009}
J.~Yang, Y.~Wang, W.~Xu, and Q.~Dai, ``Image and video denoising using adaptive
  dual-tree discrete wavelet packets,'' \emph{IEEE Trans. Circuits and Syst.
  for Video Technol.}, vol.~19, no.~5, pp. 642--655, May 2009.

\bibitem{Nguyen2010}
T.~T. Nguyen and H.~Chauris, ``Uniform discrete curvelet transform,''
  \emph{IEEE Trans. Signal Process.}, vol.~58, no.~7, pp. 3618--3634, Jul.
  2010.

\bibitem{Held2010}
S.~Held, M.~Storath, P.~Massopust, and B.~Forster, ``Steerable wavelet frames
  based on the riesz transform,'' \emph{IEEE Trans. Image Process.}, vol.~19,
  no.~3, pp. 653--667, Mar. 2010.

\bibitem{Aharon2006}
M.~Aharon, M.~Elad, and A.~Bruckstein, ``{K-SVD}: An algorithm for designing
  overcomplete dictionaries for sparse representation,'' \emph{IEEE Trans.
  Signal Process.}, vol.~54, no.~11, pp. 4311--4322, Nov. 2006.

\bibitem{Rubinstein2013}
R.~Rubinstein, T.~Peleg, and M.~Elad, ``Analysis {K-SVD}: A dictionary-learning
  algorithm for the analysis sparse model,'' \emph{IEEE Trans. Signal
  Process.}, vol.~61, no.~3, pp. 661--677, Feb. 2013.

\bibitem{Tanaka2014}
Y.~Tanaka and A.~Sakiyama, ``{$M$}-channel oversampled graph filter banks,''
  \emph{IEEE Trans. Signal Process.}, vol.~62, no.~14, pp. 3578--3590, Jul.
  2014.

\bibitem{Danielyan2012}
A.~Danielyan, V.~Katkovnik, and K.~Egiazarian, ``{BM3D} frames and variational
  image deblurring,'' \emph{IEEE Trans. Image Process.}, vol.~21, no.~4, pp.
  1715--1728, Apr. 2012.

\bibitem{Fujita2015}
S.~Fujita, N.~Fukushima, M.~Kimura, and Y.~Ishibashi, ``Randomized redundant
  {DCT}: Efficient denoising by using random subsampling of dct patches,'' in
  \emph{Proc. SIGGRAPH Asia 2015 Technical Briefs}, 2015.

\bibitem{Wang2017}
Y.~{Wang}, C.~{Xu}, S.~{You}, C.~{Xu}, and D.~{Tao}, ``{DCT} regularized
  extreme visual recovery,'' \emph{IEEE Trans. Image Process.}, vol.~26, no.~7,
  pp. 3360--3371, July 2017.

\bibitem{Rao1990}
K.~R. Rao and P.~Yip, \emph{Discrete Cosine Transform: Algorithms, Advantages,
  Applications}.\hskip 1em plus 0.5em minus 0.4em\relax San Diego, CA, USA:
  Academic Press Professional, Inc., 1990.

\bibitem{Martucci1994}
S.~Martucci, ``Symmetric convolution and the discrete sine and cosine
  transforms,'' \emph{IEEE Trans. Signal Process.}, vol.~42, no.~5, pp.
  1038--1051, 1994.

\bibitem{Cooley1965}
J.~W. Cooley and J.~W. Tukey, ``{An algorithm for the machine calculation of
  complex Fourier series},'' \emph{Mathematics of Computation}, vol.~19, pp.
  297--301, Apr. 1965.

\bibitem{Bracewell1984}
R.~N. Bracewell, ``The fast {Hartley} transform,'' \emph{Proc. of the IEEE},
  vol.~72, no.~8, pp. 1010--1018, Aug. 1984.

\bibitem{Kyochi2014}
S.~Kyochi and Y.~Tanaka, ``General factorization of conjugate-symmetric
  hadamard transforms,'' \emph{IEEE Trans. Signal Process.}, vol.~62, no.~13,
  pp. 3379--3392, Jul. 2014.

\bibitem{Xu2007}
H.~Xu, J.~Xu, and F.~Wu, ``Lifting-based directional {DCT}-like transform for
  image coding,'' \emph{IEEE Trans. Circuits and Syst. for Video Technol.},
  vol.~17, no.~10, pp. 1325--1335, Oct. 2007.

\bibitem{Zeng2008}
B.~Zeng and J.~Fu, ``Directional discrete cosine transforms; a new framework
  for image coding,'' \emph{IEEE Trans. Circuits and Syst. for Video Technol.},
  vol.~18, no.~3, pp. 305--313, Mar. 2008.

\bibitem{Sullivan2012}
G.~J. Sullivan, J.~R. Ohm, W.~J. Han, and T.~Wiegand, ``Overview of the high
  efficiency video coding ({HEVC}) standard,'' \emph{IEEE Trans. Circuits and
  Syst. for Video Technol.}, vol.~22, no.~12, pp. 1649--1668, Dec. 2012.

\bibitem{Bross2021}
B.~Bross, Y.-K. Wang, Y.~Ye, S.~Liu, J.~Chen, G.~J. Sullivan, and J.-R. Ohm,
  ``Overview of the versatile video coding ({VVC}) standard and its
  applications,'' \emph{IEEE Trans. Circuits and Syst. for Video Technol.},
  vol.~31, no.~10, pp. 3736--3764, 2021.

\bibitem{Kyochi2016}
{T. Ichita, S. Kyochi, T. Suzuki} and Y.~Tanaka, ``Directional discrete cosine
  transforms arising from discrete cosine and sine transform for directional
  block-wise image representation,'' in \emph{Proc. IEEE Int. Conf. Acoust.,
  Speech, Signal Process. (ICASSP)}, 2017.

\bibitem{strang1996}
G.~Strang and T.~Nguyen, \emph{Wavelets and Filter Banks}.\hskip 1em plus 0.5em
  minus 0.4em\relax Wellesley-Cambridge Press, 1996.

\bibitem{COIFMAN2001}
R.~Coifman, F.~Geshwind, and Y.~Meyer, ``Noiselets,'' \emph{Applied and
  Computational Harmonic Analysis}, vol.~10, no.~1, pp. 27 -- 44, 2001.

\bibitem{Alter2004}
F.~{Alter}, S.~Y. {Durand}, and J.~{Froment}, ``Deblocking {DCT}-based
  compressed images with weighted total variation,'' in \emph{Proc. IEEE Int.
  Conf. Acoust., Speech, and Signal Process.}, 2004.

\bibitem{Condat2013}
L.~Condat, ``A primal-dual splitting method for convex optimization involving
  lipschitzian, proximable and linear composite terms,'' \emph{J. Optimization
  Theory and Applications}, Dec. 2013.

\bibitem{Vu2013}
B.~C. Vu, ``A splitting algorithm for dual monotone inclusions involving
  cocoercive operators,'' \emph{Adv. Comput. Math.}, vol.~38, no.~3, pp.
  667--681, Nov. 2011.

\bibitem{Bauschke2011}
H.~H. Bauschke and P.~L. Combettes, \emph{Convex Analysis and Monotone Operator
  Theory in Hilbert Spaces}.\hskip 1em plus 0.5em minus 0.4em\relax New York,
  NY, USA: Springer-Verlag, 2011.

\bibitem{Bach2012}
F.~Bach, R.~Jenatton, J.~Mairal, and G.~Obozinski, ``Optimization with
  sparsity-inducing penalties,'' \emph{Found. Trends Mach. Learn.}, vol.~4,
  no.~1, pp. 1--106, Jan. 2012.

\end{thebibliography}

\vfill
\end{document}